\newtheorem{lemma}{Lemma}
\newtheorem{proposition}{Proposition}
\newtheorem{definition}{Definition}
\newtheorem{remark}{Remark}
\newtheorem{example}{Example}
\def\ps@IEEEtitlepagestyle{%
  \def\@oddfoot{\mycopyrightnotice}%
  \def\@evenfoot{}%
}
\def\mycopyrightnotice{%
  \begin{minipage}{\textwidth}
  \centering \scriptsize
  Copyright~\copyright~2023 IEEE. Personal use of this material is permitted. Permission from IEEE must be obtained for all other uses, in any current or future media, including reprinting/republishing this material for advertising or promotional purposes, creating new collective works, for resale or redistribution to servers or lists, or reuse of any copyrighted component of this work in other works. 
  \end{minipage}
}
\begin{document}
%
\title{
ZETAR: Modeling and Computational Design of Strategic and Adaptive Compliance Policies
}
%
%

\author{Linan~Huang 
        and~Quanyan~Zhu,~\IEEEmembership{Member,~IEEE}
\thanks{This paper has been accepted for publication in IEEE Transactions on Computational Social Systems}
\thanks{
L. Huang is  with the Beijing National Research Center for Information Science and Technology (BNRist), Tsinghua University, Beijing 100084, China. E-mail:huanglinan@mail.tsinghua.edu.cn}
\thanks{
 Q. Zhu is with the Department of Electrical and Computer Engineering, New York University, Brooklyn, NY, 11201, USA. E-mail:qz494@nyu.edu
}
\thanks{
 This work was supported in part by the National
Science Foundation (NSF) under Grant ECCS-1847056, Grant BCS-2122060; in part by Shuimu Tsinghua Scholar Program 2022SM046; in part by International Postdoctoral Exchange Fellowship Program(Talent-Introduction Program) YJ20220128; and in part by National Natural Science Foundation of China No.62341109, No.62341106, Shanghai Municipal Science and Technology Major Project, and Tsinghua University Initiative Scientific Research Program.}
\thanks{Digital Object Identifier 10.1109/TCSS.2023.3323539}
}

\maketitle
\begin{abstract}
Compliance management plays an important role in mitigating insider threats. 
Incentive design is a proactive and non-invasive approach to achieving compliance by aligning an insider's incentive with the defender's security objective, \textcolor{black}{which motivates (rather than commands) an insider to act in the organization's interests.}  
Controlling insiders' incentives for population-level compliance is challenging because they are neither precisely known nor directly controllable. 
To this end, we develop ZETAR, a zero-trust audit and recommendation framework, to provide a quantitative approach to model insiders' incentives and design customized recommendation policies to improve their compliance. 
We formulate primal and dual convex programs to compute the optimal bespoke recommendation policies. 
We create the theoretical underpinning for understanding trust, compliance, and satisfaction, which leads to scoring mechanisms of how compliant and persuadable an insider is. 
After classifying insiders as malicious, self-interested, or amenable based on their incentive misalignment levels with the defender, we establish bespoke information disclosure principles for these insiders of different incentive categories. 
We identify the policy separability principle and the set convexity, which enable finite-step algorithms to efficiently learn the Completely Trustworthy (CT) policy set when insiders' incentives are unknown. 
Finally, we present a case study to corroborate the design. 
Our results show that ZETAR can well adapt to insiders with different risk and compliance attitudes and significantly improve compliance. 
Moreover, trustworthy recommendations can provably promote cyber hygiene and insiders' satisfaction. 
\end{abstract}
%
\begin{IEEEkeywords}
Insider threat, information design, incentive mechanism,  zero-trust, incentive learning, Bayesian persuasion. 
\end{IEEEkeywords}

%
\IEEEpeerreviewmaketitle

%
%
%
%


\section{Introduction}
Insider threats in cyberspace refer to vulnerabilities and risks posed to an organization due to the misbehavior of its trusted but not trustworthy insiders, such as insiders, maintenance personnel, and system administrators. In $2021$, insider threats have caused around $39\%$ of breaches \cite{verizon2021}, which have resulted in significant operational disruptions, data loss, and reputation damage. 

\textcolor{black}{
Many organizations design insider threat countermeasures based on the presumption that insider security violations are either malicious or unintentional \cite{xx1}. 
This dichotomous perspective, however, overlooks the sizable middle ground of intentional yet non-malicious violations, which often emanate from self-interested insiders who place personal convenience or advantage above organizational security. 
The task of managing these non-compliance behaviors entails a strategic shift from straightforward deterrence and awareness training to the subtler approach of aligning insiders' incentives with the security objectives of the organization. 
By properly designing insiders' incentives, the organization can elicit proper behaviors in a \textit{proactive} and \textit{non-invasive} way; i.e., the insiders voluntarily reduce non-compliance and misbehavior.}  

Existing studies \cite{MooreTheCritical2016,theis2019common,Mitigation} have recognized the critical role of incentives in mitigating insider threats 
\textcolor{black}{and emphasized the integrated usage of various incentive methods, including monetary rewards, recognition and penalties, peer comparisons, and cultural cultivation. 
These studies have laid the empirical and experimental foundations for identifying the key incentive factors. 
However, there lacks a unified model to formally define incentives and systematically quantify the impact of those factors on the insiders' incentives and their resulting behaviors. Our work aims to address the above challenges of modeling the abstracted concept of incentives, characterizing the impacts of incentive factors, and ultimately developing a quantitative and automated design framework to guide the changes in  insiders' incentives to enhance compliance and mitigate insider threats.}   

To this end, we develop a modeling and computational framework called ZETAR (ZEro-Trust Audit with strategic Recommendation) for the defender \textcolor{black}{whose goal is to} improve the insiders' compliance and organizational cyber hygiene. 
\begin{figure}[h]
\centering
\includegraphics[width=.5 \textwidth]{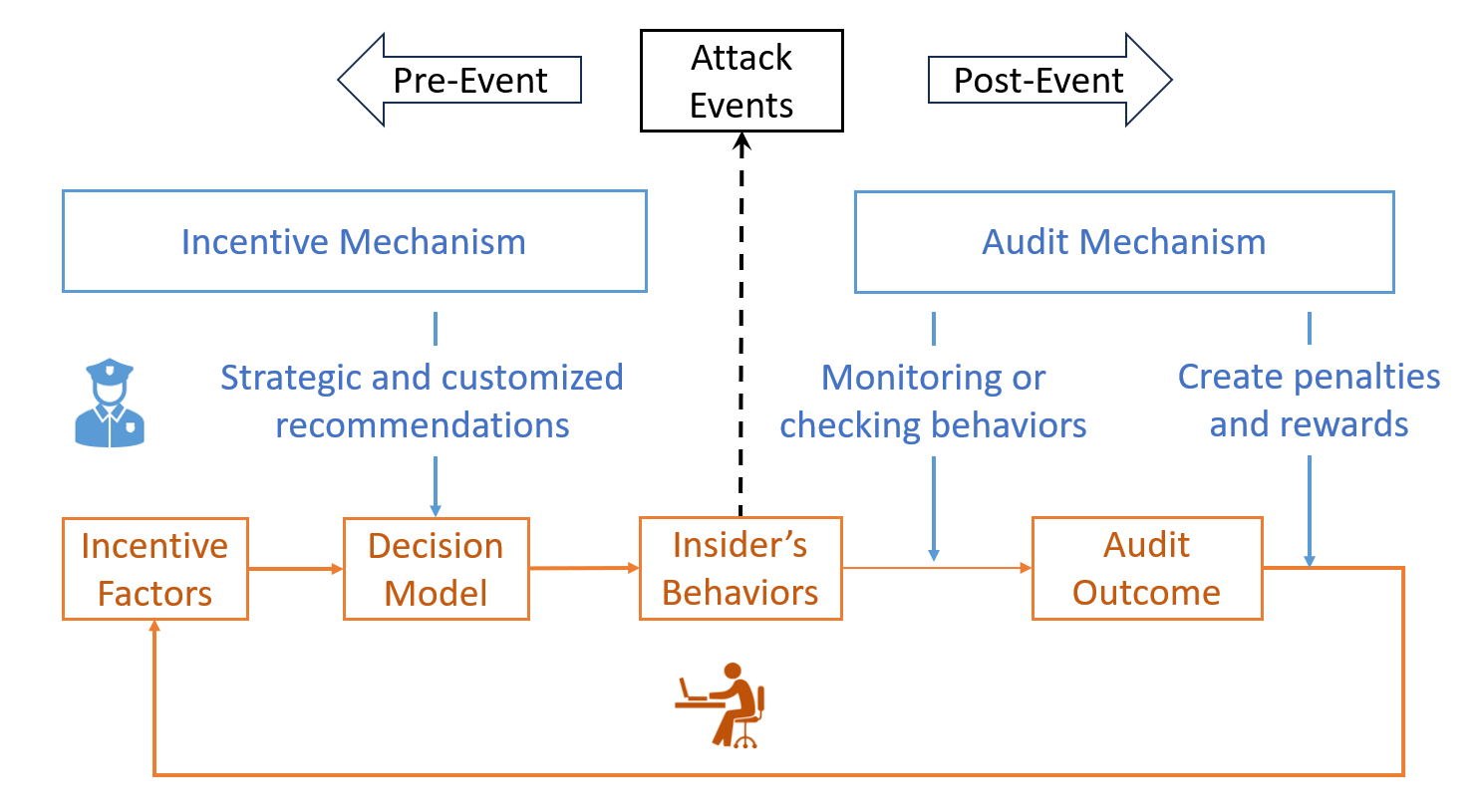}
\caption{ 
\textcolor{black}{ZETAR affects the incentives and behaviors of the insiders (depicted in orange) through a social-technical approach (highlighted in blue) that integrates the incentive and audit mechanisms.   
The technical approach of audits provides post-event remediation after the insiders have taken actions that could potentially lead to attack events. 
In contrast, the social approach of recommendation delivers a pre-event preemptive approach by actively shaping the insiders' motivations to enhance compliance.} 
}
\label{fig:ASrecommendation}
\end{figure}
As illustrated in Fig. \ref{fig:ASrecommendation}, ZETAR \textcolor{black}{consists of two functional modules for the defender:} the zero-trust audit and an incentive mechanism, \textcolor{black}{both targeting the insiders' decision-making loop.} 
The zero-trust audit mechanism assigns \textcolor{black}{no prior trust to the insiders and} inspects each insider's behaviors. 
\textcolor{black}{Based on the audit outcome, the audit mechanism changes the incentive factors (e.g., creating penalties or rewards), which indirectly affect the insider's behaviors through the insider's decision model.}  

\textcolor{black}{An audit enables the defender to detect non-compliant behaviors and implement  post-event remediation. 
Yet, given the vast landscape of non-malicious violations, pinpointing malicious infractions might strain the defender's time and budget. 
Therefore, ZETAR further introduces an incentive mechanism that recommends compliance policies for the insiders. 
The defender strategically designs these recommendations to be trustworthy and informative, based on the selected audit mechanism and each insider's incentive. The aim is to influence the decision-making process of the insider and promote compliant behaviors. 
In this way, the defender can preemptively curtail non-malicious breaches from self-motivated insiders, allocating limited defense resources more effectively towards malicious violations. 
As illustrated in Fig. \ref{fig:ScenarioDiag}, ZETAR tailors recommendations for each insider based on their individual incentives, ensuring the same audit mechanism can cater to diverse motivations and facilitate population-level compliance.} 


\begin{figure}[h]
\centering
\includegraphics[width=.45 \textwidth]{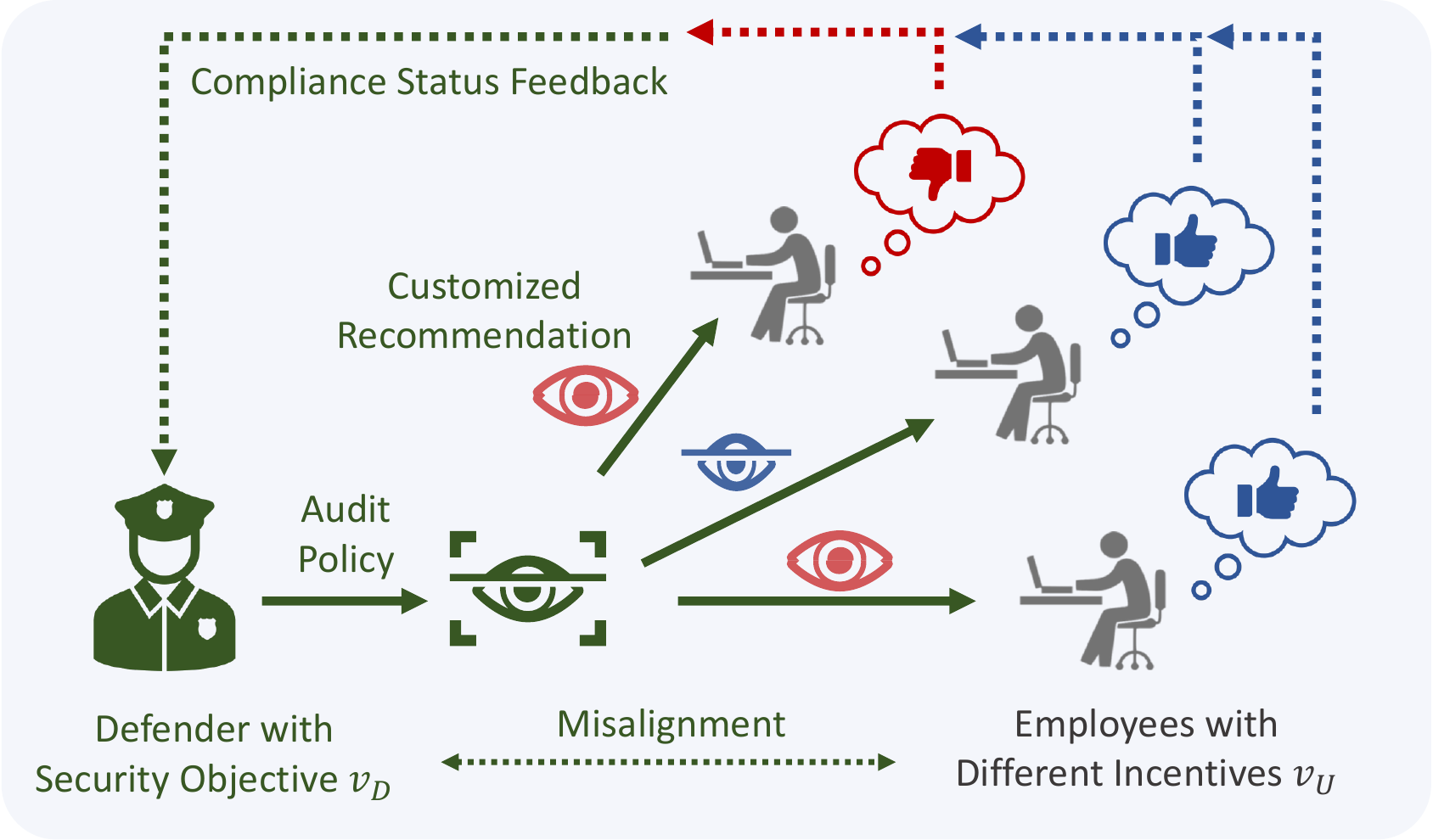}
\caption{ 
An illustration of the ZETAR feedback system: the defender of a corporate network audits insiders' compliance status and provides customized recommendations to insiders based on the learning of their incentives. ZETAR reduces the incentive misalignment between insiders and the defender. 
}
\label{fig:ScenarioDiag}
\end{figure}


Finally, since insiders' incentives are not directly \textit{observable},  
\textcolor{black}{we develop a feedback process for the defender to learn the insiders' incentives based on their compliance status, i.e., their behaviors under a selected recommendation mechanism,} as shown in Fig. \ref{fig:ScenarioDiag}. 
\textcolor{black}{Through a thorough theoretical characterization of the incentive design problem, we further identify the \textit{policy separability principle} and the \textit{convexity} of the feasible region, which enable us to develop efficient incentive learning algorithms that guarantee to converge in finite steps.} 


\textcolor{black}{
\subsection{Summary of Contribution}
We summarize the contribution of this paper as follows.} 
\begin{itemize}
    \item \textbf{Modeling}: 
    \textcolor{black}{By integrating the social solution of strategic recommendations with the technical solution of audits,} we develop a social-technical paradigm called ZETAR that can reduce the compliance gap resulting from the incentive misalignment between the insiders and the organization (represented by the defender). 
    \item \textbf{Concept and Metrics}: 
    \textcolor{black}{We formally define abstracted concepts (e.g., compliance, trustworthiness, and satisfaction) to characterize their interrelationships and furnish transferable metrics. 
    Additionally, we give a formal delineation of an insider's incentive, categorizing them into amenable, malicious, and self-interested groups.} 
    \item \textbf{Computation and Analysis}: We formulate the design of ZETAR into a mathematical programming problem and \textcolor{black}{simplify the computation by recognizing} that Completely Trustworthy (CT) recommendation policies are sufficient for the optimal compliance improvement. 
    \textcolor{black}{The analysis of the problem enriches both the theory of compliance and incentive design. It also offers security insights and guidelines, including principles of information disclosure to insiders with different incentives.} 
    \item \textbf{Algorithm Design}: 
    \textcolor{black}{Compared to classical feedback learning methods that are universal yet inefficient,} we develop efficient feedback algorithms with a binary search procedure \textcolor{black}{that adequately exploit the structure of the problem. The algorithm itself further contributes to the field of learning within the Bayesian persuasion framework \cite{kamenica2011bayesian}.} 
    \item \textbf{Application and Validation}: We present a case study to corroborate the effectiveness of ZETAR in improving compliance for insiders with different risk and compliance attitudes. We decompose an insider's incentives into extrinsic and intrinsic ones. 
    The results show that ZETAR can well adapt to different types of insiders and achieve a structural improvement of compliance for risk-averse insiders. 
    Under binary action sets, we identify the \textit{belief thresholds} of these insiders to determine whether to take compliant or non-compliant actions. 
\end{itemize}


\subsection{Notations and Organization of the Paper}
We use the pronoun `he' for an insider and `she' for the defender \textcolor{black}{throughout this paper}. 
Calligraphic letter $\mathcal{Y}$ defines a set, and $\Delta \mathcal{Y}$ represents the set of probability distributions over set $\mathcal{Y}$. 
Superscripts and subscripts represent elements and different categories, respectively. 
\textcolor{black}{The rest of the paper is organized as follows.} 
We present the system model and computational framework of ZETAR in Sections \ref{sec:model} and \ref{sec:Computational Framework of ZETAR}, respectively. 
Section \ref{sec:characterize two} characterizes trustworthiness and compliance. 
These characterizations lead to efficient learning algorithms in Section \ref{sec:Model-Agnostic} when insiders' incentives are unknown to the defender. 
Section \ref{sec:case study} presents a case study, and Section \ref{sec:conclusion} concludes the paper.

\section{Related Works}

\subsection{Insider Threat Mitigation} 
Insider threats \textcolor{black}{are usually} classified into unintentional or intentional ones. For intentional insider threats, the authors in \cite{MooreTheCritical2016,theis2019common,Mitigation} have recognized incentives as a leading factor and incentive designs as a class of promising mitigation strategies. 
\textcolor{black}{
In their recent study \cite{Huang_Zhu_2021}, the authors employ defensive deception as a strategy to differentiate between non-compliant insiders and attacks, thereby mitigating insider threats. 
}

For unintentional insider threats, the authors in \cite{team2013unintentional,greitzer2014unintentional} have identified three contributing factors (i.e., organizational, human, and demographic factors) and a set of proactive mitigation strategies (e.g., awareness training, relieving workload pressure, and security tools to help overcome user errors). 
\textcolor{black}{
Many unintentional insider threat incidents result from various human cognitive vulnerabilities, such as limited rationality and attention \cite{Huang_Zhu_2023}. For example, most employees have the security knowledge to recognize a phishing email yet still fall victim to it due to a lack of attention \cite{dhamija2006phishing}. 
Recent works \cite{Huang_Zhu_2022,Huang_Jia_Balcetis_Zhu_2022} have attempted to mitigate the above attentional vulnerability by integrating real-time human biometric data (e.g., eye-tracking data and EEG) with AI and learning technologies.} 

\subsection{\textcolor{black}{Security Policy Compliance}}
\textcolor{black}{The existing research on security policy compliance has largely fallen into two distinct categories. 
The first category exploits technical methods, including deep learning \cite{yuan2021deep}, graph-based approaches \cite{eberle2010insider}, and game theory \cite{casey2015compliance}, to detect and manage policy violations. 
The second category focuses on human and social aspects to identify the critical factors for human compliance decisions (see, e.g., \cite{cram2019seeing}). 
While both categories lay solid foundations for security policy compliance, they fall short in offering} a holistic design of technical (e.g., audit and access control) and social (e.g., security policies and positive organizational culture) solutions \cite{hunker2011insiders,greitzer2010combining,saxena2020impact}. 
ZETAR provides a unique, quantitative, and automated framework to provide strategic and customized recommendation policies to elicit compliant behaviors from insiders. 

\subsection{Incentive Mechanisms for Cyber-Physical Security}


There is a rich literature on designing incentive mechanisms to enhance Cyber-Physical System (CPS) security \cite{7265043,zhu2012guidex}. 
Informational control is relatively less explored \textcolor{black}{compared to the classical design of payment and allocation rules \cite{myerson1989mechanism}.} It provides an affordable, scalable, \textcolor{black}{flexible} and complementary way to change agents' beliefs for compliance. 
\textcolor{black}{In the content of this paper, it is convenient to dynamically tailor the recommendation mechanism, operating at the information level, to insiders with different incentives. However, it is much more costly to customize or change the audit mechanism, which needs actual implementation.}

\textcolor{black}{Among the recent works that focus on the strategical design and control of information} \cite{HORAK2019101579,huang2020dynamic,huang2021dynamic}, the defender often keeps their strategies \textit{covert} to influence the agents’ reasoning.  In contrast, ZETAR employs a \textit{transparent} and \textit{overt} recommendation strategy to align with insiders’ incentives, \textcolor{black}{which can potentially foster a culture of trust between insiders and the organization.}  

Uncertainties in incentive designs are challenging to deal with. Previous works have taken several approaches to address this issue, including robust methods \cite{xu2021biobjective}, Bayesian methods \cite{zou2015incentive}, and
learning methods \cite{zhan2020incentive}. In this work, ZETAR leverages the feedback of insiders' compliance status and the structures of the solution to develop efficient incentive learning algorithms that are provably convergent in finite steps.

\section{System Model of ZETAR}
\label{sec:model}

\textcolor{black}{As illustrated in Fig. \ref{fig:ScenarioDiag},} ZETAR provides customized recommendation designs for insiders with different incentives, \textcolor{black}{based on the same audit policy.}   
Each design involves two players, the defender $D$ and an insider $U$. 
The defender can assess the organization's security posture \textcolor{black}{(explained in Section \ref{sec:SP})} and audit insiders' behaviors either by himself or through a third-party service provider \textcolor{black}{(detailed in Section \ref{sec:audit and action})}. 
\textcolor{black}{The defender receives audit outcomes detailing each insider's compliance status and enhances compliance through effective incentive management and strategic recommendations.} 

\subsection{An Organization's Security Posture}
\label{sec:SP}
Security Posture (SP) reflects an enterprise's overall cybersecurity strength and capacities to deter, detect, and respond to the dynamic threat landscape \cite{dukes2015committee}. 
Based on different scoring and categorization methodologies \cite{al2020gosafe,bahuguna2020country}, SP can be classified into finite categories (e.g., high-risk SP and low-risk SP). 
In this work, we consider a finite number of $J$ SP categories that compose the set $\mathcal{Y}:=\{y^j\}_{j\in \mathcal{J}}$, where $\mathcal{J}:=\{1,\cdots,J\}$. 

The current SP can be assessed based on penetration tests, honeypots, and alert analysis \cite{zhan2020nsaps}. 
Since an organization's SP changes probabilistically based on the dynamic behaviors of attackers, users, and defenders, we let $b_Y(y^j)\in [0,1]$ denote the probability of the organization to be in the state of SP  $y^j\in \mathcal{Y}, \forall j\in \mathcal{J}$. 
With a slight abuse of notation, we define $b_Y\in \Delta \mathcal{Y}$ as the probability distribution over $\mathcal{Y}$. 

\subsection{Zero-Trust Audit Policy}
\label{sec:audit and action}

\textcolor{black}{The defender of an organization follows prescribed security rules to improve the organization's cyber hygiene.} 
These rules can be set and audited by regulatory agencies, cyber insurance providers, or the organization itself. 
%
%
\textcolor{black}{In accordance with the zero-trust security principle (e.g., see \cite{rose2020zero}), every insider within the organization is subject to audit and not inherently trusted.}

Consider a finite set  of $I$ Audit Schemes (ASs), denoted by $\mathcal{X}$. 
Each AS contains the entire audit procedure. For example, for a given AS $x\in \mathcal{X}$, the audit involves the steps of (1) monitoring and checking the insider's behaviors, 
(2) assigning a compliance score to the insider,
and (3) informing (the defender) of the compliance score and action. A different AS $x'\in \mathcal{X}, x'\neq x$, can vary in the monitoring or scoring scheme. 

\textcolor{black}{The ASs are prescribed based on the SP of the organization.} Let $\psi\in \Psi:\mathcal{Y}\mapsto \Delta \mathcal{X}$ denote the audit policy, which probabilistically determines an AS $x\in \mathcal{X}$, where $|\mathcal{X}|=I$. The probability of choosing $x\in \mathcal{X}$ given the SP $y\in \mathcal{Y}$ is thus given by $\psi(x|y)\in [0,1]$. The outcome of the audit scheme is used by the defender to create penalties or rewards for the insiders to shape their incentives and elicit compliant behaviors. Hence, the incentives of the insiders and the security objective of the defender are naturally dependent on the prescribed audit scheme. They will be further elaborated on in Section \ref{sec:utilities}.

\textcolor{black}{We characterize the system model by system-level concepts, including SP, audit policy, and AS, that are designed to be versatile, allowing for further specification to cater to diverse scenarios of insider threat mitigation. 
We provide the following example to offer intuitive insights into the above mathematical formulation and demonstrate the practical applicability of our system model. 
The example typifies the system model's use in the stochastic audit of essential security rules.} 

\begin{example}[\textbf{Stochastic \textcolor{black}{Audits} of Critical Security Rules}]
\label{example}
Consider an organization that needs to comply with a finite set of $H$ critical security rules, denoted by $\mathcal{H}:=\{1,\cdots,H\}$, set by a U.S. regulatory agency. The rules entail proper behaviors for remote access, user accounts, and backups \cite{sarkar2010assessing}. 
The compliance of an insider is monitored by checking each rule. Its outcome, denoted by $o^h$, also known as the compliance status concerning rule $h\in \mathcal{H}$, is either full, partial, or no compliance, denoted by $\iota_{f}$, $\iota_{p}$, and $\iota_{n}$, respectively.
By lumping the outcomes into a vector, we let  $a=(o^1,\cdots,o^H) \in \mathcal{A}:=\prod_{h\in \mathcal{H}} \mathcal{O}^h$, where $\mathcal{O}^h=\{\iota_{f},\iota_{p},\iota_{n}\}$, be the consolidated compliance status of an insider. An insider can choose his consolidated compliance status $a\in \mathcal{A}$ based on his incentives.
Let $\mathcal{X}=\{x^1,\cdots,x^{H+1}\}$ be the set of $I=H+1$ ASs. 
Each AS follows the same procedure of checking the compliance of the $H$ rules to report an insider's compliance status but differs in assessing compliance scores. 
AS $x^h\in \mathcal{X}, h=1, \cdots, H$, yields a compliance score $r^h\in \mathbb{R}$ solely based on the outcome $o^h\in \mathcal{O}^h$, i.e., $r^h=g^h(o^h)$, where $g^h:\mathcal{O}^h\rightarrow \mathbb{R}$ is the scoring function associated with AS $x^h$.
AS $x^{H+1}\in \mathcal{X}$ uses the outcomes associated with all the rules for the assessment, i.e., $r^{H+1}=g^{H+1}(a)$, where $g^{H+1}:\mathcal{A} \rightarrow \mathbb{R}$ is the scoring function associated with AS $x^{H+1}$.
It is clear that $x^{H+1}$ is the most stringent AS among all. The score is used as the criterion to penalize insiders and thus affects their incentives that will be formally defined in Section \ref{sec:utilities}.  
\end{example}

In Example \ref{example}, the audit policy $\psi\in \Psi$ is chosen based on a predetermined level of tolerance. 
A proper level of tolerance trades off between the organization's security and the compliance cost resulting from the overhead and the lack of flexibility 
\cite{MooreTheCritical2016}. 
An appropriate choice of tolerance depends on the SP; e.g., an audit policy can prescribe the stringent audit $x^{H+1} \in \mathcal{X}$ at a higher rate under high-risk SP than low-risk SP. 
We assume that the audit policy $\psi$ set by the organization or regulatory agencies remains the same for a sufficiently long time, making the policy more implementable and agreeable to insiders over the entire corporate network \cite{theis2019common}. 

Let $\mathcal{A}$ denote the set (with cardinality $K$) of an insider's actions. In Example \ref{example}, an action $a\in \mathcal{A}$ is referred to as the rule compliance profile, i.e., $a=(o^1, \cdots, o^H)$, which is a result of the insiders' behaviors, including keystrokes, full application contents (e.g., email, chat, data import, and data export), and screen captures \cite{spooner2018navigating}. In the case where there is one rule, $i.e., H=1$, $\mathcal{A}$ is reduced to an action set that comprises three actions: full, partial, and no compliance. The insider's actions are monitored by the AS, and the defender is informed of the insider's compliance status to nudge compliant behaviors. 
\textcolor{black}{Note that ZETAR does not aim to design insiders' incentives to prevent all non-compliant behaviors, but focuses on the actions that can be audited with full fidelity and social acceptance. 
For instance, audits should not encompass sensitive behavioral data, such as the duration spent in restrooms. 
Consequently, $\mathcal{A}$ encompasses only those behaviors whose compliance status can be comprehensively audited without infringing on privacy norms or regulations.}

\subsection{Utilities of the Defender and Insiders}
\label{sec:utilities}
In the past five years, financially motivated insider threats have continued to be the most common motive of threat actors \cite{verizon2021}. 
We define utility functions $v_p:\mathcal{Y}\times \mathcal{X}\times\mathcal{A} \mapsto \mathbb{R}$ for $p\in \{U,D\}$ to capture an insider's incentive and the defender's security objective, respectively. 
The defender's utility $v_D(y,x,a)$ assesses the impact of an insider's action $a\in \mathcal{A}$ on network security under the SP $y\in \mathcal{Y}$ and AS $x\in \mathcal{X}$.   
Since the impact is assessed subjectively by the defender,  $v_D$ represents the defender's security objective. 
For example,  under life-critical scenarios with zero tolerance to non-compliance, the defender can assign $v_D(y,x,a^{ic})=-\infty, \forall y\in \mathcal{Y}, x\in\mathcal{X}$. 

An insider's utility $v_U(y,x,a)$ models his extrinsic and intrinsic incentives to take action $a\in \mathcal{A}$ under SP $y\in \mathcal{Y}$ and AS $x\in \mathcal{X}$. 
On the one hand, $v_U$ can incorporate monetary incentives (through reward and recognition) and disincentives (through penalty and punishment) from the defender. 
On the other hand, $v_U$ can represent an insider's proclivity for compliant behaviors.
\textcolor{black}{The utility function may also capture other factors, including different risk attitudes.} 
Readers can refer to Section \ref{sec:casestudy_vD} and \ref{sec:casestudy_vU} for an example of $v_D$ and $v_U$, respectively. 

\textcolor{black}{The utility functions aptly capture the essence of the incentive design problem in Section \ref{sec:Recommendation}. They also provide the proper level of abstraction for incentive modeling and the derivation of theoretical insights. 
The detailed form of the utility functions $v_U$ and $v_D$ is beyond the scope of this paper.}



\subsection{Strategic Recommendations for Customized Compliance}
\label{sec:Recommendation}

Following Section \ref{sec:audit and action}, the audit policy  $\psi$ remains unchanged once determined. 
Since it is challenging for a fixed audit policy to achieve optimal inspection outcomes for insiders with different compliance requirements and incentives (as will be further elaborated in Section \ref{sec:insider's Initial Compliance}), the defender designs a customized incentive mechanism, encompassing a recommendation policy $\pi\in \Pi$ that results in a recommendation  $s\in \mathcal{S}$, \textcolor{black}{as shown in the third stage of Fig. \ref{fig:overviewDiag}.
}  

 \begin{figure}[h]
\centering
\includegraphics[width=.5 \textwidth]{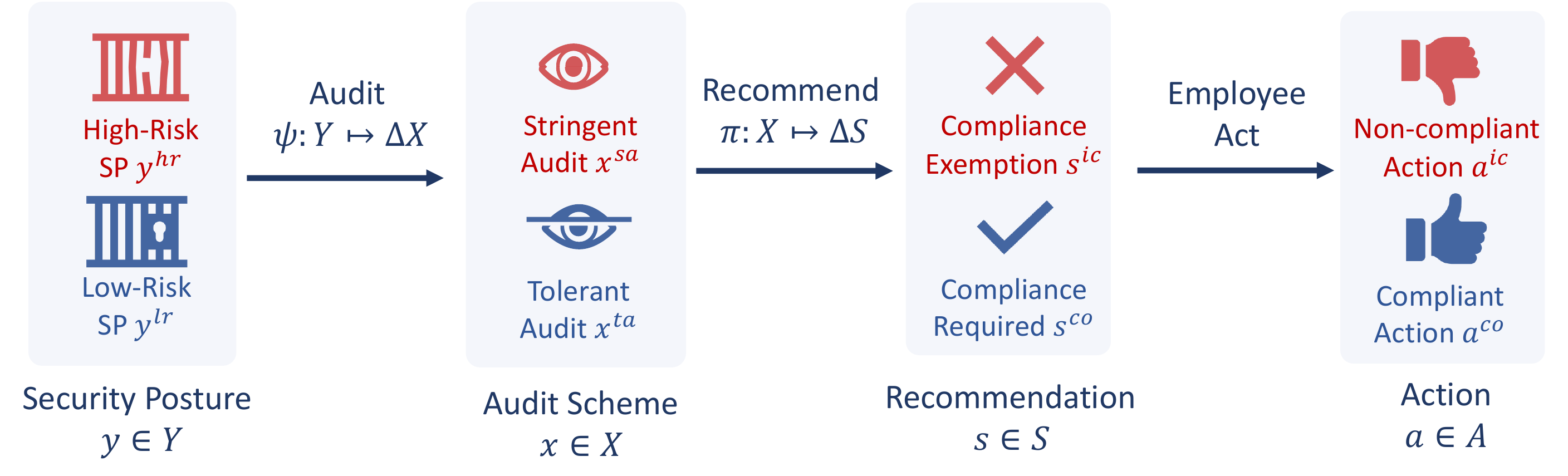}
\caption{ 
The timeline of ZETAR to enhance insiders' compliance in corporate networks. 
The defender is informed of the SP and the audit outcomes of all insiders' behaviors.
The defender designs a recommendation policy $\pi\in \Pi$ to improve compliance. 
}
\label{fig:overviewDiag}
\end{figure}

\subsubsection{Information Structure and Timeline}
\label{sec:Information Structure and Dynamics}
As stated in \cite{MooreTheCritical2016}, transparent criteria for organizational policies can create a culture of trust and consequently serve as a positive incentive to reduce non-compliance. 
Thus, we assume that the sets $\mathcal{Y}, \mathcal{X}, \mathcal{S}, \mathcal{A}$, the prior statistics $b_Y\in \Delta \mathcal{Y}$, the audit policy $\psi\in \Psi$, and the recommendation policy $\pi\in \Pi$ are common knowledge. 
Since the organization's SP changes randomly as shown in Section \ref{sec:SP}, the SP is assessed repeatedly on a weekly or monthly basis, and it is unknown to insiders. 

Fig. \ref{fig:overviewDiag} illustrates the timeline of ZETAR as follows. 
Given the current SP $y\in \mathcal{Y}$ and the audit policy $\psi\in \Psi$, the chosen AS $x\in \mathcal{X}$ is known to the defender yet remains unknown to the insiders. 
Before implementing the chosen AS $x$, the defender recommends an action based on $x$ and the recommendation policy $\pi\in \Pi$. 
Then, the insider takes an action $a\in \mathcal{A}$ that is not necessarily the recommended one. 
Finally, the defender implements the chosen AS and penalizes insiders based on the audit outcome. 
Insiders observe the chosen AS after it is implemented. 
After the zero-trust audit, the insiders' actions become known to the defender. 

\subsubsection{Insider's Initial Compliance} 
\label{sec:insider's Initial Compliance}
Without the recommendation mechanism, 
an insider takes an action $a_0\in \mathcal{A}$ to maximize his expected utility concerning the prior statistics $b_Y\in \Delta \mathcal{Y}$ and $\psi\in \Psi$, i.e., $a_0\in \text{arg}\max_{a\in \mathcal{A}} \sum_{y\in \mathcal{Y}} b_Y(y) \allowbreak \sum_{x\in\mathcal{X}} \psi(x|y) {v}_U (y,x,a)$. 
Due to the misalignment between an insider's incentive $v_U$ and the defender's security objective $v_D$, the insider's initial compliance status represented by  $a_0\in \mathcal{A}$ may negatively affect corporate security. 
For example, a self-interested insider tends to break the security rules for convenience if the audit policy $\psi$ chooses a stringent audit (e.g., $x^{H+1}\in \mathcal{X}$ in Example \ref{example}) less frequently. 

\subsubsection{Recommendation Mechanism}
\label{sec:recommendation mechanism}
To align insiders' incentives with the defender's security objective, the defender can recommend an action to an insider. 
Thus, set $\mathcal{S}:=\{s^k\}_{k\in \mathcal{K}}$ has the same cardinality with $\mathcal{A}$ and represents the finite set of $K$ recommendations where $s^k\in \mathcal{S}$ recommends the insider to take action $a^k\in \mathcal{A}$. 
The defender recommends the action according to a stochastic recommendation policy $\pi\in \Pi: \mathcal{X}\mapsto \Delta \mathcal{S}$; i.e., given the chosen AS $x\in \mathcal{X}$, the defender chooses recommendation $s\in \mathcal{S}$ with probability $\pi(s|x)$. 
As will be shown in Section \ref{sec:insider's Belief Update and Best-Response Action}, by strategically choosing the recommendation policy, the defender can manipulate an insider's belief of the current SP and the chosen AS, thus affecting his perception of the expected utility and enhancing compliance. 
 
 \subsubsection{Insider's Belief Update and Best-Response Action}
 \label{sec:insider's Belief Update and Best-Response Action}
The received recommendation reveals the defender's knowledge of the SP and the chosen AS. An insider can form and update a belief of the unknowns by observing the recommendations.
 Denote $b_{Y,X}\in \mathcal{B}_{Y,X}\subseteq \Delta(\mathcal{X}\times\mathcal{Y})$ as the joint prior distribution of the current SP and the chosen AS, i.e., $b_{Y,X}(y,x):= b_Y(y)\psi(x|y), \forall x\in \mathcal{X},y\in \mathcal{Y}$.
 Analogously, we define 
 $b_X(x):=\sum_{y'\in \mathcal{Y}}b_{Y,X}(y',x)$ as the marginal prior probability of AS $x\in \mathcal{X}$, $b_{Y|X}(y|x):=b_{Y,X}(y,x)/b_X(x)$ as the conditional prior probability of SP $y\in \mathcal{Y}$ under AS $x\in \mathcal{X}$, and $b^{\pi}_S(s):=\sum_{x'\in \mathcal{X}} b_X(x' )\pi(s|x')$ as the probability of recommendation $s\in \mathcal{S}$ under $\pi\in\Pi$, where
$b_X\in \mathcal{B}_{X}\subseteq \Delta \mathcal{X}$, $b_{Y|X}\in \mathcal{B}_{Y|X}$, and $b_S^\pi \in \Delta \mathcal{S}$. 
  
 Following the requirement of transparent criteria in Section \ref{sec:Information Structure and Dynamics}, the recommendation policy $\pi\in \Pi$ is assumed to be common knowledge. 
 The assumption can be justified by the fact that an insider can learn the recommendation policy $\pi\in \Pi$ based on the repeated observations of the recommendation policy input (i.e., AS $x\in \mathcal{X}$) and the policy output (i.e., recommendation $s\in \mathcal{S}$) after they are implemented. 
Thus, for rational insiders who adopt Bayesian rules to update their beliefs, each recommendation $s\in \mathcal{S}$ under recommendation policy $\pi\in\Pi$ results in posterior belief $b_{Y,X}^{\pi}(y,x|s)\in \mathcal{B}_{Y,X}^{\pi} \subseteq  \Delta(\mathcal{X}\times\mathcal{Y})$, i.e., 
\begin{equation}
\label{eq:BayesUpdate}
b_{Y,X}^{\pi}(y,x|s)=\frac{b_{Y,X}(y,x)\pi(s|x)}{\sum_{x'\in \mathcal{X}} b_X(x' )\pi(s|x')}, \forall x\in \mathcal{X}, y\in \mathcal{Y}.  
\end{equation}
Then, we can obtain the insider's marginal posterior belief of AS $x\in \mathcal{X}$, his marginal posterior belief of SP $y\in \mathcal{Y}$, and the associated conditional posterior belief  under recommendation $s\in \mathcal{S}$ as 
$b_{X}^{\pi}(x|s):=\sum_{y\in \mathcal{Y}} b_{Y,X}^{\pi}(y,x|s)=\frac{b_{X}(x)\pi(s|x)}{b^{\pi}_S(s)} \in \mathcal{B}_X^{\pi} \subseteq \Delta \mathcal{X}$, 
$b_{Y}^{\pi}(y|s):=\sum_{x\in \mathcal{X}} b_{Y,X}^{\pi}(y,x|s) \in \Delta \mathcal{Y}^{\pi} \subseteq \Delta \mathcal{Y}$, 
and $b_{Y|X}^{\pi}(y|x,s):={b_{Y,X}^{\pi}(y,x|s)}/{b_{X}^{\pi}(x|s)}$, respectively. 
Since $b_{Y|X}^{\pi}(y|x,s)=b_{Y|X}(y|x), \forall s\in\mathcal{S}$, these recommendations under policy $\pi\in \Pi$ have no impact on the conditional probability $b^\pi_{Y|X}$. 
However, as it does not hold in general that $b_{Y,X}^{\pi}=b_{Y,X},b_X^{\pi}=b_X,b_Y^{\pi}=b_Y, \forall s\in \mathcal{S}$, the recommendation mechanism (i.e., $\pi\in \Pi$ and $s\in \mathcal{S}$) can change the insiders' marginal beliefs of the current SP and the implemented AS as well as their joint beliefs. 
We summarize the above observations in Lemma \ref{remark:Conditional Belief Invariance}. 
\begin{lemma}[\textbf{Invariance of Conditional Belief}]
\label{remark:Conditional Belief Invariance}
A recommendation policy $\pi\in \Pi$ has no impact on $b_{Y|X}^{\pi}$. 
\end{lemma}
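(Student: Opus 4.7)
The plan is to verify the claim by a direct Bayes-rule computation, and to complement it with a conceptual conditional-independence observation that explains why the result is essentially structural rather than arithmetic.

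First, I would start from the definition $b_{Y|X}^{\pi}(y|x,s) := b_{Y,X}^{\pi}(y,x|s) / b_X^{\pi}(x|s)$ and substitute the two expressions that have already been derived in the text: the joint posterior $b_{Y,X}^{\pi}(y,x|s) = b_{Y,X}(y,x)\pi(s|x) / b_S^{\pi}(s)$ given by \eqref{eq:BayesUpdate}, and the marginal posterior $b_X^{\pi}(x|s) = b_X(x)\pi(s|x) / b_S^{\pi}(s)$. Forming their ratio, the factor $\pi(s|x)$ and the normalizer $b_S^{\pi}(s)$ appear in both numerator and denominator and therefore cancel, leaving $b_{Y,X}(y,x)/b_X(x)$, which is precisely the prior conditional $b_{Y|X}(y|x)$. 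This yields the identity $b_{Y|X}^{\pi}(y|x,s) = b_{Y|X}(y|x)$ for every $s \in \mathcal{S}$ and every $(x,y)$ for which the conditionals are defined.

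The conceptual reason behind this cancellation, which I would state as a short remark, is that the recommendation policy $\pi(\cdot|x)$ is by construction a function of $x$ alone and does not condition on $y$. Along the generative chain $Y \to X \to S$, the random variable $S$ is conditionally independent of $Y$ given $X$, so observing $s$ cannot refine any belief about $y$ once $x$ is fixed. The Bayesian computation above is just the algebraic manifestation of this conditional independence.

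There is no substantial obstacle here; the only care needed is a well-definedness check. I would briefly note that the identity is stated on the support where the denominators $b_X(x)$ and $b_X^{\pi}(x|s) = b_X(x)\pi(s|x)/b_S^{\pi}(s)$ are strictly positive, i.e., for pairs $(x,s)$ with $b_X(x)\pi(s|x) > 0$; outside this support the conditional $b_{Y|X}^{\pi}(y|x,s)$ is vacuous, so the claim holds trivially.
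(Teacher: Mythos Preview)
Your proposal is correct and matches the paper's own argument: the paper derives $b_{Y|X}^{\pi}(y|x,s)=b_{Y|X}(y|x)$ in the text immediately preceding the lemma by exactly the same Bayes-rule cancellation you describe, and then simply records this observation as the lemma without a separate proof block. Your additional conditional-independence remark (the Markov chain $Y\to X\to S$) is a helpful gloss but not a different method.
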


With a recommendation policy $\pi\in \Pi$, the insider takes a best-response action denoted by $a^*_{\pi,s}\in \mathcal{A}$ to maximize his posterior utility under recommendation $s\in \mathcal{S}$, i.e., 
\begin{equation}
\label{eq:optimalAction}
a^*_{\pi,s} \in \text{arg}\max_{a\in \mathcal{A}} \mathbb{E}_{y,x\sim b_{Y,X}^{\pi}(\cdot|,s)} [{v}_U (y,x,a)]. 
\end{equation}
Letting $\bar{v}_p(x,a):=\sum_{y\in \mathcal{Y}} b_{Y|X}(y|x)  v_p(y,x,a), 
\allowbreak
\forall x\in \mathcal{X}$ for $p\in \{U,D\}$ be an insider's expected incentive and the defender's expected security objective, respectively, we obtain 
\begin{equation}
\label{eq:bareq}
    \mathbb{E}_{y,x\sim b_{Y,X}^{\pi}(\cdot|,s)} [{v}_p (y,x,a)]= \sum_{x\in \mathcal{X}} b_{X}^{\pi}(x|s)  \bar{v}_p (x,a), \forall s\in \mathcal{S}. 
\end{equation}

We refer to $a^*_{\pi,s}\in\mathcal{A}$ as an action induced by recommendation policy $\pi\in \Pi$ under recommendation $s\in \mathcal{S}$, which is in general different from the insider's initial compliance status $a_0\in\mathcal{A}$ in Section \ref{sec:insider's Initial Compliance}.  
For a given $\pi$, not all recommendations induce a compliant action. 
However, by strategically choosing the recommendation policy, the defender can 
improve compliance on average. 
We formally quantify the improvement of compliance and its average impact on the corporate security in Section \ref{sec:Defender's Optimal Recommendation Policy}.


\subsubsection{Trustworthiness of the Recommendation Scheme}
Following Section \ref{sec:insider's Belief Update and Best-Response Action}, the defender's recommendation $s\in\mathcal{S}$ from a recommendation policy $\pi\in \Pi$ may not be trusted by an insider; i.e., the recommended action is not a best-response action. 
We formalize the definitions of trustworthy recommendations and trustworthy recommendation policies in Definitions \ref{def:trustworthS} and \ref{def:trustworthyPi}, respectively. 

\begin{definition}[\textbf{Trustworthy Recommendations}]
\label{def:trustworthS}
A recommendation $s^k\in \mathcal{S}, k\in \mathcal{K}$, under a recommendation policy $\pi\in \Pi$ is trustworthy (resp. untrustworthy); i.e., the policy $\pi$ is trusted by an insider with an incentive $\bar{v}_U$, if the induced action follows (resp. does not follow) the recommended action  $a^k\in \mathcal{A}$, 
i.e., $a^k \in$ (resp. $\notin$) $\text{arg}\max_{a\in \mathcal{A}} \mathbb{E}_{x\sim b_{X}^{\pi}(\cdot|,s)} [ \bar{v}_U (x,a)]$. 
\end{definition}

\begin{remark}[\textbf{Compliance and Trustworthiness}]
\label{remark:compliance and trustworthy}
Following Definition \ref{def:trustworthS}, an insider complies with a recommendation  (i.e., takes the recommended action) only if it is trustworthy. 
\end{remark}

\begin{definition}[\textbf{Trustworthy Recommendation Policies}]
\label{def:trustworthyPi}
Recommendation policies under which recommendation $s^k\in \mathcal{S}, k\in \mathcal{K}$, is trustworthy (resp. untrustworthy) formulate the $k$-th Partially Trustworthy (PT) (resp. Partially Untrustworthy (PU)) policy set $\Pi_{pt}^k\subseteq \Pi$  (resp. $\Pi_{pu}^k\subseteq \Pi$). 
A recommendation policy $\pi\in \Pi$ is Completely Trustworthy (CT) (resp. Completely Untrustworthy (CU)) if all recommendations under $\pi$ are trustworthy (resp. untrustworthy). 
All CT (resp. CU) recommendation policies formulate the CT (resp. CU) policy set $\Pi_{ct}:=\cap_{k=1}^K \Pi_{pt}^k$ (resp. $\Pi_{cu}:=\cap_{k=1}^K \Pi_{pu}^k$). 
\end{definition}


Different recommendation policies reveal varied amounts information about the AS and the SP, which consequently affect the insider's compliance status. 
Two extreme cases are defined in Definition \ref{def:zeroFullinfo}. 
Let the optimal action of an insider $U$ or the defender $D$ at AS $x\in \mathcal{X}$ and SP $y\in \mathcal{Y}$ be given by $\tilde{a}_p^{max}(y,x)\in \text{arg}\max_{a\in \mathcal{A}}{v}_p(y,x,a)$. 
Analogously, let ${a}_p^{max}(x)\in \text{arg}\max_{a\in \mathcal{A}}\bar{v}_p(x,a)$ for all $x\in \mathcal{X}$ and $p\in \{U,D\}$. 
A zero-information recommendation policy, denoted by $\pi_z\in \Pi$, recommends the same actions as an insider's initial compliance status in Section \ref{sec:insider's Initial Compliance} regardless of the chosen AS. 
Hence $\pi_z$ does not change the insider's belief, i.e., $b_X^{\pi_z}(x|s)=b_X(x), \forall s\in \mathcal{S}, \forall x\in \mathcal{X}$, and does not bring new information to the insider. 
Meanwhile, a full-information recommendation policy denoted by $\pi_f\in \Pi$ recommends optimal action $a_U^{max}(x)$ under the chosen AS $x\in \mathcal{X}$. 
Remark \ref{remark:CT Feasibility} shows that it is feasible for the defender to implement CT recommendation policies regardless of ZETAR settings in Sections \ref{sec:SP} to \ref{sec:utilities}. 

 \begin{definition}[\textbf{Zero- and Full-Information Recommendation Policy}] 
\label{def:zeroFullinfo}
 A recommendation policy $\pi_z\in \Pi$ contains zero information if $\pi_z(s^k|x)=\mathbf{1}_{\{a^k=a_0\}}, \forall k\in \mathcal{K}, \forall x\in \mathcal{X}$.  A recommendation policy $\pi_f\in \Pi$ contains full information if $\pi_f(s^k|x)=\mathbf{1}_{\{a^k=a_U^{max}(x) \}}, \forall k\in \mathcal{K}, \forall x\in \mathcal{X}$. 
\end{definition}
\begin{remark}[\textbf{Feasibility}]
\label{remark:CT Feasibility}
Following Definition \ref{def:zeroFullinfo}, zero- and full-information recommendation policies are CT, i.e., $\pi_z,\pi_f\in \Pi_{ct}$. 
Thus, $\Pi_{ct}$ is nonempty regardless of ZETAR settings. 
\end{remark}

 \subsubsection{Defender's Optimal Recommendation Policy}
\label{sec:Defender's Optimal Recommendation Policy}

Following \eqref{eq:optimalAction} and \eqref{eq:bareq}, an insider's expected utility defined in \eqref{eq:insiderEPU} represents the insider's Acquired Satisfaction Level (ASaL) under recommendation policy $\pi\in \Pi$. 
\begin{equation}
\label{eq:insiderEPU}
    J_U(\pi,b_{X},\bar{v}_U):=\sum_{s\in \mathcal{S}} b^{\pi}_S(s)  \mathbb{E}_{y,x\sim b_{Y,X}^{\pi}(\cdot|,s)} [ {v}_U (y,x,a_{\pi,s}^*)]. 
\end{equation}
Since an insider's action induced by zero-information policy $\pi_z$ is his initial-compliance action $a_0\in \mathcal{A}$, $J_U(\pi,b_{X},\bar{v}_U)$ represents the insider's Innate Satisfaction Level (ISaL). 
To capture the average impact of an insider's compliance status on corporate security under different recommendations, we define the defender's Acquired Security Level (ASeL) under recommendation policy $\pi\in \Pi$ as
 \begin{equation}
\begin{split}
         \label{eq:def_obj}
    & \tilde{J}_D(\pi,b_{Y,X}, {v}_D,{v}_U) :=\mathbb{E}_{y,x\sim b_{Y,X}(\cdot)} \mathbb{E}_{s\sim \pi(\cdot | x)} [{v}_D(y,x,a_{\pi,s}^*)] \\
    & =\sum_{x\in \mathcal{X}} b_X(x) \sum_{s\in \mathcal{S}} \pi(s|x) \bar{v}_D(x,a^*_{\pi,s}):={J}_D(\pi,b_X, \bar{v}_D,\bar{v}_U). 
\end{split}
 \end{equation}
 
Since an insider's best-response action $a_{\pi_z,s}^*\in \mathcal{A}$ remains the same as $a_0\in \mathcal{A}$ in Section \ref{sec:insider's Initial Compliance} under all recommendations $s\in \mathcal{S}$, a zero-information recommendation policy $\pi_z\in \Pi_{ct}$ has no impact on the insider's compliance. 
Hence
$J_D(\pi_z,b_X,\bar{v}_D,\bar{v}_U)$ quantifies the impact of an insider's initial compliance status and represents the defender's Initial Security Level (ISeL). 
The difference in the defender's security level  ${J}_D^{acel}(\pi,b_{X}, \bar{v}_D,\bar{v}_U):=J_D(\pi,b_X,\bar{v}_D,\bar{v}_U)-J_D(\pi_z,b_X,\bar{v}_D,\bar{v}_U)$ measures the average impact of the insider's compliance status changes (under recommendation policy $\pi\in \Pi$) on the corporate security, and we refer to ${J}_D^{acel}$ as the Average Compliance Enhancement Level (ACEL) in Definition \ref{def:ACEL}.  

\begin{definition}[\textbf{Average Compliance Enhancement Level}] 
\label{def:ACEL}
For an insider with incentive $\bar{v}_U$ and the defender with security objective $\bar{v}_D$, we define $J_D^{acel}(\pi, b_X,\bar{v}_D,\bar{v}_U)\in \mathbb{R}$ as the Average Compliance Enhancement Level (ACEL) under the prior statistic $b_X\in \mathcal{B}_X$ defined in Section \ref{sec:insider's Belief Update and Best-Response Action} and recommendation policy $\pi\in \Pi$ defined in Section \ref{sec:recommendation mechanism}. 
\end{definition}
The defender's goal is to design the optimal recommendation policy $\pi^*\in \Pi$ that maximizes the ACEL, where $J_D^{acel,*}$ denotes the optimal ACEL, i.e., $J_D^{acel,*}(b_{X}, \bar{v}_D,\bar{v}_U):={J}_D^{acel}(\pi^*,b_{X}, \bar{v}_D,\bar{v}_U)=\max_{\pi\in \Pi} {J}_D^{acel}(\pi,b_{X}, \bar{v}_D,\bar{v}_U)\geq 0$. 
The optimal ACEL gauges the maximum improvement of an insider's compliance. 
\textcolor{black}{Thus, its value is proportional to the insider's persuadability under a recommendation scheme.
On the other hand, the ISeL quantifies the defender's expected utility in the presence of insider behaviors without any incentive mechanism. 
Thus, its value is proportional to the insider's initial compliance. 
These metrics are useful to develop scoring metrics to quantitatively categorize insiders, as shown in Remark \ref{remark:scoring}.} 
\begin{remark}[\textbf{Scoring Metrics}]
\label{remark:scoring}
The \textcolor{black}{values of} ISeL $J_D(\pi_z,b_X,\bar{v}_D,\bar{v}_U)$ and the optimal ACEL $J_D^{acel,*}(b_{X}, \bar{v}_D,\bar{v}_U)$ \textcolor{black}{measure} how compliant and persuadable, respectively, an insider with incentive $\bar{v}_U$ is under security objective $\bar{v}_D$. 
\end{remark}

\section{Computational Framework of ZETAR}  
\label{sec:Computational Framework of ZETAR}
In this section, we formulate the design of ZETAR into mathematical programming problems, where the defender has complete information of an insider's incentive $\bar{v}_U$. 

\subsection{Level of Recommendation Customization}
\label{sec:LoRC}
As illustrated in Section \ref{sec:audit and action} and \ref{sec:Recommendation} and also in Fig. \ref{fig:ScenarioDiag}, the defender determines a unified audit policy to inspect all insiders' behaviors yet designs customized recommendation policies. 
Since the difference in these recommendation policies can lead to the perceptions of unfairness and distrust \cite{theis2019common}, the defender needs to strike a balance between the optimal ACEL and the Level of Recommendation Customization (LoRC). 
We let $\eta\in \mathbb{R}^{+}$ be the defender's LoRC,  $\pi_{d}\in \Pi$ be a default recommendation policy, and the KL divergence $KL(\pi||\pi_{d}):=\sum_{k\in \mathcal{K},x\in \mathcal{X}}\pi(s^k|x) \log \frac{ \pi(s^k|x) }{ \pi_{d}(s^k|x) }$ be the measure of policy difference, respectively.  
If $\pi_{d}(s^k|x)=0$, then $\pi(s^k|x)=0$ by default, and $\pi(s^k|x) \log \frac{ \pi(s^k|x) }{ \pi_{d}(s^k|x) }=0$ as $\lim_{z\rightarrow 0^+} z \log z=0$. 

\subsection{Primal Mathematical Programming}
\label{sec:LP formulation}

Without loss of generality, the defender can narrow the policy search space to $\Pi_{ct}\subseteq \Pi$ to achieve the optimal ACEL \cite{kamenica2011bayesian}, i.e., $J_D^{acel,*}(b_{X}, \bar{v}_D,\bar{v}_U)=\max_{\pi\in \Pi_{ct}} J_D^{acel}(\pi, b_{X}, \bar{v}_D,\bar{v}_U)$. 
Under a CT recommendation policy, the insider complies to the recommendation and chooses $a^k\in \mathcal{A}$ when the recommendation is $s^k\in\mathcal{S}, \forall k\in \mathcal{K}$. Thus,  $\max_{\pi\in \Pi_{ct}} J_D^{acel}(\pi, b_{X}, \bar{v}_D,\bar{v}_U)=\max_{\pi\in \Pi_{ct}}\sum_{x\in \mathcal{X}} b_X(x) \sum_{k\in \mathcal{K}} \pi(s^k|x) \bar{v}_D(x,a^k)$. 
For a LoRC $\eta$, we formulate the following convex program denoted by $P_\eta$. 
\begin{equation*}
\begin{split}
& [P_\eta]: r_\eta = \max_{\pi\in \Pi } \quad   \sum_{x\in \mathcal{X}} b_X(x) \sum_{k\in \mathcal{K}} \pi(s^k|x) \bar{v}_D(x,a^k)   - \frac{ KL(\pi||\pi_{d})}{\eta}  \\ 
 &  (a).  \    \pi (s^k|x)\geq 0, \forall k\in \mathcal{K}, \forall x\in\mathcal{X},
 \\
 & (b).   
 \sum_{k\in \mathcal{K}  }\pi (s^k|x)=1, \forall x\in\mathcal{X},
 \\
& (c) .  
\sum_{x\in \mathcal{X}} b_{X}(x) \pi(s^k| x) [ \bar{v}_U(x,a^k) - \bar{v}_U(x,a^l)  ]  \geq 0,
\forall k,l \in \mathcal{K}. 
\end{split}
\end{equation*} 
Let $\pi^*_{\eta}\in \Pi_{ct}$ and $r_\eta$ be the maximizer and the optimal value of ${P}_{\eta}$, respectively, for all $\eta\in \mathbb{R}^{+}$.
Constraints (a), (b) explicitly describe the set $\Pi$, and constraint (c) limits the recommendation policy to be CT defined in Definition \ref{def:trustworthyPi}. 
All recommendation policies that satisfy constraints (a), (b), (c) compose the set $\Pi_{ct}\subseteq \Pi$. 
Due to the feasibility of CT policies in Remark \ref{remark:CT Feasibility} and the boundedness of $v_D$, the program $P_\eta$ is feasible and bounded for all $\eta\in \mathbb{R}^{+}$. 
When the defender aims to design CT recommendation policies closest to the default policy $\pi_d\in \Pi$ (i.e., $\eta\rightarrow 0^+$), then $\pi_0^*=\pi_{d}$ if and only if $\pi_{d}\in \Pi_{ct}$. 
As the LoRC $\eta$ increases, the defender focuses more on compliance enhancement, and the optimizer of $P_{\infty}$ coincides with $\pi^*$ that achieves the optimal ACEL $J_D^{acel,*}$, i.e., $\pi^*_{\infty}=\pi^*$.  
By specifying $a^l\in \mathcal{A}$ in constraint (c) of $P_\eta$ as the initial-compliance action $a_0\in \mathcal{A}$, we prove that CT policies never decrease an insider's satisfaction level in Proposition \ref{proposition:win-win}.  


\begin{proposition}[\textbf{Trustworthiness Promotes Satisfaction}]
\label{proposition:win-win}
An insider's ASaL $J_U(\pi,b_X,\bar{v}_U)$ is not lower than his ISaL $J_U(\pi_z,b_X,\bar{v}_U)$ for all $\pi\in\Pi_{ct}$ and $b_X\in \mathcal{B}_X$. 
\end{proposition}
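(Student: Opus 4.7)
The plan is to exploit the CT constraint (c) from $P_{\eta}$ by specializing it to $a^l = a_0$, exactly as hinted at in the paragraph immediately preceding the statement. The key algebraic manipulation is to rewrite both $J_U(\pi,b_X,\bar v_U)$ and $J_U(\pi_z,b_X,\bar v_U)$ under a common set of weights of the form $b_X(x)\pi(s^k|x)$, so that their difference collapses into a sum of exactly the quantities appearing in constraint (c).

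First I would use the Bayes identity $b_S^\pi(s^k)\,b_X^\pi(x|s^k) = b_X(x)\,\pi(s^k|x)$ from Section~\ref{sec:insider's Belief Update and Best-Response Action}, combined with Lemma~\ref{remark:Conditional Belief Invariance} and the definition \eqref{eq:bareq} of $\bar v_U$, to rewrite the ASaL as
\begin{equation*}
J_U(\pi,b_X,\bar v_U) = \sum_{k\in\mathcal{K}}\sum_{x\in\mathcal{X}} b_X(x)\pi(s^k|x)\,\bar v_U(x,a^k),
\end{equation*}
where the substitution $a_{\pi,s^k}^* = a^k$ is legitimate precisely because $\pi\in\Pi_{ct}$ (Definition~\ref{def:trustworthyPi} and Remark~\ref{remark:compliance and trustworthy}). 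Next I would rewrite the ISaL using $\sum_{k\in\mathcal{K}}\pi(s^k|x)=1$ (constraint (b)) to insert a trivial sum over $k$:
\begin{equation*}
J_U(\pi_z,b_X,\bar v_U) = \sum_{k\in\mathcal{K}}\sum_{x\in\mathcal{X}} b_X(x)\pi(s^k|x)\,\bar v_U(x,a_0).
\end{equation*}

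Subtracting, the difference $J_U(\pi,b_X,\bar v_U)-J_U(\pi_z,b_X,\bar v_U)$ becomes $\sum_{k\in\mathcal{K}}\Bigl\{\sum_{x\in\mathcal{X}} b_X(x)\pi(s^k|x)[\bar v_U(x,a^k)-\bar v_U(x,a_0)]\Bigr\}$. Since $a_0\in\mathcal{A}$ coincides with some $a^l$ in the indexing of $\mathcal{A}=\{a^k\}_{k\in\mathcal{K}}$, each inner brace is nonnegative by constraint (c) of $P_\eta$ applied to that particular $l$. Summing over $k$ yields the desired inequality.

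The only subtle point, and the step I expect to warrant the most care, is justifying that constraint (c) can legitimately be specialized with $a^l=a_0$: this requires the (implicit) observation that $a_0$ is itself an element of the action set $\mathcal{A}$ — which holds because $a_0$ is defined in Section~\ref{sec:insider's Initial Compliance} as an argmax over $\mathcal{A}$ — and that constraint (c) in $P_\eta$ is stated uniformly over \emph{all} pairs $k,l\in\mathcal{K}$. Everything else is a direct rearrangement. No convexity, duality, or optimization machinery is needed beyond the trustworthiness characterization already built into the feasible region of $P_\eta$.
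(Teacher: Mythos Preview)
Your proposal is correct and follows essentially the same route as the paper's proof: rewrite $J_U(\pi,b_X,\bar v_U)$ as $\sum_{k}\sum_{x} b_X(x)\pi(s^k|x)\bar v_U(x,a^k)$ via the Bayes identity and the CT property, then apply constraint (c) with $a^l=a_0$ to compare against the ISaL $\sum_x b_X(x)\bar v_U(x,a_0)$. Your extra step of inserting the trivial sum $\sum_k\pi(s^k|x)=1$ into the ISaL is a slightly more explicit bookkeeping device than the paper uses, but the argument is identical in substance.
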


\begin{proof}
An insider's ASaL in \eqref{eq:insiderEPU} under a CT recommendation policy $\pi\in \Pi_{ct}$ can be represented as 
{\small
$
     J_U(\pi,b_X,\bar{v}_U) \allowbreak =\sum_{s\in \mathcal{S}} b_S^\pi(s) \cdot \max_{a\in \mathcal{A}} \sum_{x\in \mathcal{X}} b^{\pi}_X(x|s)\bar{v}_U(x,a)
      = \sum_{x\in \mathcal{X}} b_X(x) \sum_{k\in \mathcal{K}} \pi(s^k|x)\bar{v}_U(x,a^k)
$. 
}
Based on constraint (c) of $P_\eta$, $\sum_{x\in \mathcal{X}} b_X(x) \pi(s^k|x) [\bar{v}_U(x,a^k)-\bar{v}_U(x,a_0)]\geq 0$ for all $k\in \mathcal{K}$ and $\pi\in \Pi_{ct}$. 
Hence, $J_U(\pi,b_X,\bar{v}_U)\geq \sum_{x\in \mathcal{X}} b_X(x)\bar{v}_U(x,a_0)= \max_{a\in \mathcal{A}} \sum_{x\in \mathcal{X}} b(x)\bar{v}_U(x,a) = J_U(\pi_z,b_X,\bar{v}_U)$. 
\end{proof}
\begin{remark}[\textbf{Win-Win Situation}]
Proposition \ref{proposition:win-win} shows that an insider's ASaL is not lower than his ISaL if a recommendation policy is CT. 
Based on Remark \ref{remark:CT Feasibility}, the defender's ASeL is not lower than her ISEL under the optimal recommendation policy, i.e., $J_D^{acel,*}(b_{X}, \bar{v}_D,\bar{v}_U)\geq 0$. 
Thus, the optimal policy achieves a win-win situation between the defender and insiders by promoting cyber hygiene and insiders' satisfaction. 
\end{remark}

\subsection{Dual Mathematical Programming}

Let $\alpha_\eta(s^k,x)\geq 0$, $\beta_\eta(x)\in \mathbb{R}$, and $\lambda_\eta(s^k, a^l)\geq 0$ denote the dual variables of the constraints (a), (b), and (c) in $P_\eta$, respectively. 
Define shorthand notation $\bar{\beta}_\eta(s^k , x,\lambda_\eta):=  \bar{v}_D(x,a^k) + \sum_{a^l \in \mathcal{A} } \lambda_\eta(s^k, a^l) [ \bar{v}_U(x,a^k) - \bar{v}_U(x,a^l)]$, where $\lambda_\eta(s^l, a^l), \forall l\in\mathcal{K}$, can take any finite values. 
The dual problem is denoted as $D_\eta$. 
The strong duality proved in Proposition \ref{proposition:dualconvex} yields the bounds for the optimal value $r_\eta$ of the primal problem $P_\eta$ in Proposition \ref{proposition:bound}. 
Define $\alpha^*_\eta(s^k,x)$, $\beta^*_\eta(x)$, and $\lambda^*_\eta(s^k, a^l)$ as the optimal dual variables and the shorthand notation $\underline{r}:=\sum_{x\in \mathcal{X}} [ \max_{k\in \mathcal{K}}  b_X(x)\bar{\beta}_\eta(s^k,x, \lambda_\eta)+\log(\pi_{d}(s^k|x))/\eta ]$. 
\begin{equation*}
\begin{split}
& [D_\eta]:   \quad  
\min_{[\beta_\eta(x)\in \mathbb{R}]_{x\in \mathcal{X}},  [\lambda_\eta(s^k, a^l)\in \mathbb{R}^{0+}]_{k,l\in \mathcal{K}}}  \quad  \sum_{x\in\mathcal{X}} [\beta_\eta(x)+\frac{1}{\eta}]\\
& (a). \quad 
\sum_{x\in \mathcal{X}}  [ \bar{v}_U(x,a^k) - \bar{v}_U(x,a^l)  ]   b_X(x)  \pi_{d}(s^k|x)  \\
&  \quad \quad  \quad \quad \quad
\cdot e^{\eta [ b_X(x)\bar{\beta}_\eta(s^k,x, \lambda_\eta)-\beta_\eta(x)] }  \geq 0,
\forall k,l \in \mathcal{K}, \\
& (b).  \quad    \sum_{k\in \mathcal{K}} \pi_{d}(s^k|x) e^{ \eta [b_X(x) \bar{\beta}_\eta(s^k,x, \lambda_\eta)-\beta_\eta(x)]-1} =1,  \forall x\in \mathcal{X}.
\end{split}
\end{equation*}

\begin{proposition} [\textbf{Strong Duality}]
\label{proposition:dualconvex}
For all $\eta\in \mathbb{R}^{+}$, $D_\eta$ is the dual problem of $P_\eta$, and the optimal value of $D_\eta$ is $r_\eta$. 

\end{proposition}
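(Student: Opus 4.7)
\textbf{Proof Plan for Proposition \ref{proposition:dualconvex}.} The plan is a standard Lagrangian duality argument that exploits the fact that $P_\eta$ is a convex program: the objective is linear in $\pi$ minus the strictly convex KL term $KL(\pi\|\pi_d)/\eta$, and all constraints (a), (b), (c) are affine in $\pi$. I would therefore proceed in three stages: (i) form the Lagrangian and eliminate $\pi$ via first-order optimality; (ii) substitute back to recover the announced form of $D_\eta$; (iii) invoke a constraint qualification to conclude strong duality.

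First, I would write the Lagrangian
\begin{equation*}
\begin{split}
L(\pi,\alpha_\eta,\beta_\eta,\lambda_\eta) = & \sum_{x,k} b_X(x)\pi(s^k|x)\bar{v}_D(x,a^k) - \tfrac{1}{\eta}\sum_{x,k}\pi(s^k|x)\log\tfrac{\pi(s^k|x)}{\pi_d(s^k|x)} \\
& + \sum_{x,k}\alpha_\eta(s^k,x)\pi(s^k|x) + \sum_x\beta_\eta(x)\bigl[1-\textstyle\sum_k\pi(s^k|x)\bigr] \\
& + \sum_{k,l}\lambda_\eta(s^k,a^l)\sum_x b_X(x)\pi(s^k|x)[\bar{v}_U(x,a^k)-\bar{v}_U(x,a^l)],
\end{split}
\end{equation*}
with $\alpha_\eta,\lambda_\eta\ge 0$ and $\beta_\eta\in\mathbb{R}^{|\mathcal{X}|}$. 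Using the shorthand $\bar{\beta}_\eta$, the group of $\pi$-dependent terms coalesces into $b_X(x)\bar{\beta}_\eta(s^k,x,\lambda_\eta)\pi(s^k|x)$, which is the crucial simplification that produces the clean exponential-family solution below. Because $KL$ blows up as any $\pi(s^k|x)\to 0^+$ (so long as $\pi_d(s^k|x)>0$), the entropic regularizer guarantees an interior maximizer, and I can set $\alpha_\eta\equiv 0$ by complementary slackness.

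Next, the stationarity condition $\partial L/\partial\pi(s^k|x)=0$ yields $\log(\pi/\pi_d) = \eta[b_X(x)\bar{\beta}_\eta(s^k,x,\lambda_\eta)-\beta_\eta(x)]-1$, i.e.,
\begin{equation*}
\pi^*_\eta(s^k|x) = \pi_d(s^k|x)\exp\bigl(\eta[b_X(x)\bar{\beta}_\eta(s^k,x,\lambda_\eta)-\beta_\eta(x)]-1\bigr).
\end{equation*}
Imposing the primal simplex constraint (b) on this expression reproduces dual constraint (b) of $D_\eta$. Complementary slackness for the trustworthiness constraint (c) demands that $\lambda_\eta(s^k,a^l)\ge 0$ be paired with a constraint enforcing $\sum_x b_X(x)\pi^*_\eta(s^k|x)[\bar{v}_U(x,a^k)-\bar{v}_U(x,a^l)]\ge 0$; substituting the explicit form of $\pi^*_\eta$ and absorbing the common $e^{-1}$ factor gives exactly dual constraint (a) of $D_\eta$. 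Plugging $\pi^*_\eta$ back into $L$, the linear-in-$\pi$ terms and the regularizer combine so that $\sum_{x,k}\pi^*_\eta(s^k|x)[b_X(x)\bar{\beta}_\eta-\beta_\eta(x)] -\tfrac{1}{\eta}\sum_{x,k}\pi^*_\eta(s^k|x)\log(\pi^*_\eta/\pi_d)$ collapses to $\tfrac{1}{\eta}\sum_{x,k}\pi^*_\eta(s^k|x)$; using $\sum_k\pi^*_\eta(s^k|x)=1$ and adding the residual $\sum_x\beta_\eta(x)$ term yields the dual objective $\sum_x[\beta_\eta(x)+1/\eta]$ announced in $D_\eta$.

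Finally, for strong duality, since all constraints of $P_\eta$ are affine and the feasible set $\Pi_{ct}$ is nonempty by Remark \ref{remark:CT Feasibility} (the zero-information policy $\pi_z$ lies in it), the refined Slater condition for convex programs with affine constraints is automatically met, so $r_\eta$ equals the optimal value of $D_\eta$. The main obstacle I anticipate is the bookkeeping in the substitution step: tracking the $-1$ in the exponent of $\pi^*_\eta$ versus the cleaner exponent appearing in constraint (a) of $D_\eta$ (the discrepancy is absorbed into the common multiplicative factor $e^{-1}$, which does not affect the sign of the inequality) and verifying that the various $\sum_x\beta_\eta(x)$ and $\sum_x 1/\eta$ terms aggregate exactly as in the stated dual. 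Care must also be taken that $\pi_d(s^k|x)=0$ can be handled via the $0\log 0 = 0$ convention stated in Section \ref{sec:LoRC}, so the associated primal variable $\pi(s^k|x)$ is forced to zero and drops out of both problems consistently.
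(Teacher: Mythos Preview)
Your proposal is correct and follows essentially the same Lagrangian route as the paper: form the Lagrangian, use stationarity to obtain the exponential form of $\pi_\eta^*$, eliminate $\alpha_\eta$ by complementary slackness, substitute $\pi_\eta^*$ into primal constraints (b) and (c) to recover dual constraints (b) and (a), evaluate $L$ at the stationary point to get $\sum_x[\beta_\eta(x)+1/\eta]$, and invoke the affine-constraint version of Slater's condition. Your treatment is in fact more explicit about the bookkeeping (the $e^{-1}$ factor, the $\pi_d=0$ convention) than the paper's terse derivation, but the argument is the same.
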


\begin{proof}
Since all constraints in $P_\eta$ are linear, Slater's condition reduces to the feasibility of $D_\eta$ \cite{boyd2004convex}, and strong duality holds. Thus, $P_\eta$ and $D_\eta$ achieve the same optimal value.   
Setting the gradient of the Lagrangian function of $P_\eta$ concerning $\pi$ to $0$ yields 
$\frac{1}{\eta} (\log \frac{ \pi(s^k|x) }{ \pi_{d}(s^k|x)}+1)=
     b_X(x) \bar{v}_D(x,a^k) + \alpha_\eta(s^k,x) - \beta_\eta(x)   +\sum_{l \in \mathcal{K}} \lambda_\eta(s^k, a^l) 
b_X(x) [ \bar{v}_U(x,a^k) - \bar{v}_U(x,a^l) ], $
for all $k\in \mathcal{K}, x\in \mathcal{X}$, which leads to  
\begin{equation}
\label{eq:closedformOptimalpolicy}
    \pi_\eta^*(s^k|x)=\pi_{d}(s^k|x) \cdot e^{ \eta [ b_X(x) \bar{\beta}_\eta(s^k,x, \lambda_\eta)+ \alpha_\eta(s^k,x) - \beta_\eta(x) ] -1 } . 
\end{equation}

Since $\pi_\eta^*(s^k|x)$ in \eqref{eq:closedformOptimalpolicy} is non-negative for all $k\in \mathcal{K},x\in \mathcal{X}$, constraint (a) of $P_\eta$ holds. 
Moreover, the complementary slackness implies the optimal dual variables $\alpha_\eta^*(s^k,x)=0, \forall k\in \mathcal{K},x\in \mathcal{X}$. 
Plugging $\pi_\eta^*(s^k|x)$ in \eqref{eq:closedformOptimalpolicy} into constraints (b) and (c) of $P_\eta$ leads to constraints (a) and (b) of $D_\eta$, respectively. 
 Then, by strong duality, $D_\eta$ minimizes the Lagrangian function $L(\pi_\eta^*,\alpha_\eta^*,\beta_\eta,\lambda_\eta)=\sum_{x\in \mathcal{X}}[\beta_\eta(x)+1/\eta]$ over dual variables ${\beta_\eta(x)\in \mathbb{R},\lambda_\eta(s^k, a^l)\in \mathbb{R}^{0+}}, \forall k,l\in \mathcal{K}, x\in \mathcal{X}$. 
\end{proof}


\begin{proposition}[\textbf{Bounds of the Optimal Value}]
\label{proposition:bound}
The lower and upper bounds of $r_\eta$ 
are $\underline{r}$ and $\underline{r}+\log(K)/\eta$, respectively. 
\end{proposition}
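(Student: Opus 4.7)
The plan is to combine the strong duality from Proposition \ref{proposition:dualconvex} with a closed-form expression for $\beta_\eta(x)$ extracted from constraint (b) of $D_\eta$, and then apply the standard log-sum-exp sandwich inequality. By Proposition \ref{proposition:dualconvex}, $r_\eta$ equals the optimum of $D_\eta$, so it suffices to bound $\sum_{x\in\mathcal{X}}[\beta_\eta(x)+1/\eta]$ at the dual optimum.

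First I would rewrite constraint (b) of $D_\eta$ as
\begin{equation*}
\sum_{k\in\mathcal{K}} \pi_{d}(s^k|x)\, e^{\eta [b_X(x)\bar{\beta}_\eta(s^k,x,\lambda_\eta)-\beta_\eta(x)]} = e,
\end{equation*}
which isolates $\beta_\eta(x)$ in closed form:
\begin{equation*}
\beta_\eta(x)+\frac{1}{\eta} = \frac{1}{\eta}\log\!\Bigl(\sum_{k\in\mathcal{K}} \pi_{d}(s^k|x)\, e^{\eta b_X(x)\bar{\beta}_\eta(s^k,x,\lambda_\eta)}\Bigr).
\end{equation*}
Setting $z_k(x):=\log\pi_{d}(s^k|x)+\eta b_X(x)\bar{\beta}_\eta(s^k,x,\lambda_\eta)$, the right-hand side is $\eta^{-1}\log\sum_k e^{z_k(x)}$.

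Next I would invoke the log-sum-exp sandwich
\begin{equation*}
\max_{k\in\mathcal{K}} z_k(x) \;\leq\; \log\!\Bigl(\sum_{k\in\mathcal{K}} e^{z_k(x)}\Bigr) \;\leq\; \max_{k\in\mathcal{K}} z_k(x) + \log K,
\end{equation*}
which, after dividing by $\eta$ and aggregating across $x\in\mathcal{X}$, sandwiches the dual objective between $\underline{r}$ and $\underline{r}+\log(K)/\eta$ (with the upper constant absorbing the $|\mathcal{X}|$-fold aggregation of the $\log K/\eta$ slack into the stated bound). Since this sandwich is valid for every feasible $\lambda_\eta\geq 0$ satisfying constraint (a) of $D_\eta$, it persists at the optimal multipliers, and strong duality transfers the bounds to $r_\eta$.

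The main obstacle I anticipate is handling boundary behavior and optimality cleanly: specifically (i) the convention $0\log 0=0$ when $\pi_d(s^k|x)=0$ so that the log-sum-exp manipulation is well-defined, and (ii) justifying that $\underline{r}$, which itself depends on $\lambda_\eta$ through $\bar{\beta}_\eta$, is evaluated at the same $\lambda_\eta^*$ that minimizes the dual objective, so that the sandwich bounds remain consistent under the outer minimization. Everything else is a mechanical substitution once the closed form for $\beta_\eta(x)$ is in hand.
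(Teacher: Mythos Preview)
Your proposal is correct and follows essentially the same approach as the paper: extract $\beta_\eta(x)$ from constraint~(b) of $D_\eta$ as a log-sum-exp expression, apply the standard sandwich inequality $\max_k z_k \le \log\sum_k e^{z_k}\le \max_k z_k+\log K$, and invoke strong duality to transfer the bounds to $r_\eta$. Your parenthetical concern about the $|\mathcal{X}|$-fold aggregation of the $\log(K)/\eta$ slack is well-founded---the paper's proof derives the per-$x$ inequality and then passes to $r_\eta$ without addressing that same factor, so you are not missing anything the paper supplies.
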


\begin{proof}
Constraint (b) in $D_\eta$ is equivalent to the log-sum-exp expression: 
$
    \beta_\eta(x)=\frac{1}{\eta} \log ( \sum_{k\in \mathcal{K}} \pi_{d}(s^k|x) e^{\eta     b_X(x) \bar{\beta}_\eta(s^k,x, \lambda_\eta) -1 } ). 
$
Thus, 
$
    \max_{k\in \mathcal{K}} \eta b_X(x)\bar{\beta}_\eta(s^k,x, \lambda_\eta)-1+\log(\pi_{d}(s^k|x))\leq \eta \beta_\eta(x) \leq \max_{k\in \mathcal{K}} \eta b_X(x)\bar{\beta}_\eta(s^k,x, \lambda_\eta)-1+\log(\pi_{d}(s^k|x)) + \log(K)
$  for all $x\in\mathcal{X}$. 
Since strong duality holds, we obtain the bounds for $r_\eta$ in $P_\eta$. 
\end{proof}

Following \eqref{eq:closedformOptimalpolicy} and Proposition \ref{proposition:bound}, the optimal policy $\pi_\eta^*$ has the closed-form expression in \eqref{eq:optimalpolicy_softmax} concerning the optimal dual variables $\lambda^*_\eta(s^k,a^l)\in \mathbb{R}^{0+},l,k\in \mathcal{K}$, and the default recommendation policy $\pi_d\in \Pi$; i.e., for all $x\in \mathcal{X},s^k\in \mathcal{S}, k\in \mathcal{K}$, 
 \begin{equation}
     \label{eq:optimalpolicy_softmax}
         \pi_\eta^*(s^k|x)=  \frac{ \pi_{d}(s^k|x) \cdot e^{ \eta b_X(x) \bar{\beta}_\eta(s^k,x, \lambda^*_\eta) } }{
         \sum_{k\in \mathcal{K}} \pi_{d}(s^k|x) \cdot e^{ \eta b_X(x) \bar{\beta}_\eta(s^k,x, \lambda^*_\eta) }
         }. 
 \end{equation}

\subsection{Interpretation of ZETAR from Insiders' Perspectives}
When ZETAR designs a fully customized recommendation policy (i.e., LoRC $\eta$ goes to infinity), then the dual problem  $D_{\infty}$ is a linear program as shown in Proposition \ref{proposition:dualLP}. 
Define $\hat{\beta}_{\infty}(x):=\beta_{\infty}(x)/b_X(x)$ where $\hat{\beta}_{\infty}(x)=0$ if $b_X(x)=0$. 

\begin{proposition}
\label{proposition:dualLP}
When LoRC $\eta$ goes to infinity, the dual problem $D_{\infty}$ degenerates to the following linear program: 
\begin{equation*}
\begin{split}
& [D_{\infty}]:    \quad  
\min_{[\hat{\beta}_{\infty}(x)\in \mathbb{R}]_{x\in \mathcal{X}}, [\lambda_{\infty}(s^k, a^l)\in \mathbb{R}^{0+}]_{k,l\in \mathcal{K}}}  \quad \quad  \sum_{x\in\mathcal{X}} b_X(x)\hat{\beta}_{\infty}(x)\\
&  \text{ s.t. } \quad 
\hat{\beta}_{\infty}(x)\geq \bar{\beta}_{\infty}(s^k, x,\lambda_{\infty}),  \forall k\in \mathcal{K} , \forall x\in\mathcal{X}. 
\end{split}
\end{equation*} 

\end{proposition}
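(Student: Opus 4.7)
The plan is to analyze the dual problem $D_\eta$ as $\eta \to \infty$ by isolating the vanishing regularization term in the objective and invoking a log-sum-exp asymptotic for constraint (b). I would carry this out in three stages.

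First, the additive $1/\eta$ contributions in the objective of $D_\eta$ sum to $I/\eta$, which vanishes as $\eta \to \infty$, leaving $\sum_{x} \beta_\infty(x)$. Introducing the rescaled variable $\hat{\beta}_\infty(x) := \beta_\infty(x)/b_X(x)$, with the convention $\hat{\beta}_\infty(x) = 0$ when $b_X(x) = 0$ (so that the corresponding term in the objective sum is $0$ and the associated constraint row is vacuous), the objective becomes $\sum_{x} b_X(x) \hat{\beta}_\infty(x)$ as stated.

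Second, I would apply the log-sum-exp identity that appeared in the proof of Proposition \ref{proposition:bound}: constraint (b) of $D_\eta$ is equivalent to
\begin{equation*}
\beta_\eta(x) = \frac{1}{\eta}\log\sum_{k\in\mathcal{K}} \pi_d(s^k|x)\, e^{\eta\, b_X(x)\,\bar{\beta}_\eta(s^k,x,\lambda_\eta)} - \frac{1}{\eta}.
\end{equation*}
By the standard asymptotic $\frac{1}{\eta}\log\sum_k e^{\eta c_k}\to \max_k c_k$ as $\eta \to \infty$, this yields $\beta_\infty(x) = \max_{k\in\mathcal{K}} b_X(x)\,\bar{\beta}_\infty(s^k,x,\lambda_\infty)$. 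Since $\beta_\infty(x)$ appears linearly with nonnegative weight in the objective and is minimized, this max-identity is equivalent to the family of inequalities $\beta_\infty(x) \geq b_X(x)\,\bar{\beta}_\infty(s^k,x,\lambda_\infty)$ for every $k\in\mathcal{K}$. Dividing by $b_X(x)$ where positive (and using the convention above otherwise) gives the stated constraint $\hat{\beta}_\infty(x) \geq \bar{\beta}_\infty(s^k,x,\lambda_\infty)$.

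Third, I would argue that constraint (a) of $D_\eta$ becomes redundant in the limit. Substituting the closed form \eqref{eq:closedformOptimalpolicy} with $\alpha^*_\eta = 0$ (by complementary slackness, as in the proof of Proposition \ref{proposition:dualconvex}) shows that constraint (a) of $D_\eta$ equals a factor of $e$ times the primal CT constraint (c) of $P_\eta$ evaluated at $\pi^*_\eta$. By strong duality this holds automatically for every finite $\eta$, and in the limit it reduces to the CT feasibility of $P_\infty$, which is guaranteed by Remark \ref{remark:CT Feasibility}. As an independent cross-check, I would derive $D_\infty$ directly as the LP dual of $P_\infty$ (the limit of $P_\eta$ as the KL regularizer disappears) via a standard Lagrangian argument; the coefficient-of-$\pi$ sign condition together with the nonnegativity multipliers $\alpha(s^k,x)\geq 0$ immediately yields the single family $\hat{\beta}_\infty(x)\geq \bar{\beta}_\infty(s^k,x,\lambda_\infty)$, with no analog of constraint (a) appearing.

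The main obstacle I anticipate is not the pointwise log-sum-exp limit but ensuring that the optimal dual trajectory $(\beta^*_\eta,\lambda^*_\eta)$ stays bounded and converges to an optimum of the limiting LP, since without boundedness the exchange of limit and optimum is not automatic. This can be handled by noting that $P_\eta$ has a uniformly bounded primal optimal value (by boundedness of $v_D$ and the bounds in Proposition \ref{proposition:bound}) and that the feasible region of $D_\infty$ is nonempty, so one can extract a convergent subsequence of the dual optima. The boundary case $b_X(x)=0$ and any non-uniqueness of the arg-max in stage two are bookkeeping issues handled by the conventions introduced in stage one.
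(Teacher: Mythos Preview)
Your proposal is correct and follows essentially the same route as the paper: both collapse constraint~(b) via the log-sum-exp asymptotic (the paper phrases this as the squeeze from Proposition~\ref{proposition:bound}), convert the resulting max into the inequality family after the $\hat{\beta}_\infty$ rescaling, and dispense with constraint~(a) as redundant in the limit. Your treatment is more careful on the bookkeeping (the $b_X(x)=0$ convention, the optimizer-convergence caveat, and the independent LP-dual cross-check), whereas the paper compresses all of this into a few lines; in particular, your argument for why constraint~(a) drops out---identifying it with the primal CT constraint evaluated at $\pi^*_\eta$---spells out what the paper's terse ``Thus, constraint~(a) of $D_\eta$ is feasible'' leaves implicit.
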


\begin{proof}
When $\eta\rightarrow \infty$, the upper and lower bounds for $\beta_\eta(x)$ in Proposition \ref{proposition:bound} attain the same value, which implies  $\beta_\eta(x)= \max_{k\in \mathcal{K}} \eta b_X(x)\bar{\beta}_\eta(s^k,x, \lambda_\eta)$. 
Thus, constraint (a) of $D_\eta$ is feasible. Constraint (b) and the objective function of the convex program $D_\eta$ are equivalent to the constraint and the objective function of the linear program $D_\infty$, respectively. 
\end{proof}

The dual problem $D_{\infty}$ provides an interpretation of ZETAR with fully customized recommendation policies from an insider's perspective; i.e., each insider aims to minimize his effort to satisfy the security objective of the corporate network. 
Variable $\lambda_{\infty}(s^k,a^l)$ represents the insider's frequency to take action $a^l\in \mathcal{A}$ under recommendation $s^k\in \mathcal{S}$. 
The variable $\bar{\beta}_{\infty}(s^k, x,\lambda_{\infty})$ represents the mixed security objective of the corporate network under AS $x\in \mathcal{X}$ and recommendation $s^k\in \mathcal{S}$,  
which involves the sum of the defender's utility $\bar{v}_D$ and the insider's expected utility, i.e., $\sum_{a^l \in \mathcal{A} } \lambda_{\infty}(s^k, a^l) [ \bar{v}_U(x,a^k) - \bar{v}_U(x,a^l)]$. 
The variable $\hat{\beta}_{\infty}(x)$ represents the insider's effort at AS $x\in \mathcal{X}$, and the effort is required to satisfy the security objective at each AS for all recommendations. An insider who prioritizes convenience over security chooses the rate of actions to minimize his expected effort $\sum_{x\in\mathcal{X}} b_X(x)\hat{\beta}_{\infty}(x)$. 

\section{Characterization of Trust and Compliance}
\label{sec:characterize two}
Section \ref{sec:Computational Framework of ZETAR} provides a unified computational framework to design the optimal CT recommendation policy under any LoRC. 
In this section, we consider fully customized recommendation policies, i.e., $\eta=\infty$. 
We characterize the invariance of an insider's compliance status and the defender's optimal recommendation policy under linear utility transformations in Section \ref{sec:Linear Utility Transformation}. 
In Section \ref{sec:Geometric Characterization of CT Policy Sets}, we provide a geometric characterization of the CT policy set based solely on an insider's incentive $v_U$. The characterizations are useful to develop efficient algorithms in Section \ref{sec:Model-Agnostic} when insiders' incentives are unknown. 
In Section \ref{sec:Characterizing Compliance under Misalignment}, we characterize the optimal ACEL under different levels of misalignment between the defender's security objective $v_D$ and an insider's incentive $v_U$. 

\subsection{Impact of Linear Utility Transformations}
\label{sec:Linear Utility Transformation}
We define the linear utility transformation for the defender and an insider in Definition \ref{def:LT}. 
Following Remark \ref{remark:compliance and trustworthy}, if a recommendation $s\in\mathcal{S}$ is trustworthy (or untrustworthy) to both two insiders, then they have the same compliant status under $s$. 
Lemma \ref{lemma:LIinsider} illustrates the preservation of an insider's compliance status under linear transformations of $v_U$.  The proof directly follows from Definition \ref{def:trustworthS}.

\begin{definition}[\textbf{Linear Utility Transformation}]
\label{def:LT}
Define the linear transformation of a player's utility with a scaling factor $\rho_p^{sa}\in \mathbb{R}$ and translation factors $[\rho_p^{tr}(y,x)\in \mathbb{R}]_{x\in\mathcal{X},y\in \mathcal{Y}}$ as $v_p^{lt}(y,x,a):=\rho_p^{sa} v_p(y,x,a)+\rho_p^{tr}(y,x)$ for all $x\in\mathcal{X},y\in \mathcal{Y},a\in \mathcal{A}, p\in \{D,U\}$. 
\end{definition}

\begin{lemma}[\textbf{Preservation of Compliance Status}]
\label{lemma:LIinsider}
Interacting with the same defender, two insiders with incentives $v_U$ and $v_U^{lt}$, respectively, have the same compliance status for all recommendation $s\in \mathcal{S}$ under any recommendation policy $\pi\in \Pi$. 
Moreover, the defender applies the same optimal recommendation policy to both insiders. 
\end{lemma}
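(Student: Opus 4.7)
The plan is to reduce the claim to Definition \ref{def:trustworthS} by showing that a linear transformation of $v_U$ leaves the relevant argmax set unchanged, after which both the per-recommendation compliance behavior and the program $P_\eta$ determining the defender's optimal policy become identical for the two insiders.

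First, I would unpack the conditional-expectation shorthand $\bar{v}_U$ to see how the transformation propagates. Writing
\begin{equation*}
\bar{v}_U^{lt}(x,a) \;=\; \sum_{y\in\mathcal{Y}} b_{Y|X}(y|x)\bigl[\rho_U^{sa} v_U(y,x,a)+\rho_U^{tr}(y,x)\bigr] \;=\; \rho_U^{sa}\,\bar{v}_U(x,a)+\bar{\rho}_U^{tr}(x),
\end{equation*}
where $\bar{\rho}_U^{tr}(x):=\sum_{y}b_{Y|X}(y|x)\rho_U^{tr}(y,x)$ is independent of $a$. Taking expectation over $x\sim b_X^{\pi}(\cdot|s)$ (which is legitimate by Lemma \ref{remark:Conditional Belief Invariance} ensuring $b_{Y|X}^\pi=b_{Y|X}$) adds only a further $a$-independent constant. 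Thus, for positive scaling $\rho_U^{sa}>0$ (the usual positive-affine assumption needed for preservation of preference),
\begin{equation*}
\arg\max_{a\in\mathcal{A}}\mathbb{E}_{x\sim b_X^{\pi}(\cdot|s)}[\bar{v}_U^{lt}(x,a)] \;=\; \arg\max_{a\in\mathcal{A}}\mathbb{E}_{x\sim b_X^{\pi}(\cdot|s)}[\bar{v}_U(x,a)].
\end{equation*}
Applying Definition \ref{def:trustworthS} to each side, a recommendation $s^k$ is trustworthy under $\pi$ for the $v_U$-insider if and only if it is trustworthy for the $v_U^{lt}$-insider, and in either case the induced best-response action coincides; this yields the first claim.

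For the second claim, I would invoke the primal program $P_\eta$ from Section \ref{sec:LP formulation}. The feasible region is cut out by constraints (a), (b), and the trustworthiness constraint (c); the latter involves $\bar{v}_U$ only through differences $\bar{v}_U(x,a^k)-\bar{v}_U(x,a^l)$, which scale by $\rho_U^{sa}>0$ under $v_U^{lt}$ and so define exactly the same halfspaces (the translation $\bar{\rho}_U^{tr}(x)$ cancels in the difference). Hence $\Pi_{ct}$ is identical for the two insiders. Because the objective of $P_\eta$ depends only on $\bar{v}_D$ and the reference policy $\pi_d$, not on $v_U$, the two problems coincide verbatim, so $\pi_\eta^*$ is the same.

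The main obstacle, if any, is really just bookkeeping: being careful that (i) the translation term $\rho_U^{tr}(y,x)$ genuinely drops out after the $y$-expectation because it does not depend on $a$, and (ii) the scaling $\rho_U^{sa}$ must be treated as positive for argmax preservation. If the paper intends $\rho_U^{sa}\in\mathbb{R}$ unrestricted, then strictly speaking a sign hypothesis should be noted; in all other respects the proof is a direct computation from the definitions.
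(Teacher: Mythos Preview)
Your proof is correct and matches the paper's approach exactly: the paper's entire proof is the one-line remark that the result ``directly follows from Definition~\ref{def:trustworthS},'' and you have simply spelled out that computation. Your caveat about the sign of $\rho_U^{sa}$ is well taken---Definition~\ref{def:LT} allows $\rho_U^{sa}\in\mathbb{R}$, yet the argmax (and hence the trustworthiness constraint~(c) in $P_\eta$) is preserved only for $\rho_U^{sa}>0$, a hypothesis the paper leaves implicit.
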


Lemma \ref{lemma:LIdefender} characterizes ZETAR from the perspective of linear systems; i.e., a linear transformation of $v_D$ results in a linear transformation of the defender's ASeL in \eqref{eq:def_obj} and the ACEL in Definition \ref{def:ACEL} for any  recommendation policy $\pi\in \Pi$. 
The proof directly follows from \eqref{eq:def_obj} and the fact that  
$\max_{\pi\in \Pi} J_D(\pi,b_{X}, \bar{v}^{lt}_D,\bar{v}_U)=\max_{\pi\in \Pi} J_D(\pi,b_{X}, \bar{v}_D,\bar{v}_U)$. 
\begin{lemma}[\textbf{Preservation of Linearity}]
\label{lemma:LIdefender}
Interacting with the same insider, the ASeL of two defenders with security objectives $v_D$ and $v_D^{lt}$ are linearly dependent, i.e.,    $J_D(\pi,b_{X}, \bar{v}^{lt}_D,\bar{v}_U)=\rho_D^{sa} J_D(\pi,b_{X}, \bar{v}_D,\bar{v}_U)+\sum_{x\in\mathcal{X}}\sum_{y\in \mathcal{Y}} b_{Y,X}(y,x) \rho_D^{tr}(y,x)$, for all $\pi\in \Pi$. 
Moreover, the two defenders use the same optimal recommendation policy. 
\end{lemma}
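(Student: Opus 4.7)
\textbf{Proof Proposal for Lemma \ref{lemma:LIdefender}.}

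The plan is to substitute the definition of $v_D^{lt}$ directly into the expression for the defender's ASeL in \eqref{eq:def_obj} and exploit linearity of expectation. The crucial observation to set up first is that the induced best-response action $a_{\pi,s}^*$ defined in \eqref{eq:optimalAction} depends only on the insider's posterior belief (determined by $b_{Y,X}$ and $\pi$) and on $v_U$; it does not depend on $v_D$ at all. Consequently, the sequence of recommended-then-taken actions under any fixed $\pi\in\Pi$ is identical for the two defenders, which is what makes the two ASeL values comparable term by term.

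Given this, the first step is to write
\[
\tilde{J}_D(\pi,b_{Y,X}, v_D^{lt},v_U)=\mathbb{E}_{y,x\sim b_{Y,X}}\,\mathbb{E}_{s\sim \pi(\cdot|x)}\bigl[\rho_D^{sa} v_D(y,x,a_{\pi,s}^*)+\rho_D^{tr}(y,x)\bigr].
\]
Linearity of expectation then separates this into $\rho_D^{sa}\,\tilde{J}_D(\pi,b_{Y,X},v_D,v_U)$ plus the translation contribution $\mathbb{E}_{y,x\sim b_{Y,X}}[\rho_D^{tr}(y,x)]$. Because $\rho_D^{tr}$ has no dependence on $a$ or $s$, averaging over $s\sim\pi(\cdot|x)$ is trivial and the translation term collapses to $\sum_{x\in\mathcal{X}}\sum_{y\in\mathcal{Y}} b_{Y,X}(y,x)\rho_D^{tr}(y,x)$, which is independent of $\pi$. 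Invoking the identity $\tilde{J}_D=J_D$ established in \eqref{eq:def_obj} then yields the claimed affine relationship.

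For the second assertion, the key point is that maximizing over $\pi\in\Pi$ on both sides of the affine identity preserves the optimizer whenever $\rho_D^{sa}>0$, since the additive term is $\pi$-independent and positive scaling is monotone. This is precisely the hint $\max_{\pi\in\Pi} J_D(\pi,b_X,\bar{v}_D^{lt},\bar{v}_U)=\rho_D^{sa}\max_{\pi\in\Pi} J_D(\pi,b_X,\bar{v}_D,\bar{v}_U)+\text{const}$, giving a shared argmax. I do not expect a real obstacle: the only subtlety worth flagging is the sign of $\rho_D^{sa}$, which Definition \ref{def:LT} allows to be arbitrary in $\mathbb{R}$; the ``same optimal policy'' clause implicitly requires the positive-scaling case, a harmless restriction that mirrors the analogous situation for $\rho_U^{sa}$ in Lemma \ref{lemma:LIinsider}. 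No additional machinery beyond \eqref{eq:def_obj} and linearity of expectation is needed.
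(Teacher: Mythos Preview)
Your proposal is correct and follows essentially the same route as the paper, which simply states that the result ``directly follows from \eqref{eq:def_obj}'' together with the invariance of the maximizer; your argument just makes the substitution and linearity-of-expectation step explicit, and your emphasis that $a_{\pi,s}^*$ is independent of $v_D$ is exactly the right observation. Your remark on the sign of $\rho_D^{sa}$ is also well taken: the paper's one-line justification glosses over this, and the ``same optimal policy'' conclusion indeed needs $\rho_D^{sa}>0$ (or at least $\rho_D^{sa}\ge 0$) to go through, so you have, if anything, been more careful than the original.
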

\begin{remark}[\textbf{Policy Invariance}]
\label{remark:Policy Invariance}
Lemmas \ref{lemma:LIinsider} and \ref{lemma:LIdefender} show that linear utility transformation does not affect the optimal recommendation policy. 
The structures of an insider's incentive and the defender's security objective play more critical roles in compliance status and ASeL than their absolute values. 
\end{remark}

\subsection{Geometric Characterization of CT Sets, ASaL, and ASeL}
\label{sec:Geometric Characterization of CT Policy Sets}

We define the following notations for the matrix representations of recommendation policies and utilities in Section \ref{sec:Geometric Characterization of CT Policy Sets} and \ref{sec:Model-Agnostic}.  
Define $\hat{v}_p^k:=[b_X(x^1)\bar{v}_p(x^1,a^k), \allowbreak b_X(x^2)\bar{v}_p(x^2,a^k), \allowbreak \cdots, \allowbreak b_X(x^I)\bar{v}_p(x^I,a^k)]^T\in \mathbb{R}^{1\times I}, p\in \{D,U\}$, for all $k\in \mathcal{K}$. 
For each recommendation $s^k\in\mathcal{S}, k\in \mathcal{K}$, and the shorthand notation $\hat{\pi}^{k,i}:=\pi(s^k|x^i)$, we can define an $I$-dimension vector $\hat{\pi}^{k}=[\hat{\pi}^{k,1},\cdots,\hat{\pi}^{k,I}]\in \hat{\Pi}^k$. By definition, $\sum_{k=1}^K \hat{\pi}^{k}=[1,1,\cdots,1]\in \mathbb{R}^I$.  
Then, a recommendation policy $\pi\in \Pi$ has an equivalent matrix form as $\hat{\pi}:=[\hat{\pi}^{1},\cdots,\hat{\pi}^{K}]^T\in \hat{\Pi}$. 
Analogously, the $k$-th PT, $k$-th PU, and CT recommendation policies in matrix forms compose sets $\hat{\Pi}_{pt}^{k}$, $\hat{\Pi}_{pu}^{k}$, and $\hat{\Pi}_{ct}$, respectively. 
In Proposition \ref{proposition:separability}, we identify $\hat{\pi}^k\in\hat{\Pi}^k$ as the sufficient component of $\hat{\pi}\in \hat{\Pi}$ to determine the trustworthiness of recommendation $s^k\in \mathcal{S}$.

\begin{proposition}[\textbf{Minimal Sufficiency for Trustworthy Recommendations}]
\label{proposition:separability}
Policy vector $\hat{\pi}^k\in\hat{\Pi}^k$ is the minimal sufficient component of the policy matrix $\hat{\pi}\in \hat{\Pi}$ to determine the trustworthiness of recommendation $s^k\in \mathcal{S}$. 
\end{proposition}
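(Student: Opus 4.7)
The plan is to establish the two halves of ``minimal sufficiency'' separately: first that $\hat{\pi}^k$ is \emph{sufficient} to decide whether $s^k$ is trustworthy, and then that no proper sub-component of $\hat{\pi}^k$ is sufficient.

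For sufficiency, I would start from Definition \ref{def:trustworthS} and rewrite the trustworthiness condition in the matrix notation introduced just before the proposition. Specifically, $s^k$ is trustworthy under $\pi$ iff $a^k$ maximizes $\mathbb{E}_{x\sim b_X^\pi(\cdot|s^k)}[\bar{v}_U(x,a)]$. Substituting the Bayes rule $b_X^\pi(x|s^k)=b_X(x)\pi(s^k|x)/b_S^\pi(s^k)$ from Section~\ref{sec:insider's Belief Update and Best-Response Action} and clearing the positive normalizer $b_S^\pi(s^k)$ (the degenerate case $b_S^\pi(s^k)=0$ being vacuous, as $s^k$ is never issued), the condition becomes
\begin{equation*}
\sum_{x\in\mathcal{X}} b_X(x)\pi(s^k|x)\bigl[\bar{v}_U(x,a^k)-\bar{v}_U(x,a^l)\bigr]\ \geq\ 0,\quad \forall l\in\mathcal{K},
\end{equation*}
which is exactly constraint (c) of $P_\eta$. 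In the matrix form, this reads $\hat{\pi}^k\cdot(\hat{v}_U^k-\hat{v}_U^l)\geq 0$ for all $l\in\mathcal{K}$, involving only the row vector $\hat{\pi}^k$ and no other row of $\hat{\pi}$. Hence $\hat{\pi}^k$ suffices.

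For minimality, I would argue that no proper sub-component of $\hat{\pi}^k$ is sufficient by producing, for each coordinate $i_0\in\{1,\ldots,I\}$, an instance in which the trustworthiness of $s^k$ genuinely depends on $\hat{\pi}^{k,i_0}$. Concretely, I would pick $b_X$ with $b_X(x^{i_0})>0$ and choose an incentive $\bar{v}_U$ such that $\bar{v}_U(x^{i_0},a^k)-\bar{v}_U(x^{i_0},a^l)$ has a sign opposite to $\bar{v}_U(x,a^k)-\bar{v}_U(x,a^l)$ for the other $x\neq x^{i_0}$, with magnitudes chosen so that the sign of the sum flips as $\hat{\pi}^{k,i_0}$ varies over $[0,1]$. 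Two policies that agree on every coordinate except $\hat{\pi}^{k,i_0}$ can both be extended to valid stochastic matrices in $\hat{\Pi}$ by compensating in some other row $\hat{\pi}^{k',i_0}$ to preserve the column sum $\sum_{k'} \hat{\pi}^{k',i_0}=1$; since by sufficiency the trustworthiness of $s^k$ does not depend on $\hat{\pi}^{k'}$ for $k'\neq k$, this compensation is harmless. Hence dropping $\hat{\pi}^{k,i_0}$ leaves trustworthiness undetermined.

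The main obstacle is the minimality step, where one must carefully respect the simplex constraints defining $\hat{\Pi}$ while isolating the effect of a single coordinate. The observation that only the $k$-th row enters the trustworthiness condition decouples the construction: any adjustment needed to keep $\hat{\pi}$ stochastic can be absorbed in rows $k'\neq k$ without affecting whether $s^k$ is trustworthy, which is exactly what makes a clean single-coordinate perturbation feasible.
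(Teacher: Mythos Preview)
Your sufficiency argument is exactly the paper's proof: the paper simply observes (invoking \eqref{eq:optimalAction} and Definition~\ref{def:trustworthS}) that $s^k$ is trustworthy if and only if $\hat{\pi}^k[\hat{v}_U^k-\hat{v}_U^l]\geq 0$ for all $l\in\mathcal{K}$, i.e., the matrix form of constraint~(c) in $P_\eta$, which involves only the row $\hat{\pi}^k$. Your derivation via Bayes' rule and clearing the normalizer $b_S^\pi(s^k)$ is the same computation spelled out in a bit more detail.

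Where you differ is in the treatment of minimality. The paper's proof stops at the if-and-only-if characterization and does not explicitly argue that no proper sub-vector of $\hat{\pi}^k$ suffices; it leaves minimality implicit in the fact that all $I$ coordinates appear in the linear inequalities. Your explicit construction---choosing $\bar{v}_U$ so that the $i_0$-th term has the opposite sign from the others and then perturbing $\hat{\pi}^{k,i_0}$ while absorbing the column-sum correction in some row $k'\neq k$---is a genuine addition that makes the ``minimal'' claim rigorous. The decoupling observation you highlight (that rows $k'\neq k$ are irrelevant to the trustworthiness of $s^k$, so they can freely absorb the stochasticity adjustment) is exactly the right device, and it is also what the paper exploits downstream in Remark~\ref{remark:separable} and the learning algorithms. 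So your approach is the paper's approach for sufficiency, plus a minimality argument the paper omits; the extra work buys you a complete proof of the statement as phrased, at the cost of a short additional construction.
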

\begin{proof}
Based on \eqref{eq:optimalAction} and Definition \ref{def:trustworthS}, a recommendation $s^k\in \mathcal{S}, k\in \mathcal{K}$, is trustworthy if and only if $\hat{\pi}^k [\hat{v}_U^k -\hat{v}_U^l]\geq 0, \forall l\in \mathcal{K}$ (i.e., the matrix representation of constraint (c) in $P_\eta$). 
\end{proof}

Proposition \ref{proposition:separability} leads to the policy separability principle in Remark \ref{remark:separable}; i.e., the defender can design the $k$-th policy vector  $\hat{\pi}^k\in \hat{\Pi}^k$ separately for all $k\in \mathcal{K}$ to learn the $k$-th PT policy set.  
The policy separability contributes to efficient CT policy set learning algorithms in Section \ref{sec:Model-Agnostic}.  
We characterize an insider's ASaL, the convexity of the CT policy set, and  the defender's ASeL in Lemmas \ref{lemma:PWLC of the insider's EPU}-\ref{lemma:PWLC of the Defender's EPU}, respectively. 
Section \ref{sec:impacts of recommendation policies} illustrates these characterizations when $I = J = K = 2$. 

\begin{remark}[\textbf{Policy Separability}] 
\label{remark:separable}
The defender can determine the $k$-th PT policy set, i.e., $\hat{\Pi}_{pt}^k$, independently from other PT policy sets $\hat{\Pi}_{pt}^{k'}, \forall k'\in \mathcal{K}\setminus \{k\}$, to determine CT policy set $\hat{\Pi}_{ct}$. 
\end{remark}


\begin{lemma}[\textbf{PWL and Convex of ASaL}] 
\label{lemma:PWLC of the insider's EPU}
\textcolor{black}{For any recommendation policy $\hat{\pi}\in \hat{\Pi}$, an insider's} ASaL $J_U(\pi,b_{X}, \bar{v}_U)$ is PieceWise Linear (PWL) and convex in $\hat{\pi}^{k}\in \hat{\Pi}^{k}, \forall k\in \mathcal{K}$. 
\end{lemma}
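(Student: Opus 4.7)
The plan is to rewrite the ASaL in a form that isolates the dependence on each policy vector $\hat{\pi}^k$, and then to exploit the basic fact that a pointwise maximum of finitely many linear functions is piecewise linear and convex. The key algebraic identity I would use is the Bayes relation from (1), namely $b_S^{\pi}(s)\, b_X^{\pi}(x\mid s) = b_X(x)\,\pi(s\mid x)$, which lets me pull the recommendation probability inside the conditional expectation and thereby cancel the denominators coming from the posterior.

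First, I would start from the definition of ASaL in \eqref{eq:insiderEPU}, substitute the best-response characterization \eqref{eq:optimalAction} together with the reduction \eqref{eq:bareq}, and use the Bayes identity to obtain
\begin{equation*}
J_U(\pi,b_X,\bar v_U)
=\sum_{s^k\in\mathcal{S}} b_S^{\pi}(s^k)\,\max_{l\in\mathcal{K}}\sum_{x\in\mathcal{X}} b_X^{\pi}(x\mid s^k)\,\bar v_U(x,a^l).
\end{equation*}
Because $b_S^{\pi}(s^k)\geq 0$, I can push it inside the $\max$, and then, using $b_S^{\pi}(s^k) b_X^{\pi}(x\mid s^k)=b_X(x)\pi(s^k\mid x)$, rewrite this in matrix form as
\begin{equation*}
J_U(\pi,b_X,\bar v_U)=\sum_{k\in\mathcal{K}}\max_{l\in\mathcal{K}} \hat{\pi}^{k}\,\hat{v}_U^{l},
\end{equation*}
where $\hat{\pi}^k$ and $\hat{v}_U^l$ are as defined just before Proposition~\ref{proposition:separability}.

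Next, I would fix an index $k\in\mathcal{K}$ and view $J_U$ as a function of $\hat{\pi}^k$ with the remaining vectors $\{\hat{\pi}^{k'}\}_{k'\neq k}$ held fixed. The terms with index $k'\neq k$ contribute a constant in $\hat{\pi}^k$, so it suffices to analyze the single term $f_k(\hat{\pi}^k):=\max_{l\in\mathcal{K}}\hat{\pi}^{k}\hat{v}_U^{l}$. Each map $\hat{\pi}^k\mapsto \hat{\pi}^{k}\hat{v}_U^{l}$ is linear (in fact an inner product with the constant vector $\hat{v}_U^l$), so $f_k$ is the pointwise maximum of the $K$ linear functions $\{\hat{\pi}^{k}\hat{v}_U^{l}\}_{l\in\mathcal{K}}$. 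This immediately yields both claimed properties: piecewise linearity (the domain $\hat{\Pi}^k$ decomposes into finitely many polyhedral cells on each of which a single $l$ attains the maximum) and convexity (the pointwise supremum of affine functions is convex). The desired conclusion for $J_U$ then follows since adding a constant preserves both properties.

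The step that requires most care is justifying the interchange of $b_S^{\pi}(s^k)$ with the maximum and keeping the bookkeeping correct when $b_S^{\pi}(s^k)=0$ (so that the posterior $b_X^{\pi}(\cdot\mid s^k)$ is undefined); in that case the corresponding summand vanishes and may simply be omitted, consistent with the linear expression $\hat{\pi}^k \hat{v}_U^l$ becoming zero. Apart from this minor edge case, the argument is essentially a separability observation combined with the standard convexity of the max of linear functions, which is the same basic fact that underlies Proposition~\ref{proposition:separability}.
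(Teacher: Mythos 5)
Your proposal is correct and follows essentially the same route as the paper: both rewrite the ASaL as $\sum_{k\in\mathcal{K}}\max_{a\in\mathcal{A}}\sum_{x\in\mathcal{X}}b_X(x)\pi(s^k|x)\bar{v}_U(x,a)$ (your $\sum_k\max_l\hat{\pi}^k\hat{v}_U^l$ in matrix form) and then invoke the fact that a pointwise maximum of finitely many linear functions is PWL and convex, with sums preserving both properties. Your explicit justification of the cancellation via the Bayes identity and the treatment of the $b_S^{\pi}(s^k)=0$ edge case is a minor refinement the paper omits, not a different argument.
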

\begin{proof}
Following \eqref{eq:insiderEPU}, we can represent an insider's ASaL as $J_U(\pi,b_{X},\bar{v}_U)=\sum_{k\in \mathcal{K}}  \max_{a\in \mathcal{A}} [\sum_{x\in \mathcal{X}}b_X(x) \pi(s^k|x)\bar{v}_U(x,a)]$. 
Since 
$\sum_{x\in \mathcal{X}}b_X(x) \pi(s^k|x)\bar{v}_U(x,a)$ is an linear function in $\hat{\pi}^{k}\in \hat{\Pi}^{k}, \forall k\in \mathcal{K}$, and $a\in \mathcal{A}$, the point-wise maximum of a group of linear functions in \eqref{eq:insiderEPU} leads to a PieceWise Linear (PWL) and convex function concerning $\hat{\pi}^{k}\in \hat{\Pi}^{k}, \forall k\in \mathcal{K}$. Then, the sum of a group of PWL and convex functions remains PWL and convex.  
\end{proof}

Denote $\mathcal{C}_{l}^{k}:=\{\hat{\pi} \in \hat{\Pi} | \hat{\pi}^k [\hat{v}_U^l -\hat{v}_U^h] \geq 0, \forall h\in \mathcal{K}\}$ as the set of recommendation policies that induce action $a^l\in \mathcal{A}$ under recommendation $s^k\in \mathcal{S}$. 
Define $\mathcal{C}_{\{l_1,\cdots,l_K\}}:= \cap_{k=1}^K \mathcal{C}_{l_k}^{k}$ as the set of  recommendation policies that induce action $a^{l_k}\in \mathcal{A}$ under recommendation $s^k\in \mathcal{S}$ for all $k\in \mathcal{K}$.  
Within each (possibly empty) set $\mathcal{C}_{\{l_1,\cdots,l_K\}}, \forall l_1,\cdots,l_K\in \mathcal{K}$, we can represent the defender's ASeL in \eqref{eq:def_obj} equivalently as the matrix form $\hat{J}_D(\hat{\pi}, \hat{v}_D, v_U) := \sum_{k\in \mathcal{K}} \hat{\pi}^k \hat{v}_D^{l_k}, \ \forall \hat{\pi}\in \mathcal{C}_{\{l_1,\cdots,l_K\}}$. 

\begin{lemma}
\label{lemma:convex}
The $K^K$ sets $\mathcal{C}_{\{l_1,\cdots,l_K\}}, \forall l_1,\cdots,l_K\in \mathcal{K}$, are mutually exclusive and convex. 
The union of these sets composes the entire recommendation policy set, i.e., $\hat{\Pi}=\cup_{l_1,\cdots,l_K\in \mathcal{K}} \mathcal{C}_{\{l_1,\cdots,l_K\}}$. 
The $k$-th PT policy set $\mathcal{C}_{k}^{k}, \forall k\in \mathcal{K}$, and the CT policy set $\hat{\Pi}_{ct}=\mathcal{C}_{\{1,\cdots,K\}}= \cap_{k=1}^K \mathcal{C}_{k}^{k}$ are convex. 
\end{lemma}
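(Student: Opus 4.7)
My plan is to exploit the fact that every defining constraint of these sets is linear in the policy block $\hat{\pi}^k$, so convexity follows almost immediately, and then to use the best-response characterization in \eqref{eq:optimalAction} together with Proposition \ref{proposition:separability} to obtain the partition and CT-set identities.

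First I would prove convexity of each $\mathcal{C}_l^k$. By definition it is cut out of the polytope $\hat{\Pi}$ (itself a product of simplices, hence convex) by the $K$ linear half-space conditions $\hat{\pi}^k[\hat{v}_U^l-\hat{v}_U^h]\geq 0$, so it is a finite intersection of convex sets and therefore convex. Convexity of $\mathcal{C}_{\{l_1,\ldots,l_K\}}=\cap_{k=1}^K\mathcal{C}_{l_k}^k$ then follows immediately, since a finite intersection of convex sets is convex.

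Second I would establish the covering $\hat{\Pi}=\cup_{l_1,\ldots,l_K\in\mathcal{K}}\mathcal{C}_{\{l_1,\ldots,l_K\}}$. For any $\hat{\pi}\in\hat{\Pi}$ and any $k\in\mathcal{K}$, the insider's best-response problem \eqref{eq:optimalAction} admits at least one maximizer $a^{l_k^{*}}$; by the defining inequality of the arg-max, $\hat{\pi}^k[\hat{v}_U^{l_k^{*}}-\hat{v}_U^h]\geq 0$ for all $h\in\mathcal{K}$, so $\hat{\pi}\in\mathcal{C}_{l_k^{*}}^k$. Assembling these across $k$ places $\hat{\pi}\in\mathcal{C}_{\{l_1^{*},\ldots,l_K^{*}\}}$, which proves the union covers $\hat{\Pi}$.

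The mutual exclusivity claim is the delicate part and the main obstacle, because the defining inequalities are weak: on a boundary where the insider is indifferent between actions $a^l$ and $a^{l'}$, a single $\hat{\pi}$ lies in both $\mathcal{C}_l^k$ and $\mathcal{C}_{l'}^k$. The cleanest fix is to adopt the standard Bayesian-persuasion tie-breaking convention that the insider follows the recommendation whenever indifferent, which turns the induced-action map into a function of $\hat{\pi}$ and yields a genuine partition; alternatively, I would read ``mutually exclusive'' as ``interior-disjoint'', which follows because two distinct tuples $(l_1,\ldots,l_K)\neq(l_1',\ldots,l_K')$ differ in some coordinate $k$, and in the relative interior of $\mathcal{C}_{l_k}^k$ the strict inequality $\hat{\pi}^k[\hat{v}_U^{l_k}-\hat{v}_U^{l_k'}]>0$ precludes membership in $\mathcal{C}_{l_k'}^k$.

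Finally, for the CT identities and their convexity, Proposition \ref{proposition:separability} shows that recommendation $s^k$ is trustworthy under $\hat{\pi}$ iff $\hat{\pi}^k[\hat{v}_U^k-\hat{v}_U^h]\geq 0$ for all $h\in\mathcal{K}$, which is exactly the defining condition of $\mathcal{C}_k^k$. Hence the $k$-th PT set equals $\mathcal{C}_k^k$ and is convex by the first step. By Definition \ref{def:trustworthyPi}, the CT set is the intersection of all $K$ PT sets, i.e., $\hat{\Pi}_{ct}=\cap_{k=1}^K\mathcal{C}_k^k=\mathcal{C}_{\{1,\ldots,K\}}$, and its convexity then follows from the second step.
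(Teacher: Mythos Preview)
Your proposal is correct and follows the same line as the paper's proof: convexity of each $\mathcal{C}_l^k$ from its linear defining inequalities, convexity of intersections, and the identification $\hat{\Pi}_{ct}=\cap_{k=1}^K\mathcal{C}_k^k$ via Definition~\ref{def:trustworthyPi}. The paper's proof is much terser, simply asserting that convexity, mutual exclusiveness, and the union property ``directly follow'' from the definitions; you supply the details the paper omits, and in particular you correctly flag that mutual exclusivity in the strict sense fails on indifference boundaries unless a tie-breaking convention is imposed---a point the paper glosses over.
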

\begin{proof}
The convexity of set $\mathcal{C}_{l}^{k}$ directly follows its definition. 
The properties of the mutual exclusiveness and the union directly come from the definition of $\mathcal{C}_{\{l_1,\cdots,l_K\}}$. 
Definition \ref{def:trustworthyPi} leads to $\hat{\Pi}_{ct}=\cap_{k=1}^K \mathcal{C}_{k}^{k}$. 
Since the intersection of any collection of convex sets is convex, sets $\mathcal{C}_{\{l_1,\cdots,l_K\}}$ and $\hat{\Pi}_{ct}$ are convex. 
\end{proof}


\begin{lemma}[\textbf{PWL of ASeL}] 
\label{lemma:PWLC of the Defender's EPU}
\textcolor{black}{For any recommendation policy $\hat{\pi}\in \hat{\Pi}$, the defender's} ASeL $\hat{J}_D(\hat{\pi},\hat{v}_D, v_U)$ is (possibly discontinuous)  piecewise linear concerning $\hat{\pi}^{k}\in \hat{\Pi}^{k}, \forall k\in \mathcal{K}$. 
\end{lemma}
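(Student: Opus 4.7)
The plan is to leverage the partition of $\hat{\Pi}$ identified in Lemma \ref{lemma:convex}, on each piece of which $\hat{J}_D$ admits the explicit linear formula stated just above Lemma \ref{lemma:convex}, and then trace through the boundaries to explain where discontinuities can arise.

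First, I would fix any coordinate $k$ and consider $\hat{J}_D$ as a function of $\hat{\pi}^k$ with the other components $\{\hat{\pi}^{k'}\}_{k'\neq k}$ held fixed. By Lemma \ref{lemma:convex}, the full policy set decomposes as $\hat{\Pi}=\cup_{l_1,\ldots,l_K\in\mathcal{K}}\mathcal{C}_{\{l_1,\ldots,l_K\}}$ into $K^K$ mutually exclusive convex regions, and on each $\mathcal{C}_{\{l_1,\ldots,l_K\}}$ the matrix form
\[
\hat{J}_D(\hat{\pi},\hat{v}_D,v_U)=\sum_{k'\in\mathcal{K}}\hat{\pi}^{k'}\hat{v}_D^{l_{k'}}
\]
holds. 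Restricting the slice through a fixed configuration of the other components, the $k$-th summand $\hat{\pi}^{k}\hat{v}_D^{l_{k}}$ is linear in $\hat{\pi}^k$, while the remaining summands are constants. Intersecting each $\mathcal{C}_{\{l_1,\ldots,l_K\}}$ with this slice yields at most $K$ convex regions (indexed by the choice of $l_k\in\mathcal{K}$) that cover the slice, on each of which $\hat{J}_D$ is linear in $\hat{\pi}^k$. This delivers the piecewise linearity claim.

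Next, I would address the ``possibly discontinuous'' qualifier, which is the only subtle point. Unlike the ASaL in Lemma \ref{lemma:PWLC of the insider's EPU}, where a point-wise maximum over $a$ glues the linear pieces continuously, the defender's ASeL evaluates $\bar{v}_D(x,a_{\pi,s}^*)$ at the insider's best response. On the boundary between two adjacent regions $\mathcal{C}^{k}_{l}$ and $\mathcal{C}^{k}_{l'}$, the insider is indifferent between $a^{l}$ and $a^{l'}$ (the defining inequality $\hat{\pi}^k[\hat{v}_U^{l}-\hat{v}_U^{l'}]\geq 0$ is tight), so tie-breaking can select either action. Since $\hat{v}_D^{l}\neq \hat{v}_D^{l'}$ in general, the evaluated summand $\hat{\pi}^k \hat{v}_D^{l_k}$ can jump across the boundary, producing a discontinuity.

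The main obstacle is rigorously justifying the discontinuity comment, since the statement covers \emph{any} induced action and we must be careful that the matrix-form identity used above in fact reduces to the functional $\hat{J}_D$ defined via \eqref{eq:def_obj}. I would handle this by invoking the definition of $\mathcal{C}_{\{l_1,\ldots,l_K\}}$ together with \eqref{eq:optimalAction}: on the interior of each region, the best response to recommendation $s^{k'}$ is uniquely $a^{l_{k'}}$, so $\bar{v}_D(x,a_{\pi,s^{k'}}^*)=\bar{v}_D(x,a^{l_{k'}})$, and substitution into \eqref{eq:def_obj} produces exactly $\sum_{k'}\hat{\pi}^{k'}\hat{v}_D^{l_{k'}}$. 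On boundaries, arbitrary tie-breaking in \eqref{eq:optimalAction} selects one of the neighboring linear pieces, so $\hat{J}_D$ agrees with one of finitely many linear expressions at each point and is piecewise linear but possibly discontinuous, completing the proof.
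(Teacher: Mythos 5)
Your proposal is correct and follows essentially the same route as the paper's proof: both invoke the partition of $\hat{\Pi}$ into the $K^K$ mutually exclusive convex sets $\mathcal{C}_{\{l_1,\cdots,l_K\}}$ from Lemma \ref{lemma:convex} and observe that the matrix form $\sum_{k\in\mathcal{K}}\hat{\pi}^k\hat{v}_D^{l_k}$ is linear in each $\hat{\pi}^k$ on every piece. Your additional justification of the ``possibly discontinuous'' qualifier via tie-breaking of the best response on region boundaries is a useful elaboration of a point the paper leaves implicit, but it does not change the underlying argument.
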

\begin{proof}
Based on Lemma \ref{lemma:convex}, the entire recommendation policy set $\hat{\Pi}$ is divided into $K^K$ mutually exclusive (possibly empty) sets determined by an insider's incentive $v_U$.  
Within each set $\mathcal{C}_{\{l_1,\cdots,l_K\}}$, the defender's ASeL $\hat{J}_D(\hat{\pi}, \hat{v}_D, v_U)$ in matrix form is linear in $\hat{\pi}^{k}\in \hat{\Pi}^{k}, \forall k\in \mathcal{K}$. 
\end{proof}

\subsection{Optimal ACEL under Incentive Misalignment}
\label{sec:Characterizing Compliance under Misalignment}

We first classify the insiders into three incentive categories in Definition \ref{def:three category} based on the alignment of their incentives with the defender's security objective. 
Denote $\chi(\mathcal{K}):=[{\chi(1)},{\chi(2)},\cdots,{\chi(K)}]$ as a permutation of set $\mathcal{K}$, i.e., $\chi(k)\in \mathcal{K},\forall k\in \mathcal{K}$, and $\chi(k)\neq \chi(k')$ if $k\neq k',\forall k,k'\in \mathcal{K}$. 
\begin{definition} [\textbf{Incentive Categories}]
\label{def:three category}
Consider the defender with security objective ${v}_D$.  
An insider is categorized as amenable (resp. malicious) if he shares the same (resp. opposite) preference ranking with the defender concerning actions for each SP and AS; i.e., for any given $x\in \mathcal{X}, y\in \mathcal{Y}$, if ${v}_U(y,x,a^{\chi(1)})\geq {v}_U(y,x,a^{\chi(2)})\geq \cdots\geq {v}_U(y,x,a^{\chi(K)})$, then  ${v}_D(y,x,a^{\chi(1)})\geq (resp. \leq){v}_D(y,x,a^{\chi(2)})\geq (resp. \leq)\cdots\geq (resp. \leq) {v}_D(y,x,a^{\chi(K)})$. 
An insider is self-interested if he is neither amenable nor malicious. 
\end{definition}

An amenable insider has a strong sense of responsibility to enhance security and prioritizes security over convenience. 
A malicious insider, e.g., a disgruntled insider or an insider whose credentials have been stolen, can misbehave or sabotage corporate security on purpose. 
Self-interested insiders represent the  majority of insiders who are willing to follow security rules when there is no conflict of interests. 
Following Definition \ref{def:LT} and Remark \ref{remark:Policy Invariance}, linear transformations of a malicious, self-interested, or amenable insider's incentive do not change his incentive category. 
Lemma \ref{lemma:linearutility0} characterizes the optimal recommendation policy and the ACEL when an insider's incentive and the defender's security objective are linearly dependent. 

\begin{lemma}
\label{lemma:linearutility0}
Consider linearly dependent incentives of an insider and the defender with a scaling factor ${\rho}_{D,U}^{sa}\in \mathbb{R}$ and translation factors $[{\rho}_{D,U}^{tr}(y,x)\in \mathbb{R}]_{y\in \mathcal{Y}, x\in \mathcal{X}}$, i.e., ${v}_D(y,x,a)={\rho}_{D,U}^{sa} {v}_U(y,x,a)+{\rho}_{D,U}^{tr}(y,x)$ for all $y\in \mathcal{Y}, x\in \mathcal{X},  a\in\mathcal{A}$.
Then, the following two statements hold. 
\begin{enumerate}
    \item  $J_D^{acel,*}(b_{X},\bar{v}_D,\bar{v}_U)=0, \forall b_{X}\in \mathcal{B}_{X}$, if and only if ${\rho}_{D,U}^{sa}\leq 0$. Zero-information recommendation policy $\pi_z\in \Pi_{ct}$ achieves the optimal ACEL. 
    \item $J_D^{acel}(\pi,b_{X},\bar{v}_D,\bar{v}_U)\geq 0, \forall \pi\in \Pi, \forall b_{X}\in \mathcal{B}_{X}$,  if and only if ${\rho}_{D,U}^{sa} > 0$. 
    Full-information recommendation policy $\pi_f\in \Pi_{ct}$ achieves the optimal ACEL, and we have  $J_D^{acel,*}(b_{X},\bar{v}_D,\bar{v}_U)={\rho}_{D,U}^{sa} \sum_{x\in \mathcal{X}}b_X(x)  \max_{a\in\mathcal{A}} \bar{v}_U(x,a)-{\rho}_{D,U}^{sa} \max_{a\in \mathcal{A}} \sum_{x\in \mathcal{X}} b_{X}(x)  \bar{v}_U(x,a)$. 
\end{enumerate}
\end{lemma}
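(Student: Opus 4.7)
The proof plan leverages the observation that when $v_D$ is a linear transformation of $v_U$, the persuasion problem collapses onto the insider's own payoff, so the ACEL becomes proportional to the change in the insider's ASaL induced by the recommendation policy.

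First I would marginalize the linear relationship over $y$ using the conditional $b_{Y|X}$ (which is policy-invariant by Lemma \ref{remark:Conditional Belief Invariance}) to obtain $\bar{v}_D(x,a)=\rho_{D,U}^{sa}\bar{v}_U(x,a)+\bar{\rho}(x)$ with $\bar{\rho}(x):=\sum_{y\in\mathcal{Y}}b_{Y|X}(y|x)\rho_{D,U}^{tr}(y,x)$. Because $\bar{\rho}(x)$ does not depend on $a$, the insider's best response $a_{\pi,s}^{*}$ in \eqref{eq:optimalAction} is unaffected by the additive term. Substituting into the ASeL expression in \eqref{eq:def_obj} and unrolling the expectation over $s$ gives, for every $\pi\in\Pi$,
\begin{equation*}
J_D(\pi,b_X,\bar{v}_D,\bar{v}_U)=\rho_{D,U}^{sa}\,J_U(\pi,b_X,\bar{v}_U)+\sum_{x\in\mathcal{X}}b_X(x)\bar{\rho}(x),
\end{equation*}
where the second term is a constant independent of $\pi$. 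Subtracting the $\pi=\pi_z$ instance eliminates this constant and yields the central identity $J_D^{acel}(\pi,b_X,\bar{v}_D,\bar{v}_U)=\rho_{D,U}^{sa}\bigl[J_U(\pi,b_X,\bar{v}_U)-J_U(\pi_z,b_X,\bar{v}_U)\bigr]$.

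Next I would establish monotonicity of the insider's ASaL under any (not necessarily CT) recommendation policy: since $J_U(\pi,b_X,\bar{v}_U)=\sum_s b_S^{\pi}(s)\max_{a}\sum_x b_X^{\pi}(x|s)\bar{v}_U(x,a)$ is the expectation of a convex (max-of-linear) function of the posterior, Jensen's inequality (together with the martingale property of Bayesian posteriors, $\sum_s b_S^\pi(s)b_X^\pi(x|s)=b_X(x)$) gives $J_U(\pi,b_X,\bar{v}_U)\ge J_U(\pi_z,b_X,\bar{v}_U)$ for every $\pi\in\Pi$, with the upper bound $\sum_x b_X(x)\max_a \bar{v}_U(x,a)$ attained by the full-information policy $\pi_f$.

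With the identity and monotonicity in hand, both statements follow directly. For Statement 1, if $\rho_{D,U}^{sa}\le 0$ then $J_D^{acel}(\pi,\cdot)\le 0$ for every $\pi\in\Pi$; since $\pi_z\in\Pi_{ct}$ achieves ACEL $=0$, the optimum is attained at $\pi_z$ and equals zero for every $b_X$. For Statement 2, if $\rho_{D,U}^{sa}>0$ the identity makes $J_D^{acel}$ nonnegative for every $\pi$ and every $b_X$, and the maximum is attained at $\pi_f\in\Pi_{ct}$ (feasibility from Remark \ref{remark:CT Feasibility}), producing exactly the claimed closed-form expression once we substitute the upper bound for $J_U(\pi_f,\cdot)$ and the definition of $J_U(\pi_z,\cdot)=\max_{a}\sum_x b_X(x)\bar{v}_U(x,a)$.

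The only remaining work is the ``only if'' directions, which I would prove by contrapositive: picking a $b_X$ and a set $\{\bar{v}_U(x,\cdot)\}$ for which the Bayesian-persuasion gap $\sum_x b_X(x)\max_a\bar{v}_U(x,a)-\max_a\sum_x b_X(x)\bar{v}_U(x,a)$ is strictly positive (any non-degenerate instance in which the insider's preferred action genuinely depends on $x$), and noting that $\pi_f$ then produces ACEL with the sign of $\rho_{D,U}^{sa}$. This contradicts $J_D^{acel,*}\equiv 0$ when $\rho_{D,U}^{sa}>0$ and contradicts $J_D^{acel}\ge 0$ universally when $\rho_{D,U}^{sa}<0$. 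The main subtlety to watch for is that the identity holding for \emph{all} $\pi$ (not just CT policies) is what makes Statement 2 meaningful under arbitrary $\pi$; this is the step where I would be most careful, since naively one might only invoke Proposition \ref{proposition:win-win}, which is restricted to $\Pi_{ct}$.
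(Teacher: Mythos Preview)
Your argument is correct and takes a genuinely different route from the paper. The paper invokes the concavification technique of Kamenica--Gentzkow: it writes the ISeL $\tilde{J}_D(\pi_z,b_{Y,X},v_D,v_U)$ as a function of the joint prior $b_{Y,X}$, observes that this function is PWL convex (resp.\ concave) in $b_{Y,X}$ exactly when $\rho_{D,U}^{sa}>0$ (resp.\ $\le 0$), and then reads off the optimal ASeL as its concave closure. You instead derive the pointwise identity $J_D^{acel}(\pi,\cdot)=\rho_{D,U}^{sa}\bigl[J_U(\pi,\cdot)-J_U(\pi_z,\cdot)\bigr]$ and control the bracket directly via Jensen's inequality applied to the convex map $b\mapsto\max_a\sum_x b(x)\bar{v}_U(x,a)$ together with Bayesian plausibility. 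The two arguments are secretly the same convexity fact, but yours is more self-contained (no external theorem is cited) and, as you note, it delivers Statement~2 for \emph{all} $\pi\in\Pi$ in one line, whereas the concave-closure picture naturally speaks only about the optimum. Conversely, the paper's geometric viewpoint is what later drives Propositions~\ref{proposition:Maxaction0} and~\ref{proposition:pertuationNochange}, so its extra machinery is not wasted.

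One small point on the ``only if'' paragraph: $v_U$ is fixed in the lemma, so you cannot ``pick'' $\bar{v}_U$; what you mean is that the converse requires the given instance to be non-degenerate, i.e., that $a_U^{max}(x)$ genuinely varies with $x$ so the gap $\sum_x b_X(x)\max_a\bar{v}_U(x,a)-\max_a\sum_x b_X(x)\bar{v}_U(x,a)$ is strictly positive for some $b_X$. The paper's proof glosses over this same caveat, so you are in good company, but phrase it as a non-degeneracy hypothesis on $v_U$ rather than a choice.
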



\begin{proof}
Under $\pi_z\in \Pi_{ct}$ and the linear dependency condition, the defender's ASeL in \eqref{eq:def_obj} becomes
$
        \tilde{J}_D(\pi_z,b_{Y,X}, {v}_D,{v}_U) \allowbreak
        = \allowbreak
        \sum_{y\in \mathcal{Y},x\in \mathcal{X}} b_{Y,X}(y,x)  \allowbreak [\rho_{D,U}^{sa} {v}_U(y,x,a_0) 
        +\rho_{D,U}^{tr}(y,x)] = \rho_{D,U}^{sa}\max_{a\in \mathcal{A}} \sum_{y\in \mathcal{Y},x\in \mathcal{X}} b_{Y,X}(y,x)  {v}_U(y,x,a)  \allowbreak
        + 
        \sum_{y\in \mathcal{Y}, x\in \mathcal{X}} b_{Y,X}(y,x) \rho_{D,U}^{tr}(y,x)
$. 
Based on the concavification technique in \cite{kamenica2011bayesian},  $\tilde{J}_D(\pi^*,b_{Y,X},{v}_D,{v}_U)$ is the concave closure of $\tilde{J}_D(\pi_z,b_{Y,X},{v}_D,{v}_U)$ over $b_{Y,X}\in \mathcal{B}_{Y,X}$. 
Since $\max_{a\in \mathcal{A}} \sum_{y\in \mathcal{Y}}\sum_{x\in \mathcal{X}} b_{Y,X}(y,x)  {v}_U(y,x,a) $ is PWL and convex concerning $b_{Y,X}(y,x), \forall y\in \mathcal{Y}, x\in \mathcal{X}$, we obtain that $\tilde{J}_D(\pi_z,b_{Y,X},{v}_D,{v}_U)$ is PWL and convex (resp. concave) in $b_{Y,X}\in \mathcal{B}_{Y,X}$ if and only if ${\rho}_{D,U}^{sa}>0$ (resp. ${\rho}_{D,U}^{sa}\leq 0$). 
Then, the convex hull of a concave function is itself, and the optimal ACEL equals $0$. 
Meanwhile, the convex hull of a convex function depends only on the vertices of the convex set $\mathcal{B}_{Y,X}$, i.e., ${v}_D(y,x,\tilde{a}^{max}(y,x)), \forall x\in \mathcal{X}, y\in \mathcal{Y}$. 
We arrive at the result: {\small $\tilde{J}_D(\pi^*, b_{Y,X},{v}_D,{v}_U)=\sum_{x\in \mathcal{X}}b_{X}(x) \bar{v}_D(x,a^{max}_U(x))
= {\rho}_{D,U}^{sa}  \sum_{x\in \mathcal{X}}b_{X}(x) \max_{a\in\mathcal{A}} \bar{v}_U(x,a)+\sum_{y\in \mathcal{Y},x\in \mathcal{X}}b_{Y,X}(y,x) {\rho}_{D,U}^{tr}(y,x)$}, under the optimal recommendation policy $\pi_f\in \Pi_{ct}$. 
\end{proof}


\begin{remark}
\label{remark:nonlinear}
Since a recommendation policy $\pi\in \Pi$ has impact on $b_{Y,X}^{\pi}$ as shown in Lemma \ref{remark:Conditional Belief Invariance}, the incentive $\bar{v}_D$ is not a constant as $b_Y$ changes. 
Thus, $\tilde{J}_D(\pi^*,b_{Y,X},{v}_D,{v}_U)$ is linear in $b_{Y,X}\in \mathcal{B}_{Y,X}$ but not $b_X\in \mathcal{B}_X$ (or $b_Y\in \Delta \mathcal{Y}$) in general. 
If mapping $\psi\in \Psi$ is non-stochastic, then ZETAR degenerates to the Bayesian persuasion model in \cite{kamenica2011bayesian}, and $J_D(\pi^*,b_X,\bar{v}_D,\bar{v}_U)$ is linear in $b_X\in \mathcal{B}_X$ (or $b_Y\in \Delta \mathcal{Y}$). 
\end{remark}

According to Definition \ref{def:three category}, an insider is amenable if ${\rho}_{D,U}^{sa}>0$ and malicious if ${\rho}_{D,U}^{sa}\leq 0$. 
Therefore, Lemma \ref{lemma:linearutility0} provides a closed-form solution of the optimal ACEL for malicious and amendable insiders. 
We extend the discussion on the optimal ACEL concerning amenable and malicious insiders in  Proposition \ref{proposition:Maxaction0} and \ref{proposition:pertuationNochange}, respectively. For an amendable insider, Proposition \ref{proposition:Maxaction0} shows that within the entire action preference, the optimal action is the decisive factor. 

\begin{proposition} 
\label{proposition:Maxaction0}
If the incentives of an insider and the defender share the same optimal action $\tilde{a}^{max}(y,x)\in \mathcal{A}$ for each AS $x\in\mathcal{X}$ and SP $y\in \mathcal{Y}$,  
then for all $b_{Y,X}\in \mathcal{B}_{Y,X}$, full-information recommendation policy $\pi_f\in \Pi_{ct}$ achieves the optimal ACEL, and
\begin{equation}
\label{eq:maxTrans}
        J_D^*(b_{X},\bar{v}_D,\bar{v}_U)= J_D^*(b_{X},\bar{v}_U,\bar{v}_U)
    -\sum_{x\in \mathcal{X}} b_{X}(x) \delta(x), 
\end{equation} 
where $J_D^*(b_X,\bar{v}_U,\bar{v}_U)=\sum_{x\in \mathcal{X}}b_X(x) \max_{a\in \mathcal{A}} \bar{v}_U(x,a)$ and 
$\delta(x):=\bar{v}_U(x,a^{max}(x) ) -\bar{v}_D(x,a^{max}(x)), \forall x\in \mathcal{X}$. 
Moreover, $J_D^{acel}(\pi,b_{X},\bar{v}_D,\bar{v}_U)\geq 0, \forall \pi\in \Pi, b_X\in \mathcal{B}_X$. 
\end{proposition}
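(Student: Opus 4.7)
The plan is three-fold: (i) verify $\pi_f \in \Pi_{ct}$, (ii) evaluate the defender's ASeL under $\pi_f$ and match it to the claimed identity, and (iii) establish that $\pi_f$ attains the supremum over all recommendation policies. For step (i), recall $\pi_f(s^k|x) = \mathbf{1}_{\{a^k = a_U^{max}(x)\}}$ from Definition \ref{def:zeroFullinfo}, so the posterior $b_X^{\pi_f}(\cdot|s^k)$ is supported on $\{x \in \mathcal{X}: a_U^{max}(x) = a^k\}$. On this support, $a^k$ pointwise maximizes $\bar{v}_U(x, \cdot)$, hence $\sum_x b_X^{\pi_f}(x|s^k) \bar{v}_U(x, a^k) \geq \sum_x b_X^{\pi_f}(x|s^k) \bar{v}_U(x, a)$ for every $a \in \mathcal{A}$. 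By Definition \ref{def:trustworthS}, $s^k$ is trustworthy, so $\pi_f \in \Pi_{ct}$.

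For step (ii), since $\pi_f$ is CT, the insider takes $a_U^{max}(x)$ whenever $x$ is realized, giving $J_D(\pi_f, b_X, \bar{v}_D, \bar{v}_U) = \sum_x b_X(x) \bar{v}_D(x, a_U^{max}(x))$. Using $J_D^*(b_X, \bar{v}_U, \bar{v}_U) = \sum_x b_X(x) \bar{v}_U(x, a_U^{max}(x))$ together with $\delta(x) = \bar{v}_U(x, a_U^{max}(x)) - \bar{v}_D(x, a_U^{max}(x))$, direct substitution yields the claimed formula \eqref{eq:maxTrans}.

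For step (iii), I would mirror the concavification strategy used in the proof of Lemma \ref{lemma:linearutility0}. Viewed as a function of $b_{Y,X}$, the initial-compliance payoff $\tilde{J}_D(\pi_z, b_{Y,X}, v_D, v_U)$ is piecewise linear; the shared-optimal hypothesis ensures that at each vertex $\delta_{(y,x)}$ of $\mathcal{B}_{Y,X}$ the insider takes $\tilde{a}^{max}(y,x)$, at which $v_D(y,x, \tilde{a}^{max}(y,x)) = \max_a v_D(y,x,a)$. Invoking Lemma \ref{remark:Conditional Belief Invariance}, which fixes $b_{Y|X}$ across every recommendation policy, the feasible splittings of $b_{Y,X}$ reduce to splittings of $b_X$, and the resulting concave-closure maximum is attained at the $x$-level extreme points reached by $\pi_f$. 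The non-negativity $J_D^{acel}(\pi, b_X, \bar{v}_D, \bar{v}_U) \geq 0$ then follows from Remark \ref{remark:CT Feasibility} together with the win-win reasoning behind Proposition \ref{proposition:win-win} applied to the defender-side objective.

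The principal obstacle lies in step (iii): one must carefully argue that the concave-closure upper bound is indeed attained by $\pi_f$, despite the fact that the $(y,x)$-level optimizer $\tilde{a}^{max}(y,x)$ can depend on $y$ while the recommendation policy depends only on $x$. Reconciling these two levels by exploiting the piecewise-linear structure of $\tilde{J}_D(\pi_z, \cdot, v_D, v_U)$ identified in Lemma \ref{lemma:PWLC of the Defender's EPU}, and verifying that the gap $\delta(x)$ captures precisely the cost of the defender being unable to distinguish $y$, is the delicate part of the argument.
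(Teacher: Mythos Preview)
Your steps (i) and (ii) are correct and in fact more direct than the paper's route to formula \eqref{eq:maxTrans}. The substantive differences and gaps are in step (iii) and the non-negativity claim.

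The paper's proof rests on a device you do not invoke: the action-independent translation $\bar{v}_D^{eq}(x,a):=\bar{v}_D(x,a)+\delta(x)$. By Lemma~\ref{lemma:LIdefender} this shifts $J_D(\pi,b_X,\bar{v}_D,\bar{v}_U)$ by $\sum_x b_X(x)\delta(x)$ uniformly in $\pi$ and preserves the optimal policy. Since $\bar{v}_D^{eq}$ now agrees with $\bar{v}_U$ at $a^{max}(x)$, the paper can invoke the vertex argument from Lemma~\ref{lemma:linearutility0} (where $v_D^0=v_U$ and the concave closure depends only on the values at the optimizers) to obtain $J_D^*(b_X,\bar{v}_D^{eq},\bar{v}_U)=J_D^*(b_X,\bar{v}_U,\bar{v}_U)$, and then undo the translation. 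Your direct concavification sketch aims at the same endpoint, but as you yourself flag, the unrestricted concave closure over $\mathcal{B}_{Y,X}$ is governed by the $(y,x)$-vertices while $\pi_f$ resolves only $x$; without the translation trick you have no mechanism to close that gap. Lemma~\ref{lemma:PWLC of the Defender's EPU}, which you cite, concerns piecewise linearity in $\hat{\pi}$ rather than in $b_{Y,X}$ and does not help here.

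Your argument for $J_D^{acel}(\pi,\cdot)\geq 0$ for \emph{all} $\pi\in\Pi$ is incorrect. Remark~\ref{remark:CT Feasibility} yields only $J_D^{acel,*}\geq 0$, and Proposition~\ref{proposition:win-win} is about the insider's ASaL, not the defender's ACEL; its proof relies on constraint (c) of $P_\eta$, which constrains $\bar{v}_U$, not $\bar{v}_D$, so the ``win-win reasoning applied to the defender-side objective'' does not transfer. The paper obtains the universal inequality again through the translation: statement 2 of Lemma~\ref{lemma:linearutility0} gives $J_D^{acel}(\pi,b_X,\bar{v}_D^{eq},\bar{v}_U)\geq 0$ for every $\pi$, and Lemma~\ref{lemma:LIdefender} carries this back to $\bar{v}_D$.
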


\begin{proof}
Based on Lemma \ref{lemma:linearutility0}, if we construct ${v}_D^0:={v}_U$, then $\tilde{J}_D(\pi_z,b_{Y,X},{v}_D^0,{v}_U)$ is PWL and convex in $b_{Y,X}\in \mathcal{B}_{Y,X}$, and $\tilde{J}_D(\pi^*,b_{Y,X},{v}_D^0,{v}_U)$ only depends on ${v}_D^0(y,x,\tilde{a}^{max}(y,x)), \forall x\in \mathcal{X},y\in \mathcal{Y}$, for all $b_{Y,X}\in \mathcal{B}_{Y,X}$.  
Thus, we can construct ${v}_D^1$ from ${v}_D^0$ to make  $\tilde{J}_D(\pi^*,b_{Y,X},{v}_D^{1},{v}_U)=\tilde{J}_D(\pi^*,b_{Y,X},{v}_D^{0},{v}_U)$ as long as  ${v}_D^{1}(y,x,\tilde{a}^{max}(y,x))={v}_D^{0}(y,x,\tilde{a}^{max}(y,x)), \forall y\in \mathcal{Y}, x\in \mathcal{X}$. 

If an insider with incentive ${v}_U$ and the defender with security objective ${v}_D$ prefer the same optimal action for each AS and SP, we can construct  $\bar{v}_D^{eq}(x,a):=\bar{v}_D(x,a)+\delta(x)$ such that $\bar{v}_D^{eq}(x,a^{max}(x))=\bar{v}_U(x,a^{max}(x)), \forall x\in \mathcal{X}$. 
Then, $J_D^*(b_{X},\bar{v}_D^{eq},\bar{v}_U)=J_D^*(b_{X},\bar{v}_U,\bar{v}_U)$. 
Based on Lemma \ref{lemma:LIdefender},  $J_D^*(b_{X},\bar{v}_D^{eq},\bar{v}_U)=J_D^*(b_{X},\bar{v}_D,\bar{v}_U)+\sum_{x\in \mathcal{X}} \delta(x)$, which leads to \eqref{eq:maxTrans}. 
Since $J_D(\pi, b_{X},\bar{v}_D^{eq},\bar{v}_U)\geq J_D(\pi_z,b_{X},\bar{v}_D^{eq},\bar{v}_U), \forall \pi\in \Pi, b_X\in \mathcal{B}_X$, based on Lemma \ref{lemma:linearutility0}, Lemma \ref{lemma:LIdefender} yields $J_D(\pi,b_{X},\bar{v}_D,\bar{v}_U)\geq J_D(\pi_z, b_{X},\bar{v}_D,\bar{v}_U)$. 
\end{proof}

\begin{remark}[\textbf{Sufficiency under Aligned Action Preference}]
Proposition \ref{proposition:Maxaction0} shows that if an insider and the defender share the same optimal action $\tilde{a}^{max}(y,x)\in \mathcal{A}$ for each AS $x\in\mathcal{X}$ and SP $y\in \mathcal{Y}$, then ${v}_D(y,x,\tilde{a}^{max}(y,x)), \forall x\in \mathcal{X}, y\in \mathcal{Y}$, are the minimal sufficient component of the defender's security objective ${v}_D$ to determine the defender's optimal ASeL. 
\end{remark}

\begin{remark}[\textbf{Simplified ZETAR Problem}]
\label{remark:Degeneration of the Principal-Agent Problem0}
When ${v}_D={v}_U$, the principal-agent problem $\tilde{J}_D(\pi^*,b_{Y,X},{v}_U,{v}_U)$ is equivalent to a single-agent decision problem that directly solves $\sum_{x\in \mathcal{X}}b_X(x) \max_{a\in \mathcal{A}} \bar{v}_U(x,a)$. Thus, Proposition \ref{proposition:Maxaction0} contributes to an efficient computation of the defender's optimal ASeL when she shares the same $\tilde{a}^{max}$ with an insider. 
\end{remark}

\begin{remark}[\textbf{Full Information Disclosure to Amendable Insiders}]
\label{remark:full info0}
The defender should share full information (i.e., adopt $\pi_f\in \Pi_{ct}$) with amendable insiders. 
By synchronizing information with compliant insiders, the defender encourages amendable insiders to contribute to corporate security.  
\end{remark}

For malicious insiders, we first introduce a class of invariant perturbations of the defender's security objective that achieve the same optimal ACEL of $J_D^{acel,*}(b_{X},\bar{v}_D,\bar{v}_U)=0$ in Proposition \ref{proposition:pertuationNochange}. 
Define shorthand notation $a^{min}(y,x)\in \text{arg}min_{a\in \mathcal{A}}{v}_D(y,x,a)$ for all $x\in \mathcal{X}, y\in \mathcal{Y}$. 

\begin{proposition}[\textbf{Compliance Equivalency under Security Objective Perturbations}]
\label{proposition:pertuationNochange}
We construct ${v}_D^{ip}$ as a copy of $\bar{v}_D$ with the following revision: for each $x^i\in \mathcal{X}, i\in \mathcal{I}, y\in \mathcal{Y}$, if $a^{min}(y,x^j)\neq a^{min}(y,x^i)$, then ${v}_D^{ip}(y,x^j,a^{min}(y,x^i))\leq {v}_D(y,x^j,a^{min}(y,x^i)), \forall j\in \mathcal{I}\setminus \{i\}$. 
If there exist ${\rho}_{D,U}^{sa}<0$ and ${\rho}_{D,U}^{tr}(y,x)\in \mathbb{R}$ such that ${v}_D(y,x,a)={\rho}_{D,U}^{sa} {v}_U(y,x,a)+{\rho}_{D,U}^{tr}(y,x)$ for all $y\in \mathcal{Y}, x\in \mathcal{X},  a\in\mathcal{A}$, then $J_D^*(b_{X},\bar{v}_D^{ip},\bar{v}_U)=J_D^*(b_{X},\bar{v}_D,\bar{v}_U)$. 
\end{proposition}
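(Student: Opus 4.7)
The plan is to mirror the strategy of Proposition \ref{proposition:Maxaction0}: combine the tight characterization from Lemma \ref{lemma:linearutility0} with an equivalent-utility construction to pin down $J_D^*$ on both sides. Since $\rho_{D,U}^{sa} < 0$, Lemma \ref{lemma:linearutility0}(1) yields
\[
J_D^*(b_X, \bar{v}_D, \bar{v}_U) = J_D(\pi_z, b_X, \bar{v}_D, \bar{v}_U) = \sum_{y, x} b_{Y, X}(y, x)\, v_D(y, x, a_0),
\]
where $a_0 \in \arg\max_a \mathbb{E}_{b_{Y, X}}[v_U(y, x, a)]$ is the initial-compliance action. Because $v_U$ is untouched by the construction of $v_D^{ip}$, the insider's best-response under every $\pi \in \Pi$ is identical under $v_D$ and $v_D^{ip}$, so the entire CT decomposition $\{\mathcal{C}_{\{l_1,\dots,l_K\}}\}$ from Lemma \ref{lemma:convex} transfers verbatim.

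For the easy direction, the perturbation gives $v_D^{ip}(y, x, a) \leq v_D(y, x, a)$ pointwise with equality outside the listed coordinates, and combined with the invariance of the induced action this yields $J_D(\pi, b_X, \bar{v}_D^{ip}, \bar{v}_U) \leq J_D(\pi, b_X, \bar{v}_D, \bar{v}_U)$ for every $\pi$, hence $J_D^*(b_X, \bar{v}_D^{ip}, \bar{v}_U) \leq J_D^*(b_X, \bar{v}_D, \bar{v}_U)$. For the reverse bound I would introduce an auxiliary objective $\bar{v}_D^{eq}$ that remains linearly dependent on $v_U$ with the same negative slope $\rho_{D,U}^{sa}$ but with translation factors adjusted so that $v_D^{eq}(y, x, a_0) = v_D^{ip}(y, x, a_0)$ at every $(y, x)$. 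Applying Lemma \ref{lemma:linearutility0}(1) to $v_D^{eq}$ yields $J_D^*(b_X, \bar{v}_D^{eq}, \bar{v}_U) = \sum_{y,x} b_{Y,X}(y, x)\, v_D^{ip}(y, x, a_0)$, and Lemma \ref{lemma:LIdefender} together with the policy-separability principle (Remark \ref{remark:separable}) lets me transfer the optimality of $\pi_z$ from $v_D^{eq}$ to $v_D^{ip}$ coordinate by coordinate. The key observation enabling the transfer is that each perturbed coordinate $(y, x^j, a^{min}(y, x^i))$ with $a^{min}(y, x^j) \neq a^{min}(y, x^i)$ is only activated by CT components $\hat{\pi}^k$ that already belong to the regions $\mathcal{C}_{\{l_1,\dots,l_K\}}$ strictly dominated by $\pi_z$ in the unperturbed problem.

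The main obstacle lies in that last verification: when $a_0$ coincides with $a^{min}(y, x^{i_0})$ at some AS $x^{i_0}$, the perturbation at $(y, x^j, a_0)$ for $j \neq i_0$ is permitted by the construction and touches exactly the coordinates on which $J_D(\pi_z, \bar{v}_D^{ip})$ is evaluated, so the direct identity $J_D(\pi_z, \bar{v}_D^{ip}) = J_D(\pi_z, \bar{v}_D)$ can fail. Reconciling this requires exhibiting an alternative CT policy $\pi^{\dagger} \in \Pi_{ct}$ that redistributes recommendation mass across AS values so the induced action at each $(y, x^j)$ bypasses the perturbed entries while still realizing $J_D^*(b_X, \bar{v}_D, \bar{v}_U)$ in expectation; the construction will exploit the convexity of $\Pi_{ct}$ from Lemma \ref{lemma:convex} and the piecewise-linear structure of $\hat{J}_D$ from Lemma \ref{lemma:PWLC of the Defender's EPU} to glue together the local maximizers on neighboring regions.
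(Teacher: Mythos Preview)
Your easy direction ($J_D^*(\bar v_D^{ip})\le J_D^*(\bar v_D)$) is fine, and you correctly spot the real difficulty: once $a_0=a^{\min}(y,x^{i_0})$ for some $i_0$, the perturbation hits exactly the coordinates evaluated by $\pi_z$, so $J_D(\pi_z,\bar v_D^{ip})<J_D(\pi_z,\bar v_D)$ and $\pi_z$ is \emph{not} the maximizer for $v_D^{ip}$. The paper says this explicitly (``$\pi^*$ can change and does not contain zero information''). What is missing in your plan is a working resolution of that obstacle. The appeal to Remark~\ref{remark:separable} is a misuse: policy separability says that whether $s^k$ is trustworthy depends only on $\hat\pi^k$; it says nothing about transferring the \emph{optimality} of a policy from one defender objective to another. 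Likewise, the sketch for $\pi^\dagger$ (``glue together local maximizers on neighboring regions using convexity of $\Pi_{ct}$ and PWL of $\hat J_D$'') does not explain why the glued policy attains the value $J_D^*(\bar v_D)$, nor why the induced actions avoid every perturbed entry simultaneously across all $(y,x^j)$.

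The paper's route sidesteps this entirely by working in \emph{belief space} rather than policy space, via the concavification device already used in the proofs of Lemma~\ref{lemma:linearutility0} and Proposition~\ref{proposition:Maxaction0}: $\tilde J_D(\pi^*,b_{Y,X},\cdot,v_U)$ is the concave closure of $b_{Y,X}\mapsto \tilde J_D(\pi_z,b_{Y,X},\cdot,v_U)$. For $v_D$ with $\rho_{D,U}^{sa}<0$ this map is PWL concave, hence equal to its own closure. The perturbation to $v_D^{ip}$ lowers the linear piece associated with action $a^{\min}(y,x^i)$ only at coordinates $(y,x^j)$ where that action is \emph{not} the insider's maximizer, i.e.\ away from the vertex $e_{(y,x^i)}$ where that piece is active; at every simplex vertex $e_{(y,x)}$ the value is untouched. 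Geometrically this pushes pieces of $\tilde J_D(\pi_z,\cdot,v_D^{ip},v_U)$ downward without moving the ``ridge'' that determines the concave closure, so the closure---and hence $J_D^*$---is unchanged. Recasting the problem this way automatically produces the non-$\pi_z$ optimal policy (as the belief splitting realizing the closure), which is exactly the object you were trying to build by hand as $\pi^\dagger$.
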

\begin{proof}
Lemma \ref{lemma:linearutility0} shows that  $\tilde{J}_D(\pi^*,b_{Y,X},{v}_D,{v}_U)$ as the concave closure of $\tilde{J}_D(\pi_z,b_{Y,X},{v}_D,{v}_U)$ is PWL and concave in $b_{Y,X}\in \mathcal{B}_{Y,X}$ if ${\rho}_{D,U}^{sa}<0$. 
Based on the geometry, changing ${v}_D$ to ${v}_D^{ip}$ does not affect the concave closure (yet $\pi^*\in \Pi$ can change and does not contain zero information), i.e.,  $J_D^*(b_{X},\bar{v}_D^{ip},\bar{v}_U)=J_D^*(b_{X},\bar{v}_D,\bar{v}_U)$. 
\end{proof}

Remark \ref{remark:full info0} provides the defender with the guidance of full-information disclosure to amendable insiders. 
Based on Lemma \ref{lemma:linearutility0} and Proposition \ref{proposition:Maxaction0}, it is natural to conjecture that the defender should disclose zero information to malicious insiders. 
However, it does not hold, and we present a counterexample in Proposition \ref{proposition:pertuationNochange}; i.e., although an insider with incentive $v_U$ is malicious to both the defender with security objective ${v}_D$ and the one with ${v}_D^{ip}$, zero information recommendation policy is not the optimal policy for the defender with ${v}_D^{ip}$. 
Thus, Proposition \ref{proposition:pertuationNochange} leads to the strategic information disclosure guideline in Remark \ref{remark:strategic info0}. 
It further shows that ZETAR can improve an insider's compliance even if he is malicious based on the incentive categorization in Definition \ref{def:three category}.  

\begin{remark}[\textbf{Strategic Information Disclosure to Malicious Insiders}]
\label{remark:strategic info0}
The defender should disclose information strategically rather than hide information (i.e., adopting $\pi_z\in \Pi$) even when the insider is malicious and tends to take an action that results in the least utility to the defender. 
\end{remark}

\section{Feedback Design for Unknown Incentives}
\label{sec:Model-Agnostic}

When the defender knows an insider's incentive $v_U$, we can use primal and dual convex programs in Section \ref{sec:Computational Framework of ZETAR} to compute the optimal recommendation policy $\pi^*_{\eta}$ for a given LoRC $\eta \in \mathbb{R}^{0+}$. 
In practice, however, insiders' incentives usually remain unknown to the defender, and the defender may not be able to formulate constraint (c) in $P_\eta$ to determine the CT policy set. 
To this end, we provide a feedback design approach in Section \ref{sec:Model-Agnostic} for the defender to learn the optimal recommendation policy based on the insiders' responses to recommendations, as shown in Fig. \ref{fig:ScenarioDiag}. 
In the proposed learning algorithms, the defender needs no prior knowledge nor trust in an insider's incentives. 
The zero-trust audit of all insiders provides the defender with their behaviors to learn incentives. 

A straightforward feedback learning paradigm optimizes the defender's ASeL $J_D(\pi,b_X,\bar{v}_D,\bar{v}_U)$ over all recommendation policies in set $\hat{\Pi}$ directly. 
For a new insider with an unknown incentive, the defender at stage $m\in \{1,2,\cdots\}$  recommends actions according to a recommendation policy $\hat{\pi}_m\in \hat{\Pi}$. 
Then, the defender observes the insider's responses to these recommendations and evaluates her ASeL.  
At stage $m+1$, the defender uses her ASeL evaluation to update the recommendation policy from $\hat{\pi}_m\in \hat{\Pi}$ to $\hat{\pi}_{m+1}\in \hat{\Pi}$. 
The update rule depends on bespoke optimization methods (e.g., simulated annealing, Bayesian optimization, and reinforcement learning). 
The above learning paradigm is universal yet inefficient and does not guarantee global optimality. 
In Algorithms \ref{algorithm1} and \ref{algorithm2}, we design efficient feedback learning algorithms by exploiting the ZETAR features characterized in Section \ref{sec:characterize two}. 
In particular, the defender only learns the CT policy set $\hat{\Pi}_{ct}$ and then uses the primal convex program $P_\eta$ in Section \ref{sec:Computational Framework of ZETAR} to compute the optimal recommendation policy $\hat{\pi}^*_\eta$ and the optimal ACEL $J_D^{acel,*}$. 
The defender can achieve it as she knows her security objective $v_D$ to compute the objective function in $P_\eta$. 

Based on Definition \ref{def:trustworthyPi}, we only need to learn all the PT policy sets $\hat{\Pi}_{pt}^k, \forall k\in \mathcal{K}$, to determine the CT policy set. 

Following the matrix representation in Section \ref{sec:Geometric Characterization of CT Policy Sets}, the $k$-th row vector $\hat{\pi}^k\in \hat{\Pi}^k$ of a recommendation policy $\hat{\pi}\in \hat{\Pi}$ can be equivalently represented a point, denoted by $(p_k^1,\cdots,p_k^I)$, in the unit hypercube of dimension $I$, where the coordinate $p_k^i=\hat{\pi}^{k,i} \in [0,1], \forall k\in \mathcal{K},i\in \mathcal{I}$. 
We refer to a point in the $k$-th hypercube as a $k$-th PT point if it represents the $k$-th row vector $\hat{\pi}^k$ of a $k$-th PT recommendation policy $\hat{\pi}\in \hat{\Pi}_{pt}^k$.
Since the $k$-th row $\hat{\pi}^k$ of $\hat{\pi}$ is sufficient to determine whether  $\hat{\pi}$ is PT based on Proposition \ref{proposition:separability}, learning the $k$-th PT set $\hat{\Pi}_{pt}^k$ is equivalent to determining the region formulated by the $k$-th PT points. 
We refer to the region as the $k$-th PT region, which is a convex polytope in the $k$-th hypercube of dimension $I$ based on Lemma \ref{lemma:convex}. 
Since a convex polytope can be uniquely represented by its vertices, we develop the following two algorithms to obtain the vertex representation (V-representation) of these regions. 
Due to the policy separability principle in Remark \ref{remark:separable}, we can determine the V-representation of the $k$-th PT region independently for each $k\in \mathcal{K}$. 
For any point $(p_k^1,\cdots,p_k^I)$ of the $k$-th hypercube, we define $\Omega(p_k^1,\cdots,p_k^I)\subseteq \hat{\Pi}$ as the set of recommendation policies whose $k$-th row vectors satisfy $\hat{\pi}^k=[p_k^1,\cdots,p_k^I]$. 
In Algorithm \ref{algorithm1},  we determine the whether the $2^I$ vertices of the $k$-th hypercube are $k$-th PT points. 
Let $V^k:=\{ (p_k^1,\cdots,p_k^I) |p^{i}_k\in \{0,1\}, \forall i\in \mathcal{I}\}$ be the set of these $2^I$ vertices. 
Among these $2^I$ vertices, the $k$-th PT ones compose the $k$-th PT cube-vertex set denoted as 
$V^k_{pt}\subseteq V^k$. 

\setlength{\algomargin}{1.07em}
\begin{algorithm}[h]
\SetAlgoLined
 \SetKwFor{ForPar}{for}{do in parallel}{endfor}
\SetAlgoVlined 
\textbf{Initialize} the $k$-th PT cube-vertex set $V_{pt}^k \leftarrow \emptyset$\; 
    \ForEach{vertex $(p_k^1,\cdots,p_k^I)\in V^k$}{ 
    \While{$s^k$ has not been recommended, i.e., $k'\neq k$}{
    Recommend $s^{k'}\in \mathcal{S}$ randomly based on a recommendation policy $\hat{\pi}\in \Omega(p_k^1,\cdots,p_k^I)$\;
    }
    \lIf{recommendation $s^k$ induces $a^k\in \mathcal{A}$}
    {$V_{pt}^k \leftarrow V_{pt}^k\cup \{(p_k^1,\cdots,p_k^I)\}$} 
    }
    \textbf{Return} the $k$-th PT cube-vertex set $V_{pt}^k$\;
 \caption{Algorithm to learn the $k$-th PT cube-vertex set $V_{pt}^k$ for a given insider. 
 \label{algorithm1}}
\end{algorithm}

In Algorithm \ref{algorithm2}, we determine the vertex coordinates of the $k$-th PT region. 
As a convex polytope, the region contains a finite set of polytope-vertices defined as $\bar{V}^k_{pt}$. 
Since the $k$-th PT region is determined by a hyperplane in the $k$-th hypercube, its vertices are on the edges, and it contains all the elements in the $k$-th PT cube-vertex set ${V}^k_{pt}$ as shown in the initialization step in line $7$. 
Each cube-vertex $(p_k^1,\cdots,p_k^I)\in V_{pt}^k$ has $I$ neighboring cube-vertices, and the coordinate of its $i$-th neighboring cube-vertex is $(p_k^1,\cdots,p_k^{i-1},1-p_k^{i},p_k^{i+1},\cdots, p_k^I)$. 
After we select a $k$-th PT cube-vertex in line $8$, we search over its $I$ neighboring cube-vertices in line $9$. 
If the neighboring vertex is also $k$-th PT, then the points on the edge of these two cube-vertices are both $k$-th PT. 
If the neighboring vertex is not $k$-th PT as shown in line $10$, then there is an additional polytope-vertex on the edge of these two cube-vertices. 
As shown in lines $11-17$, we use the binary search to learn the coordinate of this additional polytope-vertex and add it to the $k$-th polytope-vertex set $\bar{V}_{pt}^k$ in line $18$. 
In particular, the binary search adopts an accuracy $\epsilon>0$ used in the stopping criteria shown in line $12$. 
For the worst case where a polytope-vertex is close to a cube-vertex, we need $N\in \mathbb{Z}^+$ iterations to reach the stop criteria, i.e., $(1/2)^N\leq \epsilon$, which leads to $N\geq \log_2 (1/\epsilon)$. 
Since an $I$-dimensional hypercube has $2^{n-1}n$ edges, Algorithm \ref{algorithm2} is guaranteed to stop within $2^{n-1}n \log_2 (1/\epsilon)$ steps.


\setlength{\algomargin}{1.07em}
\begin{algorithm}[h]
\SetAlgoLined
\SetAlgoVlined 
\textbf{Initialize} $\bar{V}_{pt}^k \leftarrow V_{pt}^k$, and accuracy $\epsilon>0$\;
    \ForEach{$k$-th PT vertex $(p_k^1,\cdots,p_k^I)\in V_{pt}^k$} 
    {
    \For{$i \leftarrow 1$ \KwTo $I$} 
        {
            \If{$(p_k^1,\cdots,p_k^{i-1},1-p_k^{i},p_k^{i+1},\cdots, p_k^I) \notin V_{pt}^k$} 
            {
            $lb \leftarrow 0$ and  $ub \leftarrow 1$\;
                    \While{$ub-lb>\epsilon$}
                    {
                    Recommend $s\in \mathcal{S}$ randomly based on  
                    {\small
                    $\hat{\pi}\in \Omega(p_k^1,\cdots,p_k^{i-1},\frac{lb+ub}{2},p_k^{i+1},\cdots, p_k^I)$}\;
                    \eIf{$s=s^k$ and $p_k^{i}=0$} 
                    {
                    \leIf{insider takes action $a^k\in \mathcal{A}$ }{$lb=\frac{lb+ub}{2}$}{$ub=\frac{lb+ub}{2}$}
                    }
                    ($s=s^k$ and $p_k^{i}=1$) 
                    {
                    {\leIf{insider takes action $a^k\in \mathcal{A}$ }{$ub=\frac{lb+ub}{2}$}{$lb=\frac{lb+ub}{2}$}
                    }
                    }
                    }
                    {\small
                    $\bar{V}_{pt}^k \leftarrow \bar{V}_{pt}^k\cup \{(p_k^1,\cdots,p_k^{i-1},\frac{lb+ub}{2},p_k^{i+1},\cdots, p_k^I)\}$}\;
            } 
        }
    }
    \textbf{Return} the $k$-th PT polytope-vertex set $\bar{V}_{pt}^k$\;
 \caption{Algorithm to learn the polytope-vertex set $\bar{V}_{pt}^k$ for a given insider. 
 \label{algorithm2}}
\end{algorithm}

After we obtain the V-representation of the $k$-th PT policy set, i.e., $\bar{V}_{pt}^k$, for each $k\in \mathcal{K}$,  we can use facet enumeration methods (e.g., \cite{avis1997good}) to obtain the half-space representation (H-representation) that can be directly used to construct the constraints in the primal convex program $P_\eta, \forall \eta\in \mathbb{R}^+$, in $\hat{\pi}\in \hat{\Pi}$. 
We provide a graphical illustration in Section \ref{sec:Graphical Illustration of Learning Algorithms} when $I=J=K=2$.

\section{Case Study}
\label{sec:case study}
In this section, we illustrate the design of ZETAR in Fig. \ref{fig:ScenarioDiag} under fully customized recommendation policies (i.e., $\eta=\infty$) to improve compliance for insiders with different incentives. 

\subsection{Model Description}
\label{sec:Model Description}
Following Fig. \ref{fig:overviewDiag}, we consider the binary SP, i.e., $\mathcal{Y}=\{y^{hr},y^{lr}\}$, where $y^{hr}$ and $y^{lr}$ represent the high-risk SP and the low-risk SP, respectively. 
For illustration purposes, we consider binary audit schemes, i.e., $\mathcal{X}=\{x^{sa},x^{ta}\}$, where $x^{sa}$ and $x^{ta}$ represent stringent audit and tolerant audit, respectively. 
The insider's behaviors are categorized into binary actions, i.e., $\mathcal{A}=\{a^{ic},a^{co}\}$, where $a^{ic}$ and $a^{co}$ represent non-compliant and compliant behaviors, respectively. 
\textcolor{black}{Since insiders have different risk attitudes toward gains and losses, we introduce a risk perception function $\kappa^{\gamma}$ with parameter $\gamma:=[\gamma_d,\gamma_s]$, where $\kappa^{\gamma}(v):=v^{\gamma_d},  v\geq 0$, and  $\kappa^{\gamma}(v):=-\gamma_s (-v)^{\gamma_d},v< 0$, based on Cumulative Prospect Theory (CPT) \cite{CPT}.}  

\subsubsection{Insider's Intrinsic and Extrinsic Incentives}
\label{sec:casestudy_vU}
As shown in Table \ref{tab:utility}, we separate an insider's incentive $v_U^{\gamma}$ under $\kappa^{\gamma}$ into intrinsic incentive $v_{U,I}$ and extrinsic incentive $v^{\gamma}_{U,E}$. 

\begin{table}[h]
    \begin{subtable}[h]{0.24\textwidth}
        \centering
        \begin{tabular}{|c|c|c|}
\hline
$v^{\gamma}_{U,E}$ & $x^{sa}$ & $x^{ta}$        \\ \hline
$a^{ic}$     & $\kappa^{\gamma}(-c_D^{ic})$ & $0$          \\ \hline
$a^{co}$     & $\kappa^{\gamma}(r_D^{co})-c_U^{co}$ & $-c_U^{co}$ \\ \hline
\end{tabular}
        \subcaption{Extrinsic incentive. }
        \label{tab:Extrinsic insider}
     \end{subtable}
     \begin{subtable}[h]{0.24\textwidth}
        \centering
        \begin{tabular}{|c|c|c|}
\hline 
$v_{U,I}$ & $y^{hr}$        & $y^{lr}$            \\ \hline 
$a^{ic}$     & $c_U^{hr}$      & $c_U^{lr}$          \\ \hline
$a^{co}$     & $r_U^{hr}$   & $r_U^{lr}$ \\ \hline
\end{tabular}
       \subcaption{Intrinsic incentive. }
       \label{tab:Intrinsic insider}
    \end{subtable}
     \caption{Insider's utility $v_U^{\gamma}=v_{U,I}+v^{\gamma}_{U,E}$. }
     \label{tab:utility}
\end{table}

The extrinsic incentive in  Table I(a) is independent of SP and captures the impact of AS on compliance. 
Compliant action $a^{co}\in \mathcal{A}$ introduces a compliance cost $c_U^{co}\in \mathbb{R}^+$ to an insider regardless of the AS. 
For example, an insider compliant with the air-gap rule has to spend additional time and effort to transfer data using a CD rather than a USB. 
Under AS $x^{sa}\in \mathcal{X}$, the defender introduces a reward $r_D^{co}\in \mathbb{R}^+$ and a penalty $c_D^{ic}\in \mathbb{R}^+$ to compliant action $a^{co}$ and non-compliant action $a^{ic}$, respectively. 
We assume that the tolerant audit scheme $x^{ta}$ introduces a reward and penalty of $0$ to $a^{co}$ and $a^{ic}$, respectively. 
The intrinsic incentive in Table I(b) is independent of AS and captures an insider's internal inclination to comply under different SP realizations. 
Under high-risk SP $y^{hr}$ (resp. low-risk SP $y^{lr}$), an insider receives an intrinsic penalty (e.g., the guilty of misconduct) denoted by $c_U^{hr}$ (resp. $c_U^{lr}$) to take non-compliant action $a^{ic}$ and an intrinsic reward (e.g., the gratification of being compliance-seeking) denoted by $r_U^{hr}$ (resp. $r_U^{lr}$) to take compliant action $a^{co}$. 
Based on Lemma \ref{lemma:LIinsider}, we can choose $\rho_U^{tr}(y^{hr},x)=-r_U^{hr}$ and $\rho_U^{tr}(y^{lr},x)=-r_U^{lr}$ for all $x\in \mathcal{X}$ without affecting the insider's compliance and the optimal recommendation policy $\pi^*\in \Pi$. 
Thus, without loss of generality, we calibrate $r_U^{hr}=r_U^{lr}=0$ and $c_U^{hr},c_U^{lr}\in \mathbb{R}$ and characterize the following three compliance attitudes of insiders in Definition \ref{def:Compliance Attitudes}. 

\begin{definition}[\textbf{Compliance Attitudes}]
\label{def:Compliance Attitudes}
An insider is said to be compliance-seeking, compliance-averse, and compliance-neutral if both $c_U^{hr}$ and $c_U^{lr}$ are positive, negative, and zero, respectively. 
\end{definition}

\subsubsection{Defender's Security Objective}
\label{sec:casestudy_vD}
Table \ref{tab:defender's utility} illustrates the defender's security objective $v_D$. 
Following Section \ref{sec:audit and action}, stringent audit increases insiders' pressures and reduces their working efficiency. We capture the efficiency reduction with a cost $c_D^{ca}\in \mathbb{R}^+$. 
When an insider takes a non-compliant action $a^{ic}$, the stringent audit $x^{sa}$ requires an immediate correction from the insider to patch the induced vulnerability, which yields a reward of $r_D^{ca}\in \mathbb{R}^+$ in Table \ref{tab:defender's utility} regardless of the SP realizations. 
Meanwhile, the tolerant audit $x^{ta}$ introduces no cost of efficiency reduction 
but additional risks of insider threats captured by the cost $c_D^{hr}\in \mathbb{R}^+$ in Table II(a) and $c_D^{lr}\in \mathbb{R}^+$ in Table II(b) under high-risk SP $y^{hr}$ and low-risk SP $y^{lr}$, respectively. 
When an insider takes a compliant action $a^{co}$, the risk of insider threats is reduced to a minimum and is represented as the defender's payoff $r_D^{sa}\in \mathbb{R}$.  
\begin{table}[h]
    \begin{subtable}[h]{0.24\textwidth}
        \centering
        \begin{tabular}{|c|c|c|}
\hline
$v_D$ & $x^{sa}$        & $x^{ta}$          \\ \hline
$a^{ic}$ & $r_D^{ca}-c_D^{ca}$        & $-c_D^{hr}$          \\ \hline
$a^{co}$ & $-c_D^{ca}$ &        $r_D^{sa}$ \\ \hline
\end{tabular}
        \caption{$v_D$ at high-risk SP.}
        \label{tab:hr defender}
     \end{subtable}
     \begin{subtable}[h]{0.24\textwidth}
        \centering
        \begin{tabular}{|c|c|c|}
\hline
$v_D$ & $x^{sa}$        & $x^{ta}$          \\ \hline
$a^{ic}$ & $r_D^{ca}-c_D^{ca}$        & $-c_D^{lr}$          \\ \hline
$a^{co}$ & $-c_D^{ca}$ &        $r_D^{sa}$ \\ \hline
\end{tabular}
       \caption{$v_D$ at low-risk SP.}
       \label{tab:lr defender}
    \end{subtable}
     \caption{Defender's utility $v_D$ at two SP states. }
     \label{tab:defender's utility}
\end{table}


\subsection{Graphical Illustration of Learning Algorithms}
\label{sec:Graphical Illustration of Learning Algorithms}

In this case study, the defender has no prior knowledge of an insider's risk and compliance attitudes. 
Moreover, the defender assigns no prior trust to the insiders and applies the algorithms in Section \ref{sec:Model-Agnostic} to learn their incentives. 
Fig. \ref{fig:compliance-seekingBR2} and Fig. \ref{fig:compliance-averseBR2} illustrate the Algorithms under compliance-seeking and compliance-averse insiders, respectively. 
Since $K=2$, the recommendation policy $\hat{\pi}\in \hat{\Pi}$ can be equivalently represented as a point ($p^1,p^2$) in the unit hypercube of dimension $I=2$ (i.e., a unit square), where $p^1=\pi(s^{ic}|x^{sa}), p^2=\pi(s^{ic}|x^{ta})$ as shown in Section \ref{sec:Model-Agnostic}. 
In the hypercube space illustrated by Fig. \ref{fig:BRset}, the green and blue regions represent the CT and CU policy sets, respectively. 
\begin{figure}[h]
    \centering 
    \begin{subfigure}{0.22\textwidth}
\includegraphics[width=1\columnwidth,height=.8\columnwidth]{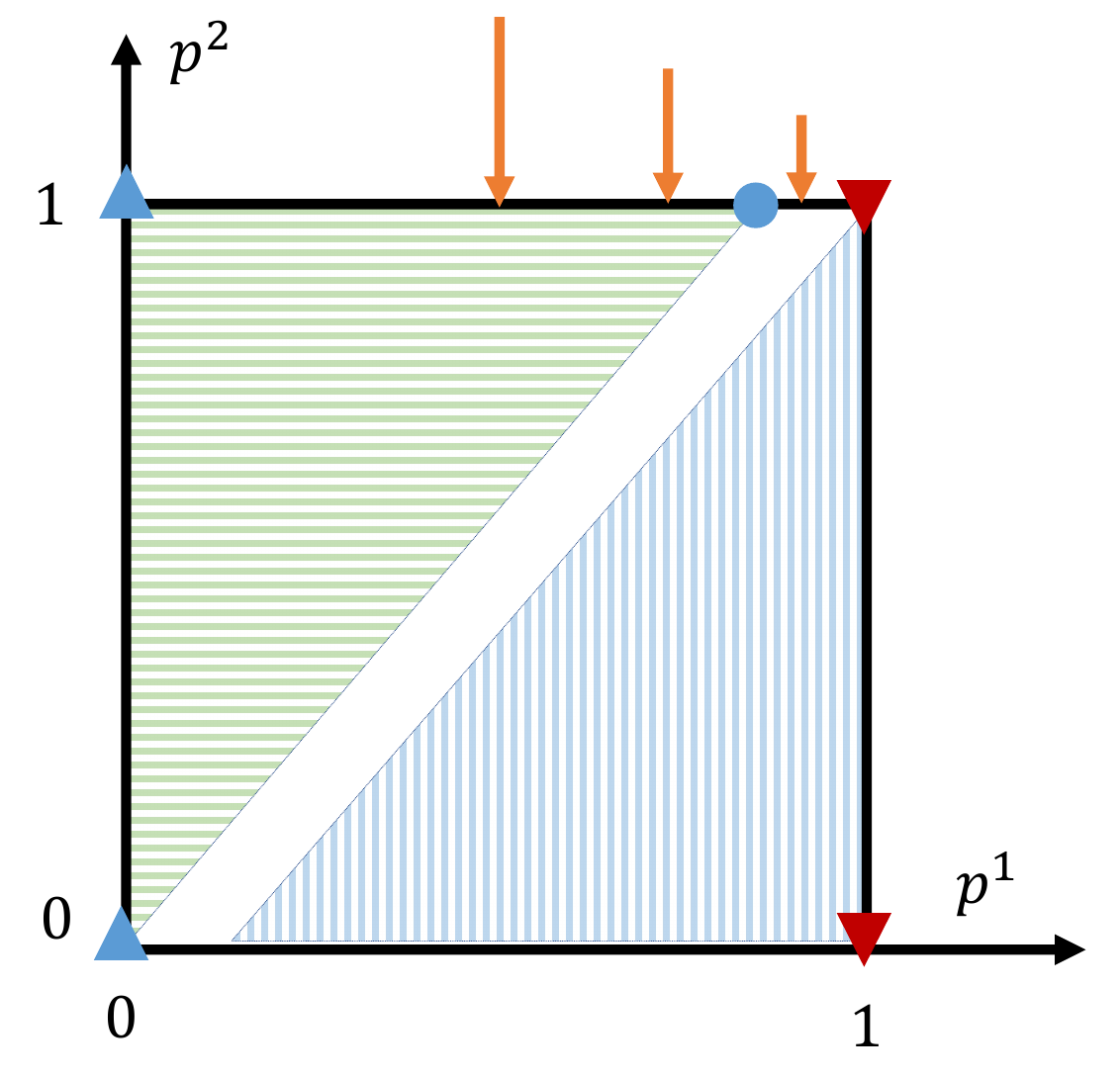}
\caption{ 
Compliance-seeking insiders. 
}
\label{fig:compliance-seekingBR2}
\end{subfigure}\hfil 
    \begin{subfigure}{0.22\textwidth}
\includegraphics[width=1\columnwidth,height=.8\columnwidth]{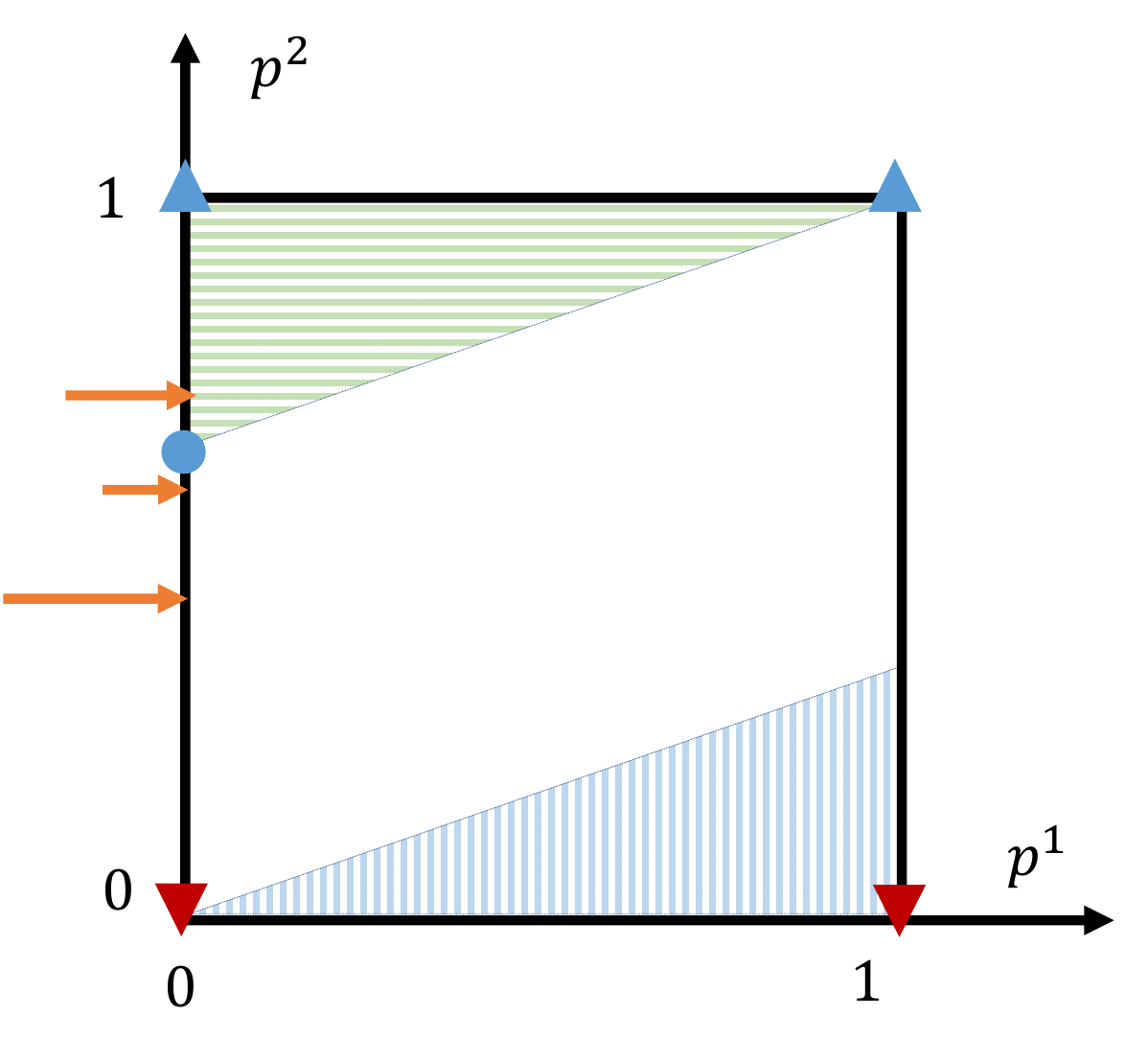} 
\caption{ 
Compliance-averse insiders. 
}
\label{fig:compliance-averseBR2}
\end{subfigure}\hfil 
\caption{
The blue upward and the red downward triangles represent the CT and CU recommendations policies, respectively. 
The green (resp. blue) region with horizontal (resp. vertical) lines represents the CT (resp. CU) policy sets, respectively. 
The orange lines show the steps of the binary search in Algorithm \ref{algorithm2}.  
\label{fig:BRset}
}
\end{figure}

Algorithm \ref{algorithm1} determines whether the four vertices of the hypercube are the $1$-st PT or not, which are illustrated by the blue upward and red downward triangles, respectively, in Fig. \ref{fig:BRset}. 
Algorithm  \ref{algorithm2} further determines the additional polytope-vertex (represented by the blue circle in Fig. \ref{fig:BRset}) of the $1$-st PT polytope in green. 
From each blue triangle (line $8$), if a neighboring cube-vertex (line $9$) is also a blue triangle, then no additional polytope-vertices are needed to determine the green region (line $10$). 
If the neighboring cube-vertex is red, then binary search is applied to determine the additional polytope-vertex. 
In Fig. \ref{fig:compliance-seekingBR2} (resp. Fig. \ref{fig:compliance-averseBR2}), from the blue cube-vertex $(0,1)$, the red neighboring cube-vertex is $(1,0)$ (resp. $(0,0)$), and the binary search adopts line $15$ (resp. line $17$). 
We use the orange lines in Fig. \ref{fig:compliance-averseBR2} to illustrate the binary search process, i.e., line $11$ to $17$ in Algorithm \ref{algorithm2}. 
The first step of the binary search (represented by the longest orange line) evaluates the recommendation policy represented by the point $(0,1/2)$, and the policy is not the $1$-st PT. Thus, we update the lower bound $lb$ based on the \textit{else} condition in line $16$ of Algorithm \ref{algorithm2}. 
The second step (represented by the second-longest orange line) evaluates the recommendation policy represented by the point $(0,3/4)$, and the policy is the $1$-st PT. 
Thus, we update the upper bound $ub$ based on the \textit{then} condition in line $14$. 
The third step (represented by the third-longest orange line) evaluates the recommendation policy represented by the point $(0,5/8)$, and the policy is not the $1$-st PT. 
Thus, we update the lower bound again. 
We repeat the above process of binary search until $ub-lb\leq \epsilon$ as shown in line $12$, and we find the additional polytope vertex represented by the blue circle in Fig. \ref{fig:compliance-averseBR2}. 
After we obtain all the vertices of the polytope that represent the CT policy set, we can use facet enumeration methods to obtain the $H$-representation and construct the constraints of $P_\eta$ concerning $p^1,p^2\in [0,1]$. 
For example, if the coordinate of the blue circle in Fig.  \ref{fig:compliance-averseBR2} is $(0,w), w\in [0,1]$, then the constraint is $p^2\geq (1-w)p^1+w$. 

\subsection{Numerical Results}
We choose $\psi(x^{sa}|y^{hr})=0.8$ and $\psi(x^{sa}|y^{lr})=0.3$; i.e., the audit firm chooses a stringent audit with probability $0.8$ and $0.3$ under high-risk SP $y^{hr}$ and low-risk SP $y^{lr}$, respectively. 

\subsubsection{Compliance Threshold}
\label{sec:casestudy_initial compliance}

Following Section \ref{sec:insider's Initial Compliance}, we investigate the initial compliance of an insider with three compliance attitudes in Definition \ref{def:Compliance Attitudes} and different risk perception parameters $\gamma$. 
Define $t_{ze}\in \mathbb{R}$ as the zero of the function $f(b_Y(y^{hr})):=\sum_{y\in \mathcal{Y}} b_Y(y)\sum_{x\in \mathcal{Y}}\psi(x|y)[v_U(y,x,a_1)-v_U(y,x,a_2)]$. 
Let $t_{bt}:=\max{\{\min{\{t_{ze},1\}},0\}}$ be the belief threshold of an insider. 
For binary actions, an insider adopts a \textit{threshold policy} where $a_0=a^{co}$ if $b_Y(y^{hr})\geq t_{bt}$ and $a_0=a^{ic}$ if $b_Y(y^{hr})<t_{bt}$.  
Fig. \ref{fig:initial compliance} illustrates the belief threshold versus the non-compliance penalty $c_D^{ic}\in \mathbb{R}^+$. 
\begin{figure}[h]
    \centering 
    \begin{subfigure}{0.22\textwidth}
\includegraphics[width=1\columnwidth]{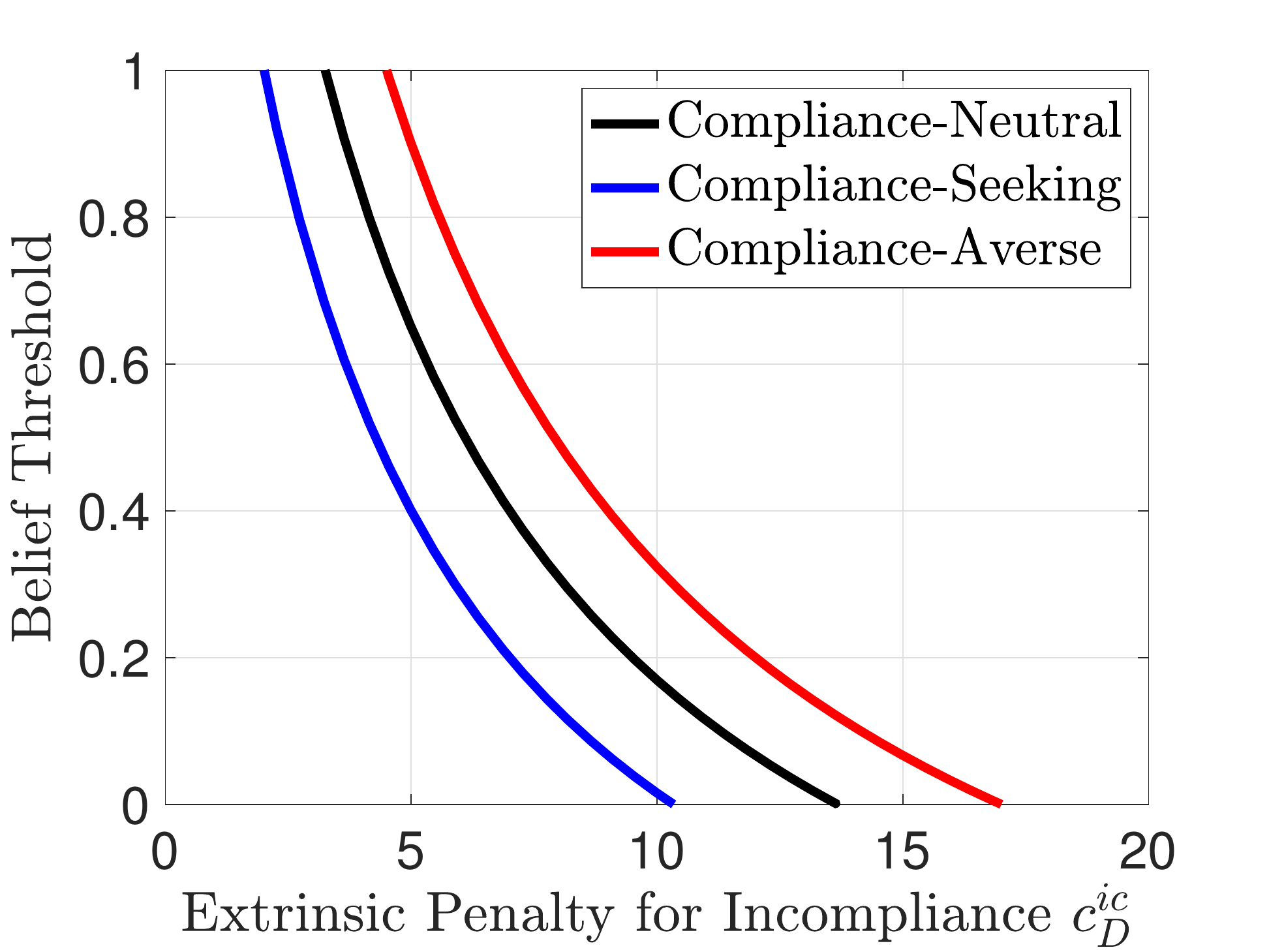}
\caption{ 
Insiders with three compliance attitudes under $\gamma_d=\gamma_s=1$.  
\label{fig:threshold_noCPT}
}
\end{subfigure}\hfil 
    \begin{subfigure}{0.22\textwidth}
\includegraphics[width=1\columnwidth]{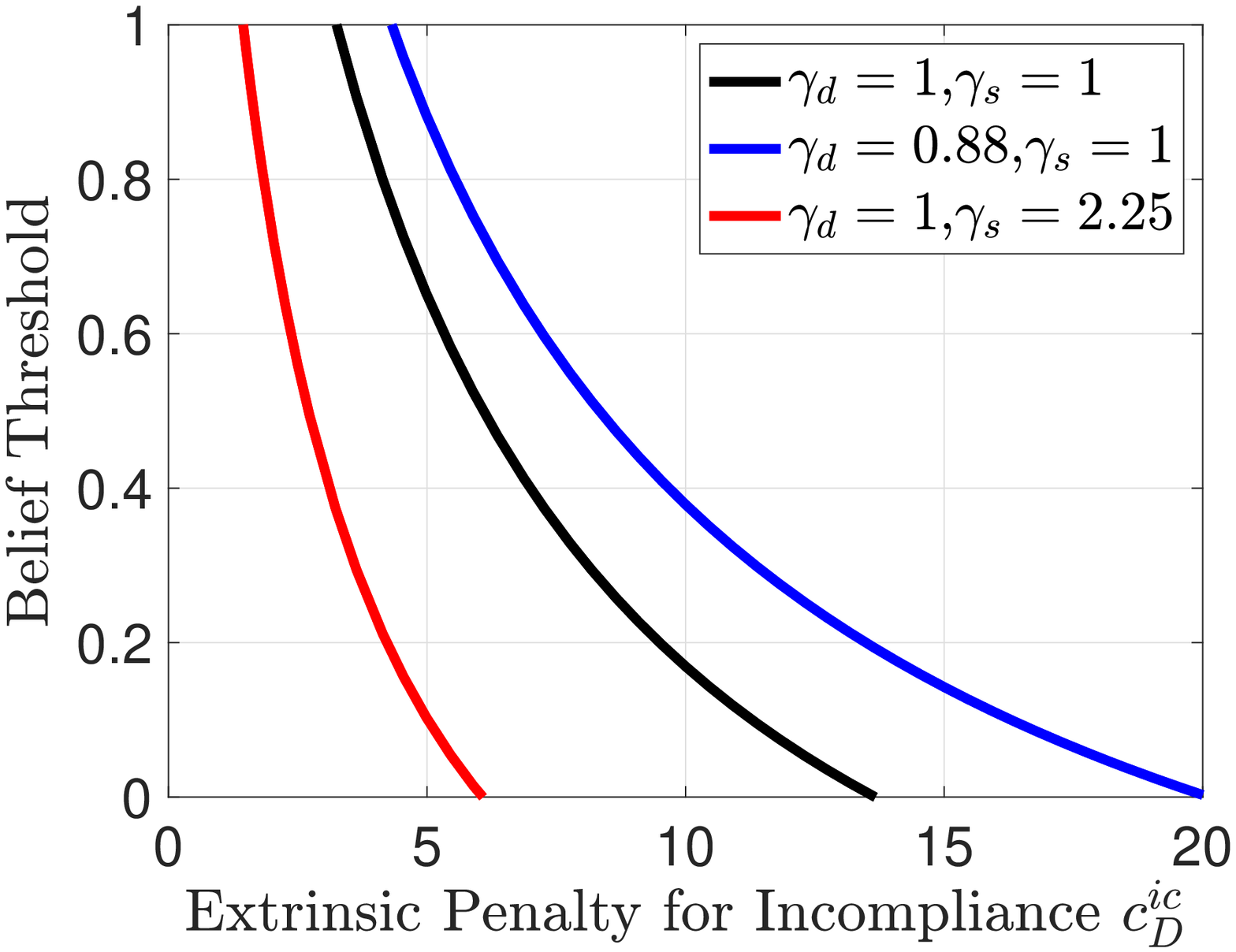} 
\caption{ 
Compliance-neutral insiders under distorted risk perceptions. 
\label{fig:threshold_withCPT}
}
\end{subfigure}\hfil 
\caption{
Insiders' belief thresholds $t_{bt}\in [0,1]$ to compliant actions versus the value of non-compliance penalty $c_D^{ic}\in \mathbb{R}^+$. 
\label{fig:initial compliance}
}
\end{figure}
The plots show that increasing penalty $c_D^{ic}$ can make insiders more likely to take compliant action $a^{co}$ (i.e., a smaller belief threshold). 
Fixing the penalty value, compliance-averse (resp. compliance-seeking) insiders are the least (resp. most) likely to comply, i.e., the largest (resp. smallest) belief thresholds, among insiders with three compliance attitudes, as shown in Fig. \ref{fig:threshold_noCPT}.   
In Fig. \ref{fig:threshold_withCPT}, a larger $\gamma_s$ in red represents a higher degree of loss aversion, which makes an insider more likely to comply. 
A small $\gamma_d$ in blue enhances the effect of diminishing sensitivity, which makes a large penalty less effective to induce compliant behaviors.

\subsubsection{Impacts of Recommendation Policies}
\label{sec:impacts of recommendation policies}
Here, we specify $b(y^{hr})=0.2$ and $c_D^{ic}=10$ to inspect the impact of recommendation policies on an insider's behaviors. 
\textcolor{black}{
Fig. \ref{fig:posteriorbelief} illustrates the impact of different recommendation policies $\hat{\pi}\in \hat{\Pi}$ on an insider's posterior belief $b_X^{\pi}$ under Bayesian belief update in Section \ref{sec:insider's Belief Update and Best-Response Action}. 
The posterior belief is independent of $v_D, v_U,\eta$ and $\gamma$. 
The plot illustrates the Bayesian plausibility \cite{kamenica2011bayesian} where $\sum_{s\in \mathcal{S}} b_S^{\pi}(s)b_X^{\pi}(x|s)=b_X(x), \forall x\in \mathcal{X}, \pi\in \Pi$.}  
\begin{figure}[h]
\centering
\includegraphics[width=.45 \textwidth,height=.6\columnwidth]{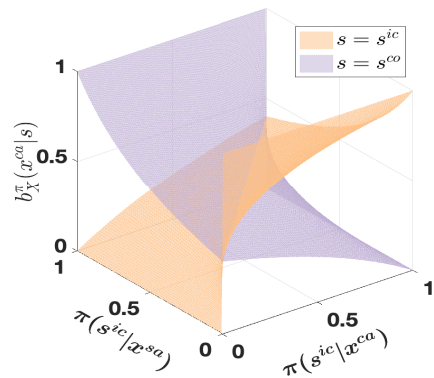}
\caption{ 
\textcolor{black}{An insider's posterior belief $b_X^{\pi}(x^{sa}|s)$ under recommendation $s=s^{ic}$ and $s=s^{co}$ in brown and pink, respectively, vs. $\pi(s^{ic}|x^{sa})\in [0,1]$ in $x$-axis and $\pi(s^{ic}|x^{ta})\in [0,1]$ in $y$-axis.} 
\label{fig:posteriorbelief}
}
\end{figure} 
Fig. \ref{fig:ACEL_nocpt} illustrates the impact of different recommendation policies $\hat{\pi}\in\hat{\Pi}$ on the ACEL under insiders with two compliance attitudes, which corroborate the PWL property in Lemma \ref{lemma:PWLC of the Defender's EPU}. 
Different compliance attitudes only affect the policy set partition denoted by $\mathcal{C}_l^k, \forall l,k\in \{ic,co\}$, following Section \ref{sec:Geometric Characterization of CT Policy Sets}. 
In Fig. \ref{fig:ACEL_nocpt_good}, the policy sets (also illustrated in Fig. \ref{fig:compliance-seekingBR2}) illustrated by the contour plots on the $xy$-plane are sets $\mathcal{C}_{ic,co}$, $\mathcal{C}_{co,co}$, and $\mathcal{C}_{co,ic}$, respectively, from left to right. 
In Fig. \ref{fig:ACEL_nocpt_bad}, the policy sets (also illustrated in Fig. \ref{fig:compliance-averseBR2}) illustrated by the contour plots on the $xy$-plane are sets $\mathcal{C}_{ic,co}$, $\mathcal{C}_{ic,ic}$, and $\mathcal{C}_{co,ic}$, respectively, from left to right. 
These policy sets are convex as shown in Lemma \ref{lemma:convex}. The sets $\mathcal{C}_{ic,co}$ and $\mathcal{C}_{co,ic}$ are CT and CU, respectively.  


Fig. \ref{fig:ACEL_nocpt} illustrates that an improper recommendation policy may lead to a negative ACEL, but the optimal ACEL represented by the red star is always non-negative, as shown in Section \ref{sec:Defender's Optimal Recommendation Policy}. 
For compliance-seeking insiders, the defender's ISeL  $J_D(\pi_z, b_X,\bar{v}_D,\bar{v}_U)$ and the optimal ASeL $J_D(\pi^*, b_X,\bar{v}_D,\bar{v}_U)$ are both $1.8$. 
For compliance-averse insiders, the defender's ISeL  $J_D(\pi_z, b_X,\bar{v}_D,\bar{v}_U)$ and the optimal ASeL $J_D(\pi^*, b_X,\bar{v}_D,\bar{v}_U)$ are $-0.64$ and $0.73$, respectively. 
\begin{remark}[\textbf{Adaptivity and Structural Improvement}]
The above results show that ZETAR can well adapt to insiders with different compliance attitudes and achieve a structural improvement of compliance (from a negative ISeL to a positive ASeL) for compliance-averse insiders.
\end{remark}


\begin{figure}[h]
    \centering 
    \begin{subfigure}{0.22\textwidth}
\includegraphics[width=1\columnwidth]{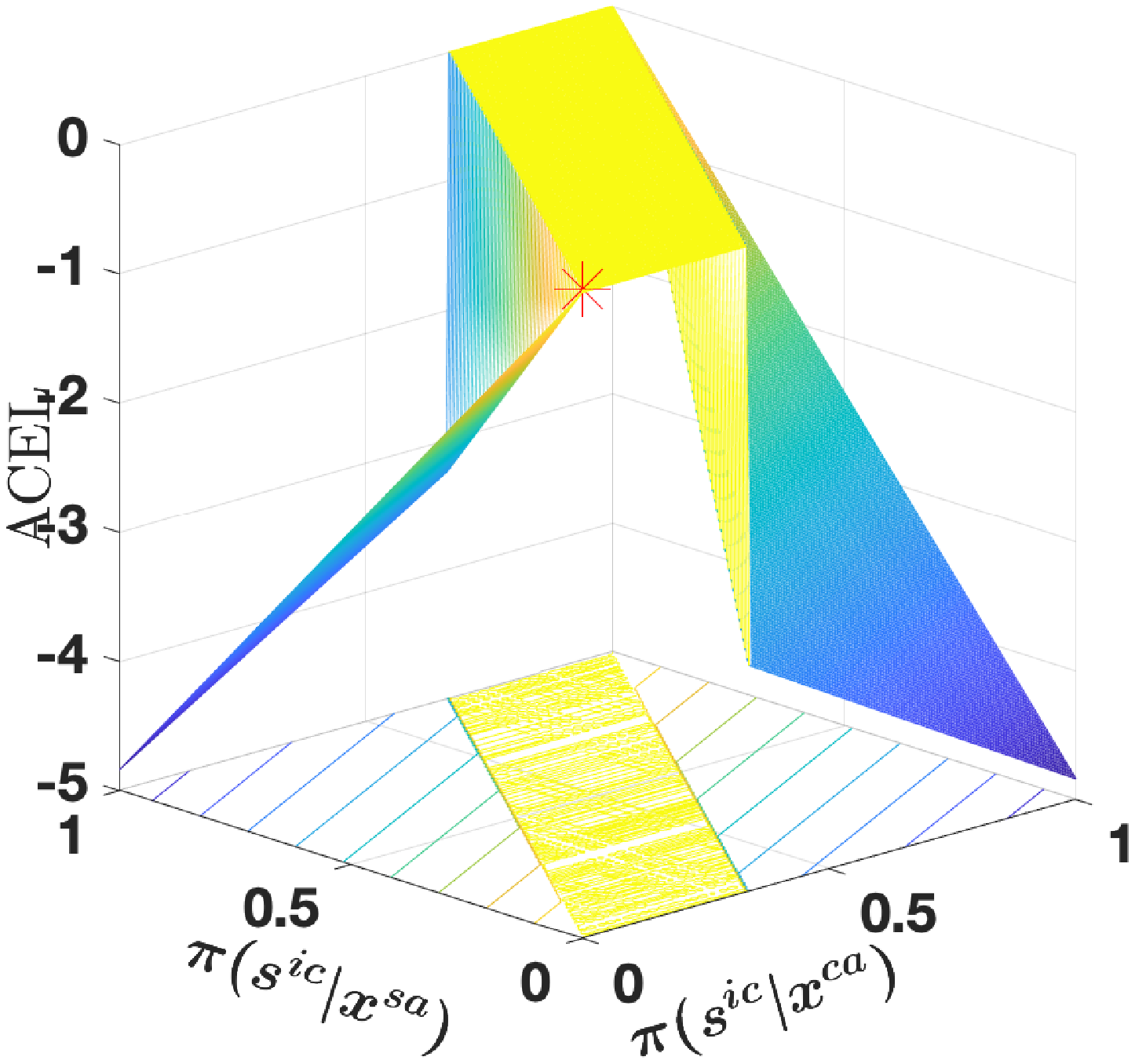}
\caption{ 
Compliance-seeking insiders. 
}
\label{fig:ACEL_nocpt_good}
\end{subfigure}\hfil 
    \begin{subfigure}{0.22\textwidth} 
\includegraphics[width=1\columnwidth]{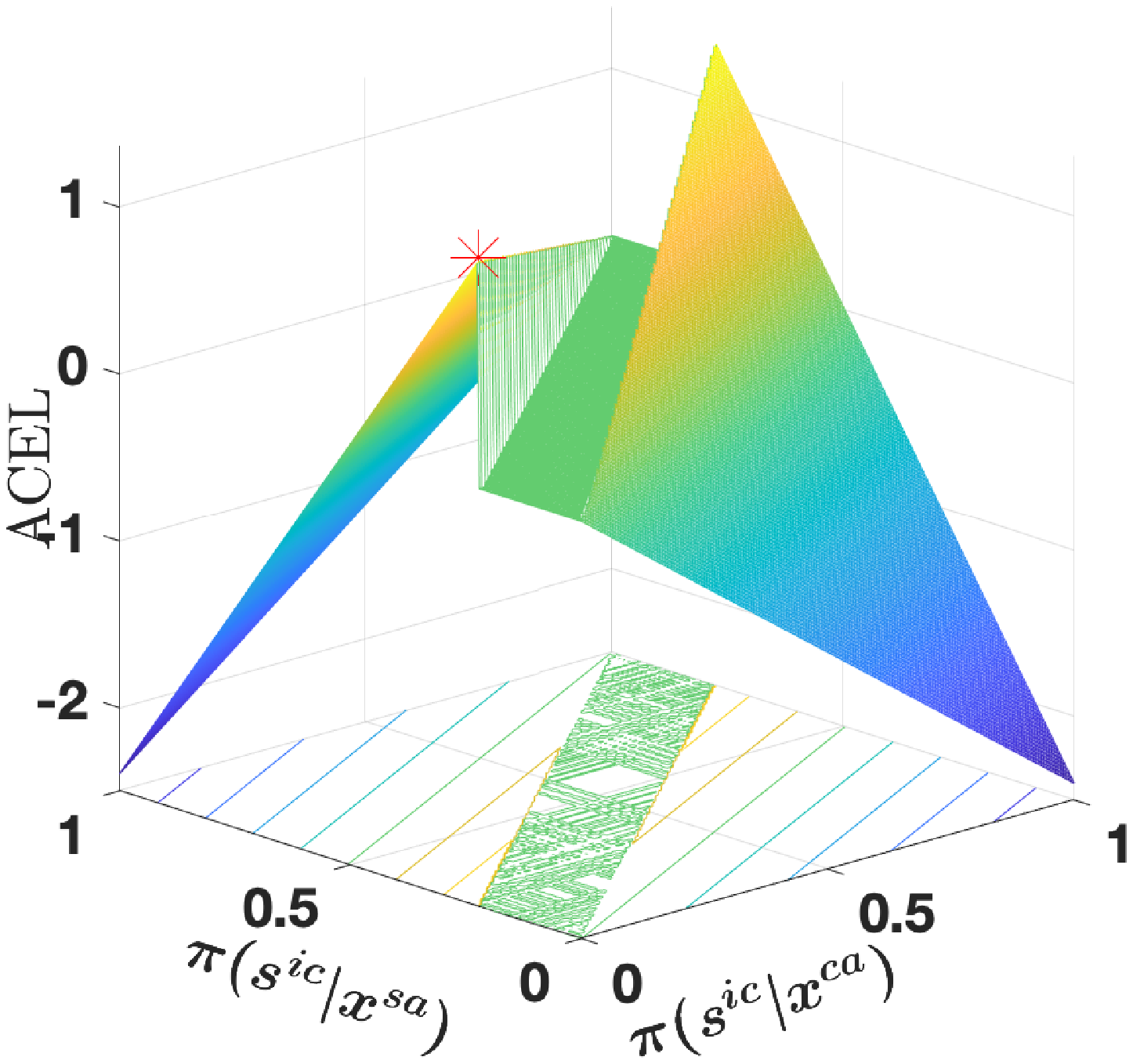} 
\caption{ 
Compliance-averse insiders. 
}
\label{fig:ACEL_nocpt_bad}
\end{subfigure}\hfil 
\caption{
ACEL $J_D^{acel}(\pi,b_X,\bar{v}_D,\bar{v}_U)$ versus $\pi(s^{ic}|x^{sa})\in [0,1]$ in $x$-axis and $\pi(s^{ic}|x^{ta})\in [0,1]$ in $y$-axis when $\gamma_d=\gamma_s=1$.  
\label{fig:ACEL_nocpt}
}
\end{figure}

\subsubsection{The Optimal ACEL}

We illustrate the impacts of the optimal recommendation policy $\pi^*$ on the defender's and an insider's utilities under different likelihoods of the high-risk SP In Fig. \ref{fig:3by3}. 
\begin{figure}[h]
    \centering 
\includegraphics[width=.99\columnwidth]{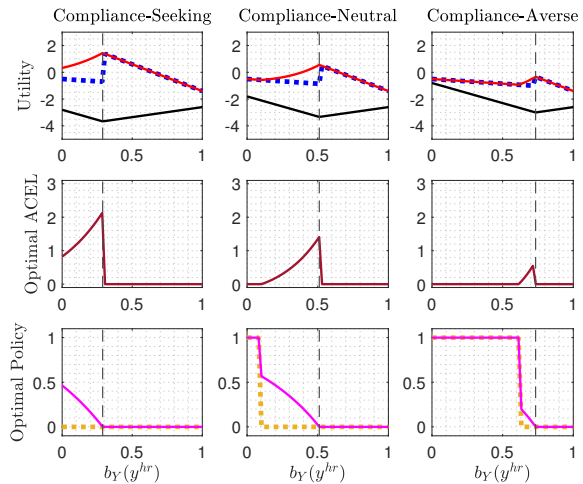} 
\label{fig:3by3_noCPT}
\caption{
Utilities, the optimal ACEl, and the optimal recommendation policies in the first, second, and third rows, respectively, versus prior statistic $b_Y(y^{hr})\in [0,1]$ concerning insiders with three compliance attitudes under $\gamma_s=\gamma_d=1$. 
The defender's ISeL $J_D(\pi_z,b_X,\bar{v}_D,\bar{v}_U)$, her optimal ASeL $J_D(\pi^*,b_X,\bar{v}_D,\bar{v}_U)$, and an insider's optimal ASaL $J_U(\pi^*,b_X,\bar{v}_U)$ are in blue, red, and black, respectively. 
Two elements of the optimal recommendation policy, $\pi^*(s^{ic}|x^{sa})$ and $\pi^*(s^{ic}|x^{ta})$, are illustrated in orange and pink, respectively. 
The vertical dashed black lines represent the belief threshold $t_{bt}\in [0,1]$. 
\label{fig:3by3}
}
\end{figure}
Following Section \ref{sec:casestudy_initial compliance}, the belief threshold $t_{bt}\in [0,1]$, represented by the vertical dashed black lines, divides the entire prior belief region into the compliant region  $b_Y(y^{hr})\in [t_{bt},1]$  on the right and non-compliant region  $b_Y(y^{hr})\in [0,t_{bt})$ on the left, where an insider takes $a^{co}$ and $a^{ic}$, respectively. 
Under the compliant regions, an insider tends to take compliant actions, resulting in zero ACEL and zero-information recommendation policy $\pi^*(s^{ic}|x^{sa})=\pi^*(s^{ic}|x^{ta})=0$. 
Under the non-compliant regions where an insider tends not to comply, the optimal recommendation policy induces positive ACEL. 
The defender's ISeL in compliant regions is larger than the one in non-compliant regions as shown by the blue lines in the two regions. 
As an insider changes from being compliance-averse to compliance-seeking, his ASaL in black decreases in the non-compliant region, the belief threshold reduces (also illustrated in Fig. \ref{fig:initial compliance}), and the peak of the optimal ACEL increases. 
The orange and pink lines illustrate that a large ACEL results from a more distinguished recommendation policy, i.e., a larger difference between $\pi^*(s^{ic}|x^{sa})$ and $\pi^*(s^{ic}|x^{ta})$.  
Moreover, the defender can recommend compliant actions, i.e., $s^{co}$, with a high probability as an insider changes from compliance-averse to compliance-seeking. 
Despite the linearity of the defender's ISeL in blue, her optimal ASeL in red and the optimal ACEL in brown are nonlinear in $b_Y$, as shown in Remark \ref{remark:nonlinear}. 
In Fig. \ref{fig:3by3}, we further observe that an insider’s ISaL coincides with his optimal ASaL, both represented by the black solid lines for all $b_Y(y^{hr})\in [0,1]$, which corroborates Proposition \ref{proposition:win-win}.

\section{Conclusion}
\label{sec:conclusion}
This work has developed ZETAR as a proactive framework to improve compliance of insiders with different incentives by zero-trust audits and recommendations. 
By a strategic and customized information disclosure of the audit scheme, the defender manages to influence an insider's incentives in favor of the corporate security objectives. 
We have formulated primal and dual convex problems with different levels of recommendation customization to provide a unified computational framework for ZETAR. 
We have shown its strong duality and degeneration to linear programs with fully customized recommendation policies. 
The dual problem has offered an interpretation of ZETAR from an insider's perspective; i.e., each insider aims to minimize his effort to satisfy the security objective of the corporate network. 

We have characterized the structure of trustworthy recommendation policies and compliance status under malicious, self-interested, and amenable insiders. 
The characterizations have led to fundamental principles 
and information disclosure guidelines to  insiders. 
Leveraging zero-trust design principles, we have assumed no trust and knowledge of the insider's incentives and developed efficient feedback algorithms to learn the insider's incentive based on the audit result of his behaviors. 
After identifying the policy separability principle and characterizing the Completely Trustworthy (CT) policy sets determined by the insider's incentive as convex polytopes, we have adopted a binary search algorithm to learn the vertices of the polytope, which is guaranteed to achieve an accuracy of $\epsilon>0$ within $2^{n-1}n\log_2(1/\epsilon)$ steps. 

We have used a case study to corroborate that ZETAR enhances compliance for insiders with different extrinsic and intrinsic incentives.  
The results have shown that ZETAR can well adapt to insiders with  different risk and compliance attitudes and structurally improve the defender's average security level when interacting with compliance-averse insiders. 
Under binary sets of actions and SP states, we have shown that insiders adopt a threshold policy with belief threshold $t_{bt}\in [0,1]$ that divides the belief region into compliance region  $b_Y(y^{hr})\in [t_{bt},1]$ and non-compliant one  $b_Y(y^{hr})\in [0,t_{bt})$. 
Finally, CT recommendation policies have been shown to improve corporate network security without decreasing insiders' satisfaction level. 





\ifCLASSOPTIONcaptionsoff
  \newpage
\fi



%

\bibliographystyle{IEEEtran}
\bibliography{ZETAR}

\begin{thebibliography}{10}
\providecommand{\url}[1]{#1}
\csname url@samestyle\endcsname
\providecommand{\newblock}{\relax}
\providecommand{\bibinfo}[2]{#2}
\providecommand{\BIBentrySTDinterwordspacing}{\spaceskip=0pt\relax}
\providecommand{\BIBentryALTinterwordstretchfactor}{4}
\providecommand{\BIBentryALTinterwordspacing}{\spaceskip=\fontdimen2\font plus
\BIBentryALTinterwordstretchfactor\fontdimen3\font minus
  \fontdimen4\font\relax}
\providecommand{\BIBforeignlanguage}[2]{{%
\expandafter\ifx\csname l@#1\endcsname\relax
\typeout{** WARNING: IEEEtran.bst: No hyphenation pattern has been}%
\typeout{** loaded for the language `#1'. Using the pattern for}%
\typeout{** the default language instead.}%
\else
\language=\csname l@#1\endcsname
\fi
#2}}
\providecommand{\BIBdecl}{\relax}
\BIBdecl

\bibitem{verizon2021}
G.~Bassett, D.~Hylender, P.~Langlois, A.~Pinto, and S.~Widup, ``Data breach
  investigations report,'' Verizon DBIR Team, Tech. Rep., 2021.

\bibitem{xx1}
\BIBentryALTinterwordspacing
``Why employees violate cybersecurity policies,'' \emph{Harvard Business
  Review}, Jan 2022. [Online]. Available:
  \url{https://hbr.org/2022/01/research-why-employees-violate-cybersecurity-policies}
\BIBentrySTDinterwordspacing

\bibitem{MooreTheCritical2016}
A.~Moore, J.~Savinda, E.~Monaco, J.~Moyes, D.~Rousseau, S.~Perl, J.~Cowley,
  M.~Collins, T.~Cassidy, N.~VanHoudnos, P.~Buttles, D.~Bauer, and A.~Parshall,
  ``The critical role of positive incentives for reducing insider threats,''
  Software Engineering Institute, Carnegie Mellon University, Pittsburgh, PA,
  Tech. Rep. CMU/SEI-2016-TR-014, 2016.

\bibitem{theis2019common}
M.~Theis, R.~Trzeciak, D.~Costa, A.~Moore, S.~Miller, T.~Cassidy, and W.~Clay,
  ``Common sense guide to mitigating insider threats,'' 2019.

\bibitem{Mitigation}
S.~Harris, ``Insider threat mitigation guide,'' Cybersecurity and
  Infrastructure Security Agency, Tech. Rep., 2020.

\bibitem{kamenica2011bayesian}
E.~Kamenica and M.~Gentzkow, ``Bayesian persuasion,'' \emph{American Economic
  Review}, vol. 101, no.~6, pp. 2590--2615, 2011.

\bibitem{Huang_Zhu_2021}
L.~Huang and Q.~Zhu, ``Duplicity games for deception design with an application
  to insider threat mitigation,'' \emph{IEEE Transactions on Information
  Forensics and Security}, vol.~16, p. 4843–4856, 2021.

\bibitem{team2013unintentional}
C.~I.~T. Team, ``Unintentional insider threats: A foundational study,''
  \emph{cahier de recherche CMU/SEI-2013-TN-022, Software Engineering
  Institute, Carnegie Mellon University, Pittsburgh, PA}, vol.~18, 2013.

\bibitem{greitzer2014unintentional}
F.~L. Greitzer, J.~Strozer, S.~Cohen, J.~Bergey, J.~Cowley, A.~Moore, and
  D.~Mundie, ``Unintentional insider threat: contributing factors, observables,
  and mitigation strategies,'' in \emph{2014 47th Hawaii International
  Conference on System Sciences}.\hskip 1em plus 0.5em minus 0.4em\relax IEEE,
  2014, pp. 2025--2034.

\bibitem{Huang_Zhu_2023}
L.~Huang and Q.~Zhu, \emph{Cognitive Security: A System-Scientific Approach},
  ser. SpringerBriefs in Computer Science Ser.\hskip 1em plus 0.5em minus
  0.4em\relax Springer, Jun 2023.

\bibitem{dhamija2006phishing}
R.~Dhamija, J.~D. Tygar, and M.~Hearst, ``Why phishing works,'' in
  \emph{Proceedings of the SIGCHI conference on Human Factors in computing
  systems}, 2006, pp. 581--590.

\bibitem{Huang_Zhu_2022}
L.~Huang and Q.~Zhu, ``Radams: Resilient and adaptive alert and attention
  management strategy against informational denial-of-service (idos) attacks,''
  \emph{Computers \& Security}, vol. 121, p. 102844, Oct 2022.

\bibitem{Huang_Jia_Balcetis_Zhu_2022}
L.~Huang, S.~Jia, E.~Balcetis, and Q.~Zhu, ``Advert: An adaptive and
  data-driven attention enhancement mechanism for phishing prevention,''
  \emph{IEEE Transactions on Information Forensics and Security}, vol.~17, p.
  2585–2597, 2022.

\bibitem{yuan2021deep}
S.~Yuan and X.~Wu, ``Deep learning for insider threat detection: Review,
  challenges and opportunities,'' \emph{Computers \& Security}, vol. 104, p.
  102221, 2021.

\bibitem{eberle2010insider}
W.~Eberle, J.~Graves, and L.~Holder, ``Insider threat detection using a
  graph-based approach,'' \emph{Journal of Applied Security Research}, vol.~6,
  no.~1, pp. 32--81, 2010.

\bibitem{casey2015compliance}
W.~A. Casey, Q.~Zhu, J.~A. Morales, and B.~Mishra, ``Compliance control:
  Managed vulnerability surface in social-technological systems via signaling
  games,'' in \emph{Proceedings of the 7th ACM CCS International Workshop on
  Managing Insider Security Threats}, 2015, pp. 53--62.

\bibitem{cram2019seeing}
W.~A. Cram, J.~D'arcy, and J.~G. Proudfoot, ``Seeing the forest and the trees:
  a meta-analysis of the antecedents to information security policy
  compliance,'' \emph{MIS quarterly}, vol.~43, no.~2, pp. 525--554, 2019.

\bibitem{hunker2011insiders}
J.~Hunker and C.~W. Probst, ``Insiders and insider threats-an overview of
  definitions and mitigation techniques.'' \emph{J. Wirel. Mob. Networks
  Ubiquitous Comput. Dependable Appl.}, vol.~2, no.~1, pp. 4--27, 2011.

\bibitem{greitzer2010combining}
F.~Greitzer and D.~Frincke, ``Combining traditional cyber security audit data
  with psychosocial data: towards predictive modeling for insider threat
  mitigation,'' in \emph{Insider threats in cyber security}.\hskip 1em plus
  0.5em minus 0.4em\relax Springer, 2010.

\bibitem{saxena2020impact}
N.~Saxena, E.~Hayes, E.~Bertino, P.~Ojo, K.-K.~R. Choo, and P.~Burnap, ``Impact
  and key challenges of insider threats on organizations and critical
  businesses,'' \emph{Electronics}, vol.~9, no.~9, p. 1460, 2020.

\bibitem{7265043}
Y.~Zhang, H.~Zhang, S.~Tang, and S.~Zhong, ``Designing secure and dependable
  mobile sensing mechanisms with revenue guarantees,'' \emph{IEEE Trans. on
  Inf. Forensics and Secur.}, vol.~11, no.~1, pp. 100--113, 2016.

\bibitem{zhu2012guidex}
Q.~Zhu, C.~Fung, R.~Boutaba, and T.~Basar, ``Guidex: A game-theoretic
  incentive-based mechanism for intrusion detection networks,'' \emph{IEEE J.
  Sel. Areas Commun.}, vol.~30, no.~11, pp. 2220--2230, 2012.

\bibitem{myerson1989mechanism}
R.~B. Myerson, \emph{Mechanism design}.\hskip 1em plus 0.5em minus 0.4em\relax
  Springer, 1989.

\bibitem{HORAK2019101579}
K.~Horák, B.~Bošanský, P.~Tomášek, C.~Kiekintveld, and C.~Kamhoua,
  ``Optimizing honeypot strategies against dynamic lateral movement using
  partially observable stochastic games,'' \emph{Comput Secur}, vol.~87, 2019.

\bibitem{huang2020dynamic}
L.~Huang and Q.~Zhu, ``A dynamic games approach to proactive defense strategies
  against advanced persistent threats in cyber-physical systems,''
  \emph{Computers \& Security}, vol.~89, p. 101660, 2020.

\bibitem{huang2021dynamic}
------, ``A dynamic game framework for rational and persistent robot deception
  with an application to deceptive pursuit-evasion,'' \emph{IEEE Transactions
  on Automation Science and Engineering}, 2021.

\bibitem{xu2021biobjective}
J.~Xu, Y.~Zhou, Y.~Ding, D.~Yang, and L.~Xu, ``Biobjective robust incentive
  mechanism design for mobile crowdsensing,'' \emph{IEEE Internet of Things
  Journal}, vol.~8, no.~19, pp. 14\,971--14\,984, 2021.

\bibitem{zou2015incentive}
P.~Zou, Q.~Chen, Q.~Xia, C.~He, and C.~Kang, ``Incentive compatible pool-based
  electricity market design and implementation: A bayesian mechanism design
  approach,'' \emph{Applied Energy}, vol. 158, 2015.

\bibitem{zhan2020incentive}
Y.~Zhan, Y.~Xia, J.~Zhang, T.~Li, and Y.~Wang, ``An incentive mechanism design
  for mobile crowdsensing with demand uncertainties,'' \emph{Information
  Sciences}, vol. 528, pp. 1--16, 2020.

\bibitem{dukes2015committee}
C.~Dukes, ``Committee on national security systems (cnss) glossary,''
  \emph{CNSSI, Fort 1322 Meade, MD, USA, Tech. Rep}, vol. 1323, 2015.

\bibitem{al2020gosafe}
J.~N. Al-Karaki, A.~Gawanmeh, and S.~El-Yassami, ``Gosafe: on the practical
  characterization of the overall security posture of an organization
  information system using smart auditing and ranking,'' \emph{Journal of King
  Saud University-Computer and Information Sciences}, 2020.

\bibitem{bahuguna2020country}
A.~Bahuguna, R.~K. Bisht, and J.~Pande, ``Country-level cybersecurity posture
  assessment: study and analysis of practices,'' \emph{Information Security
  Journal: A Global Perspective}, vol.~29, no.~5, pp. 250--266, 2020.

\bibitem{zhan2020nsaps}
M.~Zhan, Y.~Li, X.~Yang, W.~Cui, and Y.~Fan, ``Nsaps: A novel scheme for
  network security state assessment and attack prediction,'' \emph{Computers \&
  Security}, vol.~99, p. 102031, 2020.

\bibitem{rose2020zero}
S.~Rose, O.~Borchert, A.~Mitchell, and S.~Connelly, ``Zero trust architecture
  nist special publication 888-207,'' \emph{NIST}, 2020.

\bibitem{sarkar2010assessing}
K.~R. Sarkar, ``Assessing insider threats to information security using
  technical, behavioural and organisational measures,'' \emph{information
  security technical report}, vol.~15, no.~3, pp. 112--133, 2010.

\bibitem{spooner2018navigating}
D.~Spooner, G.~Silowash, D.~Costa, and M.~Albrethsen, ``Navigating the insider
  threat tool landscape: low cost technical solutions to jump start an insider
  threat program,'' in \emph{2018 IEEE Security and Privacy Workshops
  (SPW)}.\hskip 1em plus 0.5em minus 0.4em\relax IEEE, 2018, pp. 247--257.

\bibitem{boyd2004convex}
S.~Boyd, S.~P. Boyd, and L.~Vandenberghe, \emph{Convex optimization}.\hskip 1em
  plus 0.5em minus 0.4em\relax Cambridge university press, 2004.

\bibitem{avis1997good}
D.~Avis, D.~Bremner, and R.~Seidel, ``How good are convex hull algorithms?''
  \emph{Comput Geom}, vol.~7, no. 5-6, pp. 265--301, 1997.

\bibitem{CPT}
D.~Kahneman and A.~Tversky, ``Prospect theory: An analysis of decision under
  risk,'' \emph{Econometrica}, vol.~47, no.~2, pp. 263--291, 1979.

\end{thebibliography}

\begin{IEEEbiography}[{\includegraphics[width=1in,height=1.25in,clip,keepaspectratio]{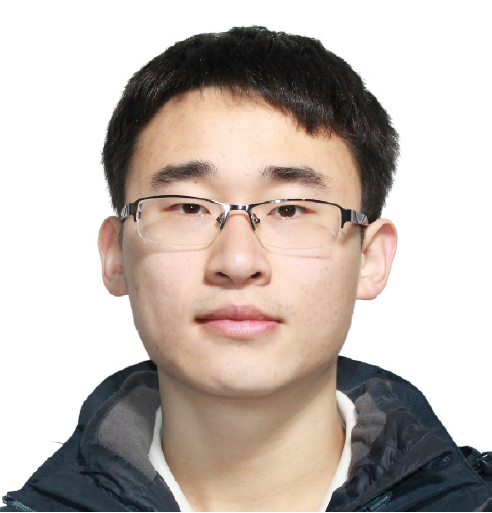}}]{Linan Huang} 
received the B.Eng. degree (Hons.) in Electrical Engineering from Beijing Institute of Technology, China, in 2016 and the Ph.D. degree in electrical engineering from New York University (NYU), Brooklyn, NY, USA, in 2022. 
He is currently an assistant researcher at Tsinghua University. 
His research interests include dynamic decision-making in the multi-agent system, mechanism design, artificial intelligence, cybersecurity, and satellite networks. 
\end{IEEEbiography}

\begin{IEEEbiography}[{\includegraphics[width=1in,height=1.25in,clip,keepaspectratio]{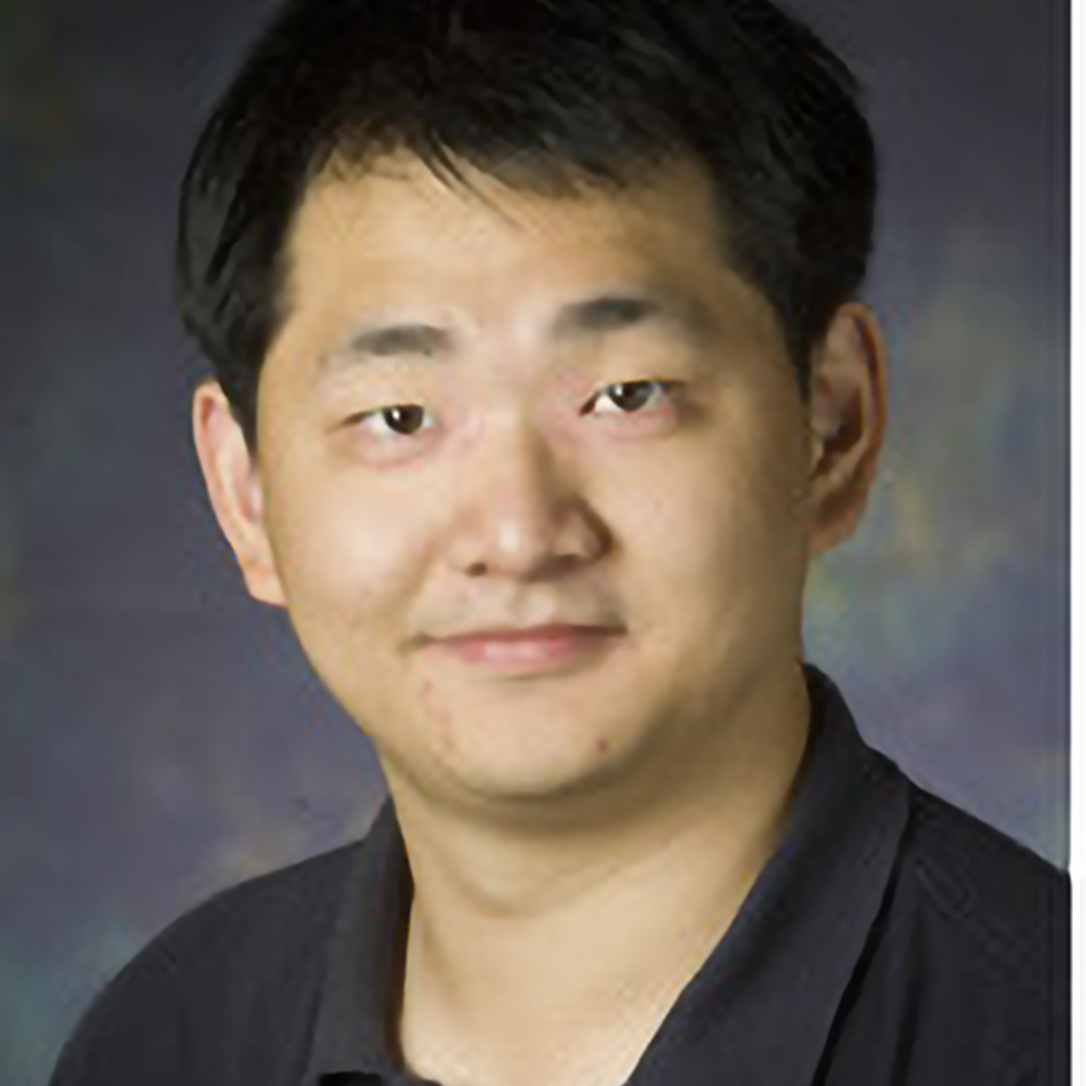}}]{Quanyan Zhu}
(SM’02-M’14) 
received B. Eng. in Honors Electrical Engineering from McGill University in 2006, M. A. Sc. from the University of Toronto in 2008, and Ph.D. from the University of Illinois at Urbana-Champaign (UIUC) in 2013. 
After stints at Princeton University, he is currently an associate professor at the Department of Electrical and Computer Engineering, New York University (NYU). He is an affiliated faculty member of the Center for Urban Science and Progress (CUSP) and Center for Cyber Security (CCS) at NYU. His current research interests include game theory, machine learning, cyber deception, and cyber-physical systems.
\end{IEEEbiography}

%




\end{document}